\newtheorem{lemmadef}[thm]{Lemma and Definition}{\bfseries}{\upshape}
\newcounter{blubber}
\newenvironment{sparenumerate}
{\begin{list}
  {\arabic{blubber}.}
  {\usecounter{blubber}
		\setlength{\topsep}{0pt}
		\setlength{\partopsep}{0pt}
   \setlength{\leftmargin}{0pt}
    \setlength{\parsep}{0pt}
    \setlength{\itemindent}{4ex}
    \setlength{\itemsep}{2pt}
  }
}
{\end{list}}
\newenvironment{sparitemize}
{\begin{list}{$\bullet$}{
		\setlength{\topsep}{0pt}
		\setlength{\partopsep}{0pt}
    \setlength{\leftmargin}{0pt}
    \setlength{\parsep}{0pt}
    \setlength{\itemindent}{4ex}
    \setlength{\itemsep}{0pt}
  }
}{\end{list}}
\newcommand{\Succ}{\mathsf{Suc}}
\newcommand{\Chld}{\mathsf{Chld}}
\newcommand{\Bp}{\mathcal{B}}
\newcommand{\subf}{\mathsf{subf}}
\newcommand{\eat}[1]{}
\newcommand{\Ord}{\mathcal{O}}
\renewcommand{\mod}{\mathsf{m}}
\newcommand{\fix}{\mathsf{f}}
\newcommand{\Tr}{\mathsf{Tr}}
\newcommand{\Z}{\mathbb{Z}}
\newcommand{\cf}{\mathbbm{1}}
\newcommand{\pto}{\rightharpoonup}
\newcommand{\Mon}{\mathcal{M}}
\newcommand{\Rat}{\mathbb{Q}}
\newcommand{\Nat}{\mathbb{N}}
\newcommand{\supp}{\mathsf{supp}}
\newcommand{\inv}{^{-1}}
\newcommand{\size}{\mathsf{size}}
\newcommand{\LFP}{\mathsf{LFP}}
\newcommand{\GFP}{\mathsf{GFP}}
\renewcommand{\Game}{\mathsf{G}}
\newcommand{\Dist}{\mathsf{D}}
\newcommand{\Bag}{\mathsf{B}}
\newcommand{\Rules}{\mathsf{R}}
\newcommand{\FoRm}{\mathcal{F}}
\newcommand{\Pow}{\mathcal{P}}
\newcommand{\Seq}{\mathsf{S}}
\newcommand{\B}[1]{[\lambda]}
\newcommand{\game}{\mathcal{G}}
\newcommand{\atomic}{\mathsf{At}}
\newcommand{\F}{T}
\newcommand{\lsem}{[\![}
\newcommand{\rsem}{]\!]}
\newcommand{\Var}{\mathsf{V}}
\newcommand{\hearts}{\heartsuit}
\renewcommand{\bar}{\overline}
\newcommand{\Set}{\mathsf{Set}}
\newcommand{\op}{^\mathrm{op}}
\newcommand{\Tab}{\mathsf{T}}
\newcommand{\psem}[1]{\lsem#1\rsem}
\newcommand{\bbY}{\mathbb{Y}}
\newcommand{\length}{\mathrm{length}}
\newcommand{\nil}{\epsilon}
\newcommand{\Rel}{\mathrm{Rel}}
\newcommand{\bbA}{\mathbb{A}}
\newcommand{\Cl}{\mathrm{Cl}}
\newcommand{\model}{\mathit{M}}
\newcommand{\MC}{\mathcal{MG}}
\newcommand{\Diag}{\mathsf{Diag}}
\newcommand{\pstrat}{g}
\def\doi{7 (3:03) 2011}
\begin{document}

\title[\textsc{Exptime} Tableaux for the Coalgebraic
$\mu$-calculus]{\textsc{Exptime} Tableaux for the Coalgebraic
$\mu$-Calculus\rsuper*}

\author[C.~C\^irstea]{Corina C\^irstea\rsuper a}	
\address{{\lsuper a}School of Electronics and Computer Science, University of
Southampton}	
\email{cc2@ecs.soton.ac.uk}  

\author[C.~Kupke]{Clemens Kupke\rsuper b}	
\address{{\lsuper{b,c}}Department of Computing, Imperial College London}	
\email{ckupke@doc.ic.ac.uk, dirk@doc.ic.ac.uk}  
\thanks{{\lsuper b}Partially supported by grant EP/F031173/1 from the UK EPSRC}	

\author[D.~Pattinson]{Dirk Pattinson\rsuper c}	
\address{\vskip-6 pt}	

\keywords{coalgebra, modal logic, $\mu$-calculus, tableau-based decision procedures}
\subjclass{F.4.1, F.3.1, F.1.1}
\titlecomment{{\lsuper*} This paper is a revised and extended version of the paper ``EXPTIME Tableaux for Coalgebraic $\mu$-calculi'' that has been published in the proceedings of CSL 2009 ~\cite{cikupa:expt09}.}

\begin{abstract}
  The coalgebraic approach to modal logic provides a uniform framework
  that captures the semantics of a large class of structurally
  different modal logics, including e.g. graded and probabilistic
  modal logics and coalition logic.  In this paper, we introduce the
  coalgebraic $\mu$-calculus, an extension of the general
  (coalgebraic) framework with fixpoint operators. Our main results
  are completeness of the associated tableau calculus and
	\textsc{Exptime}
  decidability  for guarded formulas.  Technically, this is achieved by reducing
  satisfiability to the existence of non-wellfounded tableaux, which
  is in turn equivalent to the existence of winning strategies in
  parity games.
  Our results are parametric in the underlying class of models and
  yield, as concrete applications, previously unknown complexity
  bounds for the probabilistic $\mu$-calculus and for an extension of
  coalition logic with fixpoints.
\end{abstract}

\maketitle


\section{Introduction}

The extension of a modal logic with operators for least and greatest
fixpoints leads to a dramatic increase in expressive power
\cite{Bradfield:1996:EMM}. The
paradigmatic example is of course the modal $\mu$-calculus
\cite{Kozen:1983:RPC}. In the same way that the $\mu$-calculus
extends the modal logic $K$, one can freely add fixpoint operators to any
propositional modal logic, as long as modal operators are monotone.
Semantically, this poses no problems, and the interpretation of
fixpoint formulas can be  defined in a standard way in terms of the
semantics of the 
underlying modal logic.

This apparent simplicity is lost once we move from semantics to
syntax: completeness and complexity even of the modal $\mu$-calculus
are all but trivial 
\cite{Walukiewicz:2000:CKA,Emerson:1988:CTA}, and $\mu$-calculi arising
from other monotone modal logics are largely unstudied, with the
notable exception of the graded $\mu$-calculus
\cite{Kupferman:2002:CGM}. Here, we improve on this situation, not
by providing a new complexity result for a specific fixpoint logic,
but by providing a generic and uniform treatment of modal fixpoint logics on the
basis of \emph{coalgebraic semantics}.
This allows for a generic and
uniform treatment of a large class of modal logics and replaces
the investigation of a concretely given
logic with the study of \emph{coherence
conditions} that  mediate between the axiomatisation and the
(coalgebraic) semantics. The
use of coalgebras conveniently abstracts the details of a concretely
given class of models, which is replaced by the class of coalgebras
for an (unspecified) endofunctor on sets. Specific choices for this
endofunctor then yield specific model classes, such as the class of
all Kripke frames or probabilistic transition systems.
A property such as completeness or complexity of a specific logic is
then automatic once the coherence conditions are satisfied. As it
turns out, even \emph{the same} coherence conditions that guarantee
completeness and decidability of the underlying modal logic entail
the same properties of the ensuing $\mu$-calculus. This immediately
provides us with a number of concrete examples: as instances of the
generic framework, we obtain not only the known \textsc{Exptime} bounds, both for the
modal and the graded $\mu$-calculus
\cite{Emerson:1988:CTA,Kupferman:2002:CGM}, but also previously
unknown \textsc{Exptime} bounds for the probabilistic and monotone
$\mu$-calculus, and for an extension of coalition logic
\cite{Pauly:2002:MLC} with fixpoint operators.

Our main technical results are
a syntactical characterisation of satisfiability in terms of
(non-) ex\-istence of closed tableaux and a game-theoretic
characterisation of satisfiability that yields an \textsc{Exptime} upper bound for
the satisfiability problem for guarded formulas. Along the way, we establish a small model theorem. Here, as usual, a formula
is called guarded if every fixpoint variable occurs
only within the scope of a modal operator.  If we assume that every formula can be transformed into an equivalent guarded formula in polynomial time, our EXPTIME decidability result extends to the full coalgebraic $\mu$-calculus. This assumption is generally made in the literature on the modal $\mu$-calculus  \cite{Kupferman:2000:ATA}, but a recent paper \cite{frla:them11} argues that in fact no algorithm is known 
that can perform the transformation in polynomial time. Therefore we formulate our EXPTIME-decidability result more restrictive than in~\cite{cikupa:expt09}. We nevertheless conjecture that our tableau calculus can be used for proving EXPTIME-decidability for the full coalgebraic $\mu$-calculus.

We start by describing a parity game that characterizes model checking for the coalgebraic $\mu$-calculus. As in the model-checking game for the modal $\mu$-calculus (see e.g.~\cite{Stirling:2001:MTP}), we allow greatest and least fixpoints to be unfolded ad libitum. Truth of a formula in a particular state of a model then follows, if only greatest fixpoints
are unfolded infinitely often on the top level along infinite paths,
which is captured by a parity condition.
The same technique is employed in the
construction of tableaux, which we conceptualise as finite directed
graphs: closed tableaux witness unsatisfiability of the root
formula, provided that along any infinite tableau path one can construct
an infinite sequence of formulas (a trace that tracks the evolution
of formulas in a tableau) that violates the parity condition.
In particular, 
closed tableaux are finitely represented proofs of the
unsatisfiability of the root formula.
Soundness of the tableau
calculus is established by showing that a winning strategy  in
the model checking game precludes existence of a closed tableau.
Decidability is
then established with the help of tableau games, where the adversary
chooses a tableau rule, and the player claiming satisfiability
chooses one conclusion which effectively constructs a path in a
tableau. In order to turn this tableau game into a parity game we combine
the game board with the transition function of a deterministic parity word automaton.
This automaton checks that on any given play, i.e., on any tableau path,
there exists no trace that violates the parity condition. 
We prove adequacy
of the tableau game by constructing a satisfying model from a
winning strategy in the tableau game, which makes crucial use of the
coherence conditions between the axiomatisation and the coalgebraic
semantics.
This allows us to determine satisfiability of a fixpoint formula by
deciding the associated (parity) tableau game, and the announced
\textsc{Exptime} upper bound for guarded formulas 
follows once we can ensure that legality of
moves in the tableau game can be decided in exponential time.

\textbf{Related Work.}
Our treatment is
inspired by
\cite{Niwinski:1996:GMC,Venema:2006:AFP,Schroder:2008:PBR}, but
we note some important differences. In contrast to
\cite{Niwinski:1996:GMC}, we use parity games that directly
correspond to tableaux, together with  parity automata to detect bad
traces. Moreover, owing to the generality of the coalgebraic
framework, the model construction here needs to super-impose a coalgebra
structure on the relation induced by a winning strategy.
This construction is necessarily different from 
\cite{Schroder:2008:PBR}, since we cannot argue
by induction on modal rank in the presence of fixpoints. 
Coalgebraic fixpoint logics are also treated  in
\cite{Venema:2006:AFP}, where an automata theoretic characterisation
of satisfiability is presented. We add to this picture by
providing complexity results and a complete tableau calculus.
Moreover, we use standard syntax for modal operators, which allows us to
subsume for instance the graded $\mu$-calculus that cannot be
expressed in terms of the $\nabla$-operator used in \emph{op.cit.}

\section{The Coalgebraic $\mu$-Calculus}

\noindent
To keep our treatment fully parametric in the underlying (modal)
logic,  we define the syntax of the coalgebraic $\mu$-calculus
relative to a (fixed) modal 
similarity type, that is, a set $\Lambda$
of modal operators with associated arities. Throughout, we fix a denumerable set
$\Var$ of propositional variables.  We will only deal with formulas
in 
negation normal form and abbreviate
$\bar\Lambda = \lbrace \bar \hearts \mid \hearts \in \Lambda \rbrace$ and
$\bar\Var = \lbrace \bar p \mid p \in \Var \rbrace$. The
arity of $\bar \hearts \in \bar \Lambda$ is the same as that of
$\hearts$.
The set $\FoRm(\Lambda)$ of $\Lambda$-formulas is given by the
grammar 
\[
  A, B ::= p  \mid A \lor B \mid A \land B \mid
	\hearts(A_1, \dots, A_n) \mid \mu p. A \mid \nu p. A \]
where $p \in \Var \cup \bar \Var$, $\hearts \in \Lambda \cup \bar
\Lambda$ is $n$-ary  and $\bar p$ does not occur in $A$ in
the last two clauses. The sets of free and bound variables of a
formula are defined as usual, in particular $p$ is bound
in $\mu p. A$ and $\nu p. A$. Negation $\bar{\,\cdot\,}:
\FoRm(\Lambda) \to \FoRm(\Lambda)$ is given inductively by
$\bar{\bar p} = p$, $\bar{A \land B} = \bar A
\lor \Bar B$, $\bar {\hearts(A_1, \dots, A_n)} = \bar
\hearts(\bar{A_1}, \dots, \bar{A_n})$ and $\bar{\mu p. A} = \nu p. \bar A
[ \bar p := p]$ and the dual clauses for $\lor$ and $\nu$.
If $S$ is a set of formulas, then the
collection of formulas
that arises by prefixing elements of $S$ by one layer
of modalities is denoted by
$(\Lambda \cup \bar \Lambda)(S) = \lbrace \hearts(A_1,\dots, A_n)
\mid \hearts \in \Lambda \cup \bar\Lambda \;  \mbox{$n$-ary,}~ A_1, \dots, A_n \in S \rbrace$. 
A \emph{substitution} is a mapping $\sigma: V
\to \FoRm(\Lambda)$ and $A \sigma$ is the result of replacing all
free occurrences of $p \in V$ in $A$ by  $\sigma(p)$.

On the semantical side, parametricity is achieved by adopting
coalgebraic semantics: formulas are interpreted over $T$-coalgebras,
where $T$ is an (unspecified) endofunctor on sets, and we recover
the semantics of a large number of logics in the form of specific
choices for $T$. To interpret the modal operators $\hearts \in
\Lambda$, we require that $T$ extends to a 
\emph{$\Lambda$-structure} and comes with a
predicate lifting, that is, a natural
transformation of type
$\lsem \hearts \rsem: 2^n \to 2 \circ T\op$
for every $n$-ary modality $\hearts \in \Lambda$, where $2: \Set \to
\Set\op$ is the contravariant powerset functor. In elementary terms,
this amounts to assigning a
set-indexed family of functions $(\lsem \hearts \rsem_X: \Pow(X)^n \to
\Pow(TX))_{X \in \Set}$  to every $n$-ary modal operator $\hearts
\in \Lambda$
such that
$(Tf)\inv \circ \lsem \hearts \rsem_X(A_1, \dots, A_n) = \lsem
\hearts \rsem_Y (f\inv(A_1), \dots, f\inv(A_n))$ for all functions $f: Y \to
X$. If $\hearts \in \Lambda$ is $n$-ary, we put
$\lsem
\bar\hearts\rsem_X(A_1, \dots, A_n) = (TX) \setminus \lsem \hearts
\rsem_X (X \setminus A_1, \dots,  X \setminus A_n)$.
We usually denote a
structure by the endofunctor $T$ and leave the definition of
the predicate liftings implicit.  A $\Lambda$-structure is
\emph{monotone} if, for all sets $X$ we have that
$\lsem \hearts \rsem_X (A_1, \dots, A_n) \subseteq \lsem
\hearts \rsem_X(B_1, \dots, B_n)$ whenever $A_i \subseteq B_i$ for all $i
= 1, \dots, n$.

In the coalgebraic approach, the role of frames is played by
\emph{$T$-coalgebras}, 
i.e.~pairs $(X, \gamma)$ where $X$ is a set (of states) 
and $\gamma: X \to TX$ is a (transition) function. A
\emph{$T$-model} is a triple $(X, \gamma, h)$ where $(X,
\gamma)$ is a $T$-coalgebra and $h: \Var \to \Pow(X)$ is a
valuation 
of the propositional variables that we implicitly extend to $\Var
\cup \bar \Var$ by putting
$h(\bar p) = X \setminus h(p)$.
For a monotone $\Lambda$-structure $T$ and a $T$-model
$M = (X, \gamma, h)$, the \emph{truth set} $\lsem A \rsem_M$ of a formula $A \in
\FoRm(\Lambda)$ w.r.t. $M$ is given inductively by
\begin{gather*}
\lsem p \rsem_M  = h(p)  \quad
\lsem \mu p. A \rsem_M  = \LFP (A^M_p)  \quad
\lsem \nu p. A \rsem_M  = \GFP (A^M_p)   \\[1ex]
\lsem \hearts(A_1, \dots, A_n) \rsem_M = \gamma\inv \circ \lsem
\hearts \rsem_X (\lsem A_1 \rsem_M, \dots, \lsem A_n \rsem_M) 
\end{gather*}
where $\LFP (A^M_p)$ and $\GFP (A^M_p)$ are the least and greatest fixpoint of the 
monotone mapping $A^M_p : \Pow(X) \to \Pow(X)$ defined by $A^M_p(U) =
\lsem A \rsem_{M'}$ where $M' = (X, \gamma, h')$ and $h'(q) = h(q)$
for $q \neq p$ and $h'(p) = U$. We write
$M, x \models A$ if $x \in \lsem A \rsem_M$ to denote
that $A$ is satisfied at $x$.
A formula $A \in
\FoRm(\Lambda)$ is \emph{satisfiable} w.r.t.~a given
$\Lambda$-structure $T$ if there exists a $T$-model $M$ such
that $\lsem A \rsem_M \neq \emptyset$. 
The mappings $A^M_p$ are indeed monotone in
case of a monotone $\Lambda$-structure, which guarantees the
existence of fixpoints.

\begin{exa}
\label{example:logics}
\begin{sparenumerate}
\item $T$-coalgebras $(X, \gamma: X \to \Pow(X))$ for $TX = \Pow(X)$
are Kripke frames. If 
$\Lambda = \lbrace \Box \rbrace$ for $\Box$ unary and $\bar \Box =
\Diamond$,
$\FoRm(\Lambda)$ are the formulas of the modal $\mu$-calculus
\cite{Kozen:1983:RPC}, and the structure
$\lsem \Box \rsem_X(U) = \lbrace V \in \Pow(X) \mid
V \subseteq U \rbrace$ gives its semantics.
\item  The syntax of the graded $\mu$-calculus
\cite{Kupferman:2002:CGM} is given (modulo an index shift)
by the similarity type
$\Lambda = \lbrace \langle n \rangle \mid n \geq 0 \rbrace$ where
$\bar{\langle n \rangle} = [n]$, and $\langle n
\rangle A$ reads as ``$A$ holds in more than $n$ successors''.
In contrast to \emph{op.~cit.} we
interpret the graded $\mu$-calculus over multigraphs,
i.e.~coalgebras for the functor $\Bag$ 
\[ \Bag (X) = \lbrace f: X \to \Nat \mid \supp(f) \mbox{ finite}
\rbrace
\]
where $\supp(f) = \lbrace x \in X \mid f(x) \neq 0 \rbrace$ is the
support of $f$,
that extends to a structure 
\[ \lsem \langle n \rangle \rsem_X(U) = \lbrace f \in \Bag(X) \mid
\sum_{x \in U} f(x) > n \rbrace \qquad \mbox{for} \; U \subseteq X. \]
Note that this semantics differs from the Kripke semantics for both
graded modal logic \cite{Fine:1972:MPW} and the graded $\mu$-calculus.
The change of the semantics is needed in order to fit graded modal logic
into the coalgebraic framework, because in the standard semantics of 
graded modal logic we cannot interpret the modalities by natural transformations.
Both types of semantics, however, induce the same satisfiability problem:
image-finite Kripke frames are multigraphs where each edge has multiplicity one,
and the unravelling of a multigraph can be turned into a Kripke
frame by inserting the appropriate number of copies of each state.
The transformations preserve satisfiability. 
The fact that the two types of semantics induce the same satisfiability problem makes use of the fact 
that the graded $\mu$-calculus has the tree-model property (\cite{Kupferman:2002:CGM}): 
a formula of the graded $\mu$-calculus is satisfiable on some Kripke frame iff it is satisfiable on a tree of finitely bounded branching
degree. Alternatively, the fact that the two satisfiability problems are equivalent can be also obtained from the results in this paper by showing that the tableau calculus for the graded $\mu$-calculus is \emph{sound} over the class of all Kripke frames.
\item The probabilistic $\mu$-calculus arises from
the similarity 
type $\Lambda = \lbrace \langle p \rangle \mid p \in [0, 1] \cap \Rat \rbrace$
where $\bar{\langle p \rangle} = [p]$ and $\langle p \rangle \phi$ reads as ``$\phi$ holds with
probability at least $p$ in the next state''. 
The semantics of the
probabilistic $\mu$-calculus is given by the structure
\[ \Dist (X) = \lbrace \mu: X \to_{\! f} [0, 1] \mid 
         \sum_{x \in X} \mu (x) = 1 \rbrace \quad
   \lsem \langle p \rangle  \rsem_X (U) = \lbrace \mu  \in  \Dist(X) \mid \sum_{x
\in U} \mu(x) \geq p \rbrace \]
where $U \subseteq X$ and $\to_{\! f}$ indicates maps with finite support.
Coalgebras for $\Dist$ are precisely image-finite Markov chains, and
the finite model property of the coalgebraic $\mu$-calculus that we
establish later
ensures that satisfiability is independent of 
image-finite semantics.
\item Formulas of coalition logic over a finite set $N$ of agents
\cite{Pauly:2002:MLC} arise via $\Lambda = \lbrace
[C ] \mid C \subseteq N \rbrace$, and are interpreted over game frames, i.e.~coalgebras for the functor 
\[ \Game (X) = \lbrace (f, (S_i)_{i \in N}) \mid \prod_{i \in N} S_i
\neq \emptyset, f: \prod_{i \in N} S_i \to X \rbrace
\]
which is a class-valued functor, which
however fits with the subsequent development. We think of $S_i$ as
the
set of strategies for agent $i$ and $f$ is an outcome function. 
The formula $[C] A$ reads as ``coalition $C$ can achieve $A$'', which is captured
by the lifting
\[
  \lsem [C] \rsem_X(U) = \lbrace (f, (S_i)_{i \in N}) \in \Game (X)
	\mid
	\exists (s_i)_{i \in C} \forall (s_i)_{i \in N \setminus C} 
	f((s_i)_{i \in N}) \in U \rbrace
\]
for $U \subseteq X$. The induced coalgebraic semantics is precisely 
the standard semantics of coalition logic, ie., the formula $[C] A$ holds at a state $x$
if all agents $i$ in the coalition $C$ can choose a strategy $s_i$ at $x$ such that,
for all possible strategy choices of agents in $N \setminus C$ at position $x$, 
the play proceeds to a state $x'$ that satisfies property $A$.
\item Finally, the similarity type $\Lambda = \lbrace \Box \rbrace$ 
of monotone modal logic
\cite{Chellas:1980:ML} has a single unary $\Box$ (we write $\bar
\Box = \Diamond$) and interpret the ensuing language  
over monotone neighbourhood frames, that is,
coalgebras for the functor  / structure
\[ \Mon (X) = \lbrace Y \subseteq \Pow(\Pow(X)) \mid Y \mbox{ upwards
closed} \rbrace \quad
   \lsem \Box \rsem_X(U) = \lbrace Y \in \Mon(X) \mid U \in Y
\rbrace \]
for $U \subseteq X$ which recovers the standard semantics in a coalgebraic setting
\cite{Hansen:2004:CPM}.
\end{sparenumerate}
\noindent
It is readily verified that all structures above
are indeed monotone. 
\end{exa}

\section{The Model-Checking Game}\label{sec:modcheck}

We start by characterising the satisfaction relation between states
of a model and formulas of the coalgebraic $\mu$-calculus in terms
of a two-player parity game that we call the \emph{model checking
game}. This characterisation will be the main
technical tool for establishing soundness and completeness of an
associated tableau calculus.

The game that we are about to describe generalises
\cite[Theorem~1, Chapter~6]{Stirling:2001:MTP} to the coalgebraic
setting, and is a variant of the game used in 
~\cite{Cirstea:2008:CMCS}. We begin by fixing our terminology
concerning parity games.

A \emph{parity game} played by $\exists$
(\'Eloise) and $\forall$ (Ab\'elard) is a tuple $\game = (B_\exists, B_\forall, E,
\Omega)$ where 
$B = B_\exists \cup B_\forall$ is the disjoint union of
\emph{positions} owned by $\exists$ and $\forall$, respectively, 
$E \subseteq B \times B$ 
indicates the allowed moves, and
$\Omega: B \to \omega$ is a (parity) map with finite range.
An 
infinite sequence $(b_0, b_1, b_2, \dots)$ of
positions 
is called \emph{bad} if $\max \lbrace k \mid k = \Omega(b_i) 
\mbox{ for infinitely many $i \in \omega$} \rbrace$ is odd. 

A \emph{play} in $\game$
is a finite or infinite sequence of positions $(b_0, b_1,
\dots)$ with the property that $(b_{i}, b_{i+1}) \in E$ for all $i$, i.e.~all
moves are legal, and $b_0$ is the \emph{initial
position} of the play. 
A \emph{full play} is either 
infinite, or a finite play ending in a position $b_n$ where 
$E[b_n] = \lbrace b \in B \mid (b_n, b) \in E \rbrace  = \emptyset$, i.e.~no 
more moves are possible. A finite play is
lost by the player who cannot move, and an infinite play $(b_0,
b_1, \dots)$ is lost by $\exists$ (and won by $\forall$) 
iff $(b_0, b_1, \dots)$ is bad.

A {\em strategy} in $\game$ for a player $P \in \{ \exists ,
\forall\}$ is a partial function 
that maps all plays that end in a position $b \in B_P$ of $P$ 
with $E[b] \neq \emptyset$ to
a position $b' \in B$ such that $(b, b') \in E$.
Intuitively, a strategy determines a player's next move, depending
on the history of the game in all positions where the player can
move. Given a strategy $s$ for player $P$ in $\game$ we say that
a play $(b_0,\dots, b_i, \dots)$ of $\game$  is \emph{played according to} 
$s$ if for all proper prefixes $b_0 \ldots b_i$ of $\pi$ with
$b_i \in B_P$ we have $s(b_0 \ldots b_i) = b_{i+1}$.
A strategy for a player $P \in \{\exists,\forall\}$ is called
{\em history-free} or \emph{positional} if it only depends on the last position
of a play.
Formally, a \emph{history-free strategy} for player $P \in \lbrace \exists,
\forall \rbrace$ is a partial function $s: B_P \pto B$
such that $s(b)$ is defined iff $E[b] \neq \emptyset$, in which case $(b, s(b))
\in E$. A play $(b_0, b_1, \dots)$ is
\emph{played according to $s$} if $b_{i+1} = s(b_i)$
for all $i$ with $b_i \in B_P$, and $s$ is a \emph{winning strategy}
from position $b \in B$ if $P$ wins all plays with initial position
$b$ that are played according to $s$.

It is known that parity games 
are history-free determined \cite{emer:tree91,most:game91} 
and that winning
regions can be decided in 
 $\mbox{UP} \cap \mbox{co-UP}$ \cite{jurd:small00}.
 
\begin{thm}{\cite{jurd:small00}}\label{fact:paritygames}
 At every position $b \in B_\exists \cup B_\forall$ in a parity game 
 $\game = (B_\exists,B_\forall,E,\Omega)$
 one of the players has a history-free winning strategy. Furthermore,
 for every $b \in B_\exists \cup B_\forall$, 
 it can be determined in
	time $O\left(d \cdot m \cdot \left( \frac{n}{\lfloor d/2 \rfloor}
	\right)^{\lfloor d/2 \rfloor}\right)$  which player
	has a winning strategy from position $b$, where
$n$, $m$ and $d$ are the size of $B$, $E$ and the range of
$\Omega$, respectively.
\end{thm}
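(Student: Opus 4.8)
The plan is to establish the two assertions of the theorem separately: the history-free (positional) determinacy, and then the running-time bound, which I obtain by analysing Jurdzi\'nski's \emph{small progress measures} algorithm; the latter analysis simultaneously yields an effective construction of a positional winning strategy.

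\emph{Positional determinacy.} I would proceed by a double induction, the outer on the number $d$ of distinct priorities in the range of $\Omega$ and the inner on $|B|$. If $d \le 1$ the parity condition is degenerate --- every infinite play is won by one fixed player, determined by the parity of the unique priority --- so $\game$ reduces to a reachability/safety game and positional determinacy follows by the usual attractor construction applied to the dead-end positions. For $d \ge 2$, let $p$ be the maximal priority; after possibly swapping the two players and decreasing all priorities by one (which preserves positional determinacy), we may assume $p$ is even, hence favourable to $\exists$. Put $U = \Omega^{-1}(p)$, let $A = \mathrm{Attr}_\exists(U)$ be the set of positions from which $\exists$ can force the play to visit $U$, and consider the subgame $\game'$ induced on $B \setminus A$, which contains no priority-$p$ position and hence has at most $d-1$ priorities. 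If $\exists$ wins all of $\game'$, then combining her subgame strategy with the attractor strategy on $A$ shows she wins all of $\game$ positionally (any play either eventually stays in $\game'$, where she wins, or revisits $U$ infinitely often, so that $p$ recurs and the play is good for her). Otherwise let $W$ be $\forall$'s nonempty positional winning region in $\game'$, let $B' = \mathrm{Attr}_\forall(W)$ in $\game$, and apply the inner induction hypothesis to the strictly smaller game obtained by deleting $B'$; a standard argument shows that $\forall$'s winning region in $\game$ is the union of $B'$ with his winning region in that smaller game, with the strategies combining as expected, and likewise for $\exists$.

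\emph{The algorithm and its complexity.} Normalising priorities to $\{0, \dots, d-1\}$ and writing $n_i = |\Omega^{-1}(i)|$, set $M = \prod_{i} I_i$ where $I_i = \{0, 1, \dots, n_i\}$ for odd $i$ and $I_i = \{0\}$ for even $i$, ordered lexicographically, and adjoin a top element $\top$. A \emph{parity progress measure} is a map $\rho : B \to M \cup \{\top\}$ such that for every $v \in B_\exists$ some successor $w$ satisfies $\rho(v) \succeq_{\Omega(v)} \rho(w)$, and for every $v \in B_\forall$ \emph{all} successors $w$ do, where $\succeq_k$ is the lexicographic comparison of the two tuples restricted to coordinates for priorities $\ge k$, made strict when $k$ is odd. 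The key correctness statement, which I would prove next, is that the pointwise least progress measure $\mu$ satisfies $\{\,v \mid \mu(v) \ne \top\,\} = W_\exists$, the winning region of $\exists$, and that the positional strategy picking, at $v \in B_\exists \cap W_\exists$, a successor realising $\mu(v) \succeq_{\Omega(v)} \mu(w)$ is winning. One direction (a finite-valued progress measure yields a winning $\exists$-strategy on its domain) is a short argument on the non-increasing sequence of $\mu$-values along a conforming play, which forces the maximal recurring priority to be even; the converse (the winning region carries a finite-valued progress measure) is the technical heart, and is where the positional determinacy of the first part is used, by reading off a measure from a positional winning strategy. The least progress measure is computed by initialising $\rho \equiv \vec{0}$ and repeatedly ``lifting'' any position whose local condition is violated to the least value that restores it; monotonicity on the finite lattice $(M \cup \{\top\})^B$ guarantees termination at the least simultaneous fixpoint. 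Since each position is lifted strictly increasingly, it is lifted at most $|M|$ times, and one lift at $v$ costs $O(d)$ per outgoing edge; summing, the total time is $O\bigl(d \cdot m \cdot |M|\bigr)$, and $|M| = \prod_{i \text{ odd}} (n_i + 1) \le \bigl(\tfrac{n}{\lfloor d/2\rfloor}\bigr)^{\lfloor d/2\rfloor}$ by the inequality of arithmetic and geometric means. Running the dual computation for $\forall$ only doubles the time, and deciding which player wins from a given position $b$ is then a table lookup.

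\emph{Main obstacle.} The routine parts are the attractor manipulations and the termination/complexity bookkeeping. The real work is the correctness of the progress-measure characterisation --- specifically the direction that $\exists$'s winning region admits a \emph{finite-valued} progress measure --- since it must be reconciled with positional determinacy without circularity; I would settle positional determinacy first via the Zielonka-style induction above and only then use a positional winning strategy to exhibit the required measure.
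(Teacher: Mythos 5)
The paper does not prove this statement at all: it is imported verbatim from the cited reference \cite{jurd:small00} (with positional determinacy going back to \cite{emer:tree91,most:game91}), so there is no in-paper proof to compare against. Your reconstruction is correct and follows exactly the approach of those sources --- Zielonka-style double induction for positional determinacy and Jurdzi\'nski's small progress measures with the lifting algorithm for the $O\bigl(d \cdot m \cdot (n/\lfloor d/2\rfloor)^{\lfloor d/2\rfloor}\bigr)$ bound --- the only nitpick being that AM--GM literally gives $\prod_{i\ \mathrm{odd}}(n_i+1) \le (n/\lfloor d/2\rfloor + 1)^{\lfloor d/2\rfloor}$, a slack that is absorbed into the $O$-notation just as in the original.
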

We will now introduce the model checking game as a parity game.
The model checking game is played on pairs $(A, x)$ where $A$ is a formula
and $x$ is a state, and (informally) $\forall$ tries to demonstrate
that $x \not\models A$ whereas $\exists$ claims the opposite. The
formulation of the game
relies on formulas being \emph{clean} (no variable occurs both free and bound,
 or is bound more than once)
and
\emph{guarded} (bound variables only occur within the scope of modal
operators). In the model checking game, we will only encounter a
finite set of formulas, those that lie in the \emph{closure} of the initial
formula. 
The size of the closure will play a 
crucial role in our main complexity result because it yields an upper bound for 
the size of our tableau game that characterizes {\em satisfiability} of a formula. The formal
definitions are as follows:
\begin{defi}
A set $\Gamma \subseteq \FoRm(\Lambda)$ of formulas is \emph{closed}
if $B \in \Gamma$ whenever $B$ is a subformula of some $A \in
\Gamma$ and 
$A[p := \eta p. A] \in \Gamma$ if $\eta p. A \in \Gamma$,
where $\eta \in \lbrace
\mu, \nu \rbrace$. The \emph{closure} of $\Gamma$
is the smallest closed set $\Cl(\Gamma)$ for which 
$\Gamma \subseteq \Cl(\Gamma)$.
 
A formula $A \in \FoRm(\Lambda)$ is \emph{guarded} if, for all
subformulas $\eta p. B$ of $A$, $p$ only appears in the scope of a
modal operator within $B$,
and $A$ is \emph{clean} if the sets of free and bound variables of a
formula are  disjoint and if no two distinct occurrences of fixpoint operators
in $A$ bind the same variable.
A finite set of formulas $\Gamma$ is guarded if every element of $\Gamma$
is guarded and $\Gamma$ is clean if the formula $\bigwedge_{A \in \Gamma}
A$ is clean.
\end{defi}
\noindent
In the model checking game, the unfolding of fixpoint formulas gives
rise to infinite plays, and we have to ensure that 
all infinite plays that cycle on an
outermost $\mu$-variable are lost by $\exists$ (who claims
that the formula(s) under consideration are satisfied), as this would
correspond to the infinite unfolding of a least fixpoint. This is
where the parity map comes in: formulas of the form $\mu p. A$ are
assigned odd priorities and, dually, $\nu A.p$ an even priority. To
make sure that $\exists$ only looses those plays that cycle on the
unfolding of an \emph{outermost} $\mu$-variable, we require that the
assignment of priorities is compatible with the 
subformula ordering.
\begin{defi} \label{defn:paritymap}
A \emph{parity map} for a finite, clean set of formulas $\Gamma$ is
a function $\Omega: \Cl(\Gamma) \to \omega$ with finite range for
which
$\Omega(A) = 0$ unless $A$ is of the form $\eta p. B$, $\eta
\in \lbrace \mu, \nu \rbrace$, 
$\Omega(A)$ is odd (even) iff $A$ is of the form $\mu p.
B$ ($\nu p. B$), and 
$\Omega(\eta_1 p_1 .B_1) \le \Omega(\eta_2 p_2 .B_2)$ 
whenever
$\eta_1 p_1.B_1$ is a subformula of $\eta_2 p_2.B_2$, where
$\eta_1, \eta_2 \in \lbrace \mu, \nu  \rbrace$.
\end{defi}
\noindent
It is easy to see that every clean set of formulas admits a parity
function.
\begin{lem}
If 
$\Gamma \subseteq \FoRm(\Lambda)$ is finite and clean, then $\Gamma$
admits a parity function whose range is bounded by the cardinality of
$\Cl(\Gamma)$.
\end{lem}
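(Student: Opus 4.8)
The plan is to read off $\Omega$ from the finite partial order of fixpoint formulas occurring in $\Cl(\Gamma)$, ordered by the syntactic subformula relation. First I would dispose of the non-fixpoint formulas: set $\Omega(A) := 0$ for every $A \in \Cl(\Gamma)$ that is not of the form $\eta p. B$, which makes the first clause of Definition~\ref{defn:paritymap} hold by construction. Since $\Gamma$ is finite and clean, $\Cl(\Gamma)$ is finite (the coalgebraic analogue of the finiteness of the Fischer--Ladner closure), so the set $F \subseteq \Cl(\Gamma)$ of formulas of the form $\eta p. B$ is finite; let $\preceq$ be the restriction of the subformula relation to $F$, which is a partial order because a proper subformula has strictly smaller size. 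For $\psi = \eta p. B \in F$ write $c(\psi) := 1$ if $\eta = \mu$ and $c(\psi) := 0$ if $\eta = \nu$.

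Next I would define $\Omega$ on $F$ by recursion along a linear extension of $\preceq$ (so that $\preceq$-smaller formulas are handled first): given $\psi \in F$, set $M(\psi) := \max\bigl(\{0\} \cup \{\Omega(\phi) \mid \phi \in F,\ \phi \prec \psi\}\bigr)$, and let $\Omega(\psi) := M(\psi)$ if $M(\psi)$ has parity $c(\psi)$, and $\Omega(\psi) := M(\psi) + 1$ otherwise. Then $\Omega(\psi)$ has the parity prescribed by whether $\psi$ is a $\mu$- or a $\nu$-formula, giving the second clause, and $\Omega(\psi) \ge M(\psi) \ge \Omega(\phi)$ for every $\phi \prec \psi$, giving the third clause. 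It is worth noting why one should not instead try to assign one priority per bound variable: after unfolding, $\Cl(\Gamma)$ may simultaneously contain a $\mu$-formula with a $\nu$-subformula and a $\nu$-formula having that very $\mu$-formula as a subformula, so a per-variable assignment would be forced into a parity clash. Working with the poset $F$ of closure formulas avoids this, since the recursion above never forces two $\preceq$-comparable formulas of opposite colour to receive equal priorities.

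For the range bound, observe that $\Omega$ increases by at most $1$ along each step of a $\preceq$-chain, so $\Omega$ takes values in $\{0, 1, \dots, N\}$ with $N := |F|$ the number of fixpoint formulas in $\Cl(\Gamma)$. If $\Gamma \ne \emptyset$ then $\Cl(\Gamma)$ contains at least one non-fixpoint formula (a variable or a nullary modal formula reached by descending to a leaf of any $A \in \Gamma$), so $N + 1 \le |\Cl(\Gamma)|$; together with the trivial case $\Gamma = \emptyset$ this shows that the range of $\Omega$ has cardinality at most $|\Cl(\Gamma)|$, as required.

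I do not expect a real obstacle: the statement is essentially routine once the correct object --- the finite poset $F$ of fixpoint subformulas appearing in the closure, with its $\mu/\nu$ colouring --- is isolated. The only point demanding a moment of care is the one flagged above (priorities must index the elements of $F$, not bound variables, because unfolding severs the bijection between a bound variable and ``its'' fixpoint formula); the finiteness of $\Cl(\Gamma)$ is used but is standard, and the range estimate is pure bookkeeping.
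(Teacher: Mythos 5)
Your proof is correct, but it takes a genuinely different route from the paper's. The paper argues top-down, by induction on a well-founded ordering on sequents that replaces a formula by a set of its proper subformulas: when a member of $\Gamma$ has outermost connective $\eta p$, that fixpoint formula receives a fresh priority of the right parity exceeding everything produced by the induction hypothesis, and every remaining element of $\Cl(\Gamma)$ is given priority $0$. You work bottom-up on the finite poset $F$ of \emph{all} fixpoint formulas in $\Cl(\Gamma)$ under the subformula order, giving each the least priority of the correct parity that dominates its strict predecessors. The difference is substantive: the formulas you treat uniformly but the paper relegates to its ``$0$ otherwise'' clause are exactly the fixpoint formulas of the form $C[p := \eta p.A]$ that enter $\Cl(\Gamma)$ only through unfolding, and for these the subformula order interleaves $\mu$- and $\nu$-formulas in the way you flag: for $\mu p.\Box(\nu q.\Box(p\land q))$ the closure contains $\nu q.\Box\bigl(\mu p.\Box(\nu q.\Box(p\land q))\land q\bigr)$, which has the original $\mu$-formula as a proper subformula and therefore cannot receive priority $0$. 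Your down-set recursion discharges the parity and monotonicity requirements for such formulas explicitly, which the paper's case analysis does not, so yours is the more careful argument; what the paper's induction buys is brevity and consistency with the well-founded ordering used elsewhere in the development. One small polish to your range bound: ``$\Omega$ increases by at most $1$ along each step of a chain'' should be read as $\Omega(\psi)\le\Omega(\phi)+1$ for \emph{some} $\phi\prec\psi$ realising the maximum $M(\psi)$ (a side branch can force a larger jump along a particular chain); iterating along the chain of successive maximisers, or bounding $\Omega(\psi)$ by the cardinality of the down-set of $\psi$, still gives $\Omega(\psi)\le|F|\le|\Cl(\Gamma)|-1$, so your conclusion stands.
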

\begin{proof}
By induction on the well-founded ordering generated by
\[ \Gamma, \Delta < \Gamma, A \mbox{ iff } A \notin \Delta
\subseteq \subf(A) \]
where $\subf(A)$ denotes the subformulas of $A$. If $\Gamma$
contains a top-level conjunction, disjunction or propositional
variable, then the claim follows by induction hypothesis. Now
suppose that $\Gamma = \mu p. A, \Gamma'$. By induction hypothesis,
we obtain a parity function $\Omega': \Cl(A, \Gamma') \to \omega$
that we may extend to a parity function $\Omega:
\Cl(\Gamma) \to \omega$ by putting
\[\Omega(B) = \begin{cases}
   m & B = \mu p. A \\
	 \Omega'(B) & B \in \Cl(A, \Gamma') \\
	0 & \mbox{otherwise}
\end{cases}\]	
where $m$ is odd and $m > \Omega'(B)$ for all $B \in \Cl(A, \Gamma')$.
The case $\Gamma = \nu p. A, \Gamma'$ can be treated in a similar fashion. 
\end{proof}
\noindent
Given a parity function, we can define the following parity game,
the winning regions of which characterise satisfiability. We
parametrise the model checking game in a \emph{set} of formulas
which will enable us to use it to prove soundness and completeness
of the tableau calculus (which operates on sets of formulas)
that we introduce later.
\begin{defi}
Suppose that $\model = (X, \gamma, h)$ is a 
$T$-model, $\Gamma \subseteq \FoRm(\Lambda)$ is finite, clean and
guarded, and 
$\Omega$ is a parity map for $\Gamma$. The \emph{model checking
game} $\MC_\Gamma(\model)$ is the parity game whose positions and
admissible moves are given in the following table,
\begin{center} \begin{tabular}{|l|c|l|} \hline 
Position: b & Player & Admissible moves: $E[b]$  \\ \hline 
$(p,x), x \in h(p)$ & $\forall$ & $\emptyset$ \\ 
$(p,x), x \not\in h(p)$ & $\exists$ & $\emptyset$ \\ 
$(\eta p. A(p),x)$ for $\eta \in \lbrace \mu, \nu \rbrace$  &
$\exists$ & $ \{( A[p = \eta p.A(p)], x)\}$ \\ 
$(A_1 \vee A_2,x)$ & $\exists$ &  $\{( A_1,x),
(A_2,x)\}$ \\ $(A_1 \wedge A_2,x)$ & $\forall$ &  $\{( A_1,x),
(A_2,x)\}$ \\
$(\hearts(A_1,\ldots,A_n),x)$ & $\exists$ & $\{
		(\hearts(A_1,\ldots,A_n),(U_1,\ldots,U_n)) \mid$ \\ & &
		\quad $U_1, \dots, U_n
		    \subseteq X, \gamma(x) \in \lsem
		\hearts \rsem_{X} (U_1,\ldots,U_n)  \rbrace$ \\
$(\hearts (A_1,\ldots,A_n),(U_1,\ldots,U_n))$ & $\forall$
		& $\{(A_i,x) \mid 1 \leq i \leq n, x \in U_i\} $\\ 
		\hline \end{tabular}  \end{center} 
where $p \in \Var 
\cup \bar \Var$, $\hearts \in \Lambda
\cup \bar\Lambda$,
		$A,A_1,\ldots,A_n \in \Cl(\Gamma)$ are $
		\Lambda$-formulas, $x \in X$ are states and 
   $U_i \subseteq X$ are state sets.
The parity function of $\MC_\Gamma(\model)$ is given by 
$\Omega'(A,x) = \Omega(A)$  for $A \in \Cl(\Gamma)$ and $x \in
		X$, and $\Omega'(\_) = 0$ otherwise. 
\end{defi} 

\noindent
It is easy to see that any two 
parity functions for a given set of
formulas induce the same winning region for both players. We
therefore speak of 
\emph{the} model checking game given by a set of formulas.
Evidently, the model checking game is an extension of the boolean
satisfiability game with fixpoints and modal operators. When the
game reaches a fixpoint formula, that is, a position of type $(\eta
p.A, x)$, this fixpoint is simply unfolded, and its nature (least or
greatest fixpoint) and
nesting depth of the formula fixpoint formula is recorded by the
parity function. To show that a state $x$ satisfies a modal formula
$\hearts(A_1, \dots, A_n)$, $\exists$ needs to select sets $U_1,
\dots, U_n$ (that we think of as a subset of the truth sets of the $A_i$'s) so
that the state $x$ is being mapped  by $\gamma$ into the lifting of $U_1, \dots,
U_n$. Subsequently, $\forall$ may challenge this choice and select
an index $1 \leq i \leq n$ and require that $\exists$ demonstrates
that the formula $A_i$ is satisfied at an aribitrary element of
$U_i$ (and thus corroborate that we may think of $U_i$ as the truth set
of $A_i$). To prove that the model checking game characterises
satisfiability, we make crucial use of monotonicity, as the $U_i$
under-approximate the truth sets of the $A_i$.
The announced generalisation of
\cite[Theorem~1, Chapter~6]{Stirling:2001:MTP} now takes the following
form:
\begin{thm}\label{thm:adequacy}
   For $\Gamma$ finite, clean and guarded, a $T$-model $M = (X,
	 \gamma, h)$, 
	 $A \in \Cl(\Gamma)$ and $x \in X$, 
   $\exists$ has a winning 
   strategy in $\MC_\Gamma(\model)$ from position $(A,x)$
   iff $\model,x \models A$.
\end{thm}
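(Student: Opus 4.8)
The plan is to prove both directions by exhibiting winning strategies, using the standard technique of \emph{signatures} (also called the unfolding/ordinal-approximant bookkeeping) to handle the parity condition, but phrased game-theoretically so that it interacts cleanly with the coalgebraic modal step. The crucial novelty compared with the Kripke case is the modal round: a position $(\hearts(A_1,\dots,A_n),x)$ is followed by $\exists$ choosing sets $U_1,\dots,U_n$ with $\gamma(x)\in\lsem\hearts\rsem_X(U_1,\dots,U_n)$, then $\forall$ picking $i$ with $x'\in U_i$, after which play continues at $(A_i,x')$. Monotonicity of the $\Lambda$-structure is what makes the obvious choice $U_i=\lsem A_i\rsem_M$ legal, and conversely lets us read off $x'\models A_i$ from $x'\in U_i$ whenever $\exists$ plays well.

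For the direction $\model,x\models A\ \Rightarrow\ \exists$ wins from $(A,x)$, I would have $\exists$ maintain the invariant that every position $(B,y)$ reached satisfies $y\in\lsem B\rsem_M$. Boolean positions are handled as in the propositional satisfiability game; at a fixpoint position $(\eta p.B,y)$ the unfolding lemma (the fact that $\lsem\eta p.B\rsem_M=\lsem B[p:=\eta p.B]\rsem_M$, which follows from the Knaster--Tarski characterisation of $\LFP,\GFP$) preserves the invariant; at a modal position $(\hearts(A_1,\dots,A_n),y)$ with $y\in\lsem\hearts(\vec A)\rsem_M=\gamma^{-1}\lsem\hearts\rsem_X(\lsem A_1\rsem_M,\dots,\lsem A_n\rsem_M)$, $\exists$ plays $U_i:=\lsem A_i\rsem_M$, which is legal, and whatever $(A_i,y')$ $\forall$ chooses has $y'\in U_i=\lsem A_i\rsem_M$. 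So no finite play is lost by $\exists$. For infinite plays one must show the highest priority seen infinitely often is even. Here I would use the signature argument: along the play, tracking the outermost fixpoint formula that is unfolded, the cleanness/guardedness conditions and the compatibility of $\Omega$ with the subformula order force that if some $\mu p.B$ were the highest priority unfolded infinitely often, then from some point on only subformulas of $\mu p.B$ (hence only formulas of priority $\le\Omega(\mu p.B)$) are unfolded, and the ordinal approximant rank of the relevant occurrence of $\mu p.B$ strictly decreases each time it is reopened while never increasing in between — a contradiction. This ordinal-bookkeeping step is the main obstacle and the one place where real care is needed, since one must make precise how the invariant $y\in\lsem B\rsem_M$ carries the ordinal index through the modal round (the semantics of $\hearts$ is plain, with no fixpoint, so the index is simply passed along unchanged).

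For the converse, $\model,x\not\models A\ \Rightarrow\ \forall$ wins from $(A,x)$, the argument is dual: by history-free determinacy of parity games (Theorem~\ref{fact:paritygames}), it suffices to show $\exists$ has \emph{no} winning strategy, equivalently that $\forall$ can maintain the invariant $y\notin\lsem B\rsem_M$. Booleans and fixpoint unfoldings are again immediate by Tarski. At a modal position $(\hearts(A_1,\dots,A_n),y)$ with $y\notin\lsem\hearts(\vec A)\rsem_M$, i.e.\ $\gamma(y)\notin\lsem\hearts\rsem_X(\lsem A_1\rsem_M,\dots,\lsem A_n\rsem_M)$: whatever sets $U_i$ with $\gamma(y)\in\lsem\hearts\rsem_X(\vec U)$ $\exists$ plays, monotonicity forbids $U_i\subseteq\lsem A_i\rsem_M$ for all $i$ simultaneously, so there is an index $i$ and a state $y'\in U_i\setminus\lsem A_i\rsem_M$; $\forall$ picks this $(A_i,y')$, restoring the invariant. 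Hence $\forall$ wins every finite play, and the dual signature argument (now with least-fixpoint approximant ranks of the formulas the play is \emph{co}-satisfying) shows that on infinite plays the highest priority seen infinitely often is odd, so $\forall$ wins. Combining the two directions, together with determinacy to handle the "$\exists$ does not win $\Rightarrow$ $\forall$ wins" step, gives the equivalence. I would present the $\mu p.B$-signature lemma as a small self-contained sublemma shared (in dualised form) between the two directions, since that is where essentially all the work lies; the boolean and fixpoint cases are routine and the modal case is a one-line appeal to monotonicity of the $\Lambda$-structure.
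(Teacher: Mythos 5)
Your proposal is correct, but it takes a genuinely different route from the paper. The paper proves the theorem by structural induction on $A$: the boolean and fixpoint cases are deferred to the adequacy proof of the game semantics in Venema's work (which is where the difficulty that the unfolding $B[p:=\eta p.B]$ is not structurally smaller than $\eta p.B$ gets absorbed), and only the modal case is written out --- there $\exists$ plays $U_i=\lsem A_i\rsem_\model$ in one direction, and in the other direction the inclusion $U_i\subseteq\lsem A_i\rsem_\model$ obtained from the induction hypothesis is converted into $x\models\hearts(A_1,\dots,A_n)$ by monotonicity of $\lsem\hearts\rsem$. Your handling of the modal round is exactly this monotonicity argument (merely phrased contrapositively from $\forall$'s side), so the genuinely coalgebraic content of the two proofs coincides. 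What differs is the global structure: instead of an induction you construct a single strategy for each player maintaining a semantic invariant, and you discharge the parity condition by the classical signature/ordinal-approximant lemma. This buys a self-contained and symmetric treatment of both directions (via duality) that works directly with the unfolding formulation of the fixpoint moves, whereas the paper must additionally remark that its fixpoint moves are equivalent to those of the cited game; the price is that the signature lemma --- which you rightly identify as the technical heart --- must be carried out in full, including the point you flag that $\exists$ has to play the approximant truth sets rather than the full truth sets $\lsem A_i\rsem_\model$ at modal positions so that the signature does not increase across the modal round. A minor redundancy: in the converse direction you both appeal to determinacy and construct $\forall$'s winning strategy explicitly; the explicit strategy alone suffices.
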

\begin{proof}
The proof is by induction on $A$, and similar to the proof of adequacy of
the game semantics for the coalgebraic $\mu$-calculus
\cite[Theorem~1]{Venema:2006:AFP}. It should be noted that the
model-checking game in \emph{loc. cit.} has slightly diferent moves in positions
that correspond to fixpoint formulas: in a position of the form $(\eta
p.A(p),x)$, the only available choice is to move to $(A(p),x)$, and if a
position of the form $(p,y)$ is reached later, then the only option is to
move to $(A(p),y)$. However, one can show that both ways of treating
fixpoint formulas in the model-checking game are equivalent. We only treat
the case $A =
\hearts(A_1, \dots, A_n)$; all others are as in \emph{loc.~cit.}\,. 

First suppose that $M, x \models \hearts(A_1, \dots, A_n)$. By
induction hypothesis, $\exists$ has a winning strategy from position
$(A, x')$ if and only if $M, x' \models A$ for all subformulas $A$ of
$\lbrace A_1, \dots, A_n \rbrace$. These winning strategies can be
extended to provide a winning strategy from $\hearts(A_1, \dots,
A_n)$ by stipulating that $\exists$ move to $(\hearts(A_1, \dots,
A_n), (\lsem A_1 \rsem_M, \dots, \lsem A_n \rsem_M)$.
\eat{
Assume first that $\model,x \models \hearts (A_1,\ldots,A_n)$. At
position $(\hearts (A_1,\ldots,A_n),x)$ in $\MC_\Gamma(\model)$, let
$\exists$'s move to be to the position $(\hearts
(A_1,\ldots,A_n),(\lsem A_1 \rsem_M,\ldots,\lsem A_n \rsem_M))$.
Note that this is a legitimate move, as $M,x \models \hearts
(A_1,\ldots,A_n)$ gives $\gamma(x) \in \lsem \hearts \rsem_C (\lsem
A_1 \rsem_M, \dots,\lsem A_n \rsem_M)$. We now show that it is
impossible for $\forall$ to win if $\exists$ plays as above at
$(\hearts (A_1,\ldots,A_n),x)$. Assume that, at position $(\hearts
(A_1,\ldots,A_n),(\lsem A_1 \rsem_M,\ldots,\lsem A_n \rsem_M))$,
$\forall$ chooses $(A_i,x')$ with $x' \in \lsem A_i \rsem_M$. (If no
move is possible, then $\forall$ loses immediately.) Since $M,x'
\models A_i$, it follows by the induction hypothesis that $\exists$
has a winning strategy at position $(A_i,x')$. Now it is easy to see
that this strategy will also guarantee that $\exists$ wins at
position $(\hearts (A_1,\ldots,A_n),x)$.
}
Now assume that $\exists$ has a winning strategy from position
$(\hearts (A_1,\ldots,A_n),x)$ in $\MC_\Gamma(\model)$ under which
$\exists$ moves to position $(\hearts(A_1, \dots, A_n), (U_1, \dots, U_n))$
from position $(\hearts(A_1, \dots, A_n),x)$. By induction
hypothesis, we have that $x_i \models A_i$ for all $x_i \in U_i$ so
that $U_i \subseteq \lsem A_i \rsem_M$ and hence $\gamma(x) 
\in \lsem \hearts \rsem (\lsem A_1 \rsem_M, \dots, \lsem A_n
\rsem_M)$ by monotonicity of $\lsem \hearts \rsem$ whence $x \models
\hearts(A_1, \dots, A_n)$.
\eat{
This strategy provides sets $U_1,\ldots,U_n \subseteq C$ such that
$\gamma(x) \in \lsem \hearts \rsem_{C}(U_1,\ldots,U_n)$ when. Moreover,
for each $x_i \in U_i$, $\exists$ has a winning strategy at
$(A_i,x_i)$. Hence, by the induction hypothesis, $x_i \models A_i$
for all $x_i \in U_i$, and thus $U_i \subseteq \lsem A_i \rsem_M$.
The monotonicity of $\lsem \hearts \rsem$ now gives $\gamma(x) \in
\lsem \hearts \rsem_X(\lsem A_1 \rsem_M, \ldots, \lsem A_n
\rsem_M)$, and hence $M,x \models \hearts (A_1,\ldots,A_n)$.
}
\end{proof}
\section{Tableaux for the coalgebraic $\mu$-calculus}

\noindent
In this section, we characterise \emph{satisfiability} in terms of
non-existence of \emph{closed tableaux}. Given that our approach is
parametric both in the model class over which we interpret formulas
(embodied by the endofunctor) and the modal operators 
(given by the similarity type) that we use,
our tableau system will be parametric in a set of modal tableau
rules. Our tableaux will be constructed by applying the standard
rules for deconstructing propositional connectives, the modal rules
that are supplied as a parameter, and unfolding of fixpoints. To
ensure soundness and completeness of the ensuing calculus, we need
to ensure two properties:
\begin{enumerate}[(1)]
\item the supplied set of modal rules has to describe the model
class in a sound and complete way
\item topmost least fixpoints are only unfolded finitely often.
\end{enumerate}

\noindent For the first property, we introduce \emph{coherence conditions}
between the proof rules and the semantics that will guarantee
completeness. For the second property, we need to consider
\emph{traces} of formulas along the paths of the tableau and again
use a parity function to determine whether outermost $\mu$s are
unfolded only finitely often. As the unfolding of fixpoints may
create infinite branches, we conceptualise a tableau as a graph. A
closed tableau is then constructed according
to the given rules so that outermost 
least fixpoints are unfolded infinitely many times along any
path through the tableau.

We begin by describing the coherence conditions that will guarantee
soundness and completeness of the modal rules. These rules describe
the relationship between states and (coalgebraic) successors,
are of a particularly simple form, and are formulated in terms of
sequents.
\begin{defi}
A \emph{$\Lambda$-tableau sequent}, or just \emph{sequent}, is a
finite set of
$\Lambda$-formulas. We write $\Seq(\Lambda)$ for the set of
$\Lambda$-sequents. If $\Gamma \in \Seq(\Lambda)$ we write
$\Seq(\Gamma) = \lbrace \Delta \in \Seq(\Lambda) \mid \Delta
\subseteq \Cl(\Gamma) \rbrace$ for the set of sequents over the
closure of $\Gamma$. 

We identify a formula $A \in \FoRm(\Lambda)$
with the singleton set $\lbrace A \rbrace$, and write $\Gamma ;
\Delta = \Gamma \cup \Delta$ for the union of $\Gamma,
\Delta \in \Seq(\Lambda)$ as before. Substitution extends to sequents via
$\Gamma \sigma = \lbrace A \sigma \mid A \in \Gamma \rbrace$.
A \emph{monotone one-step tableau rule} for a similarity type
$\Lambda$ is of the form
\[ \frac{\Gamma_0}{\Gamma_1 \quad \dots
\quad \Gamma_n} \] 
where  $\Gamma_0 \in (\Lambda \cup
\bar\Lambda)(V)$  and 
$\Gamma_1, \dots, \Gamma_n \subseteq V$ for some set  $V \subseteq \Var$  
of propositional variables, every
propositional variable occurs at most once in $\Gamma_0$ and all
variables occurring in one of the $\Gamma_i$'s ($i > 0$) also occur in
$\Gamma_0$.
\end{defi}
\noindent
Monotone
tableau rules do not contain negated propositional variables, which
are not needed to axiomatise (the class of models induced by) monotone 
$\Lambda$ structures. The restriction on occurrences of
propositional variables is unproblematic, as variables that occur in
a conclusion but not in the premise and multiple occurrences of
variables in the premise can always be eliminated.
The set of one-step tableau rules is the only parameter in the
construction of tableaux for coalgebraic fixpoint logics.  The
coherence conditions relate rule sets with the interpretation
of modal operators purely on the level of properties of states
(subsets of a set $X$) and properties of successors (subsets of
$TX$).

\begin{defi}
Let $V \subseteq \Var$ be a set of propositional variables.
The \emph{interpretation} of a propositional sequent $\Gamma
\subseteq V \cup \bar V$ with respect to a set $X$ and a valuation
$\tau: V \to \Pow(X)$ is given by
$\lsem \Gamma \rsem_{X, \tau} = \bigcap \lbrace  \tau(p) \mid p \in
\Gamma \rbrace$, and
the
interpretation $\lsem \Gamma \rsem_{TX, \tau} \subseteq TX$ 
of a modalised sequent 
$\Gamma \subseteq 
(\Lambda \cup
\bar \Lambda)(V) $ is 
\[  
\lsem \Gamma \rsem_{TX, \tau} = \bigcap \lbrace \lsem \hearts \rsem_X
(\tau(p_1), \dots, \tau(p_n)) \mid \hearts (p_1, \dots, p_n) \in
\Gamma \rbrace. \]
If $T$ is a $\Lambda$-structure, then a set $\Rules$ of monotone tableau
rules for $\Lambda$ is \emph{one-step tableau complete} (resp.
\emph{sound}) with respect
to $T$ if
$\lsem \Gamma \rsem_{TX, \tau} \neq \emptyset$ if (only if) for all
$\Gamma_0 / \Gamma_1, \dots, \Gamma_n \in \Rules$ and all $\sigma: V
\to V$ with $\Gamma_0 \sigma \subseteq \Gamma$, there exists $1 \leq
i \leq n$ such that $\lsem \Gamma_i \sigma \rsem_{X, \tau} \neq
\emptyset$,
whenever 
$\Gamma \subseteq (\Lambda \cup \bar\Lambda)(V)$ and $\tau: V \to
\Pow(X)$.
\end{defi}
\noindent
Informally speaking, a set $\Rules$ of one-step tableau rules is
one-step tableau complete if 
a modalised sequent $\Gamma$ is satisfiable whenever 
a rule that matches $\Gamma$ 
has a satisfiable conclusion.
Some care has to be taken to ensure monotonicity of one-step rules
in concrete examples, in particular for the graded and the
probabilistic $\mu$-calculus. In order to obtain monotone rules for
these logics, we need to insist that rule conclusions only contain
prime implicants to avoid non-monotone occurrences of propositional
variables. This ensures that we avoid a (non-monotone) conclusion
consisting of e.g. $\Gamma; p$ and $\Gamma; \bar p$.
\begin{defi}
Suppose $I$ is a finite (index) set.
A \emph{prime implicant} of a boolean function $f: \lbrace 0, 1
\rbrace^I \to \lbrace 0, 1 \rbrace$ is a partial valuation $p: I \pto
\lbrace 0, 1 \rbrace$ with minimal domain of definition so that $f$
evaluates to $1$ under all total extensions of $p$. Given a family
$(p_i)_{i \in I}$ of propositional variables, every partial
valuation $p: I \pto \lbrace 0, 1 \rbrace$ (and hence every prime
implicant) induces a sequent
\[ \Gamma_p = \lbrace p_i \mid p(i) = 1 \rbrace \cup \lbrace \bar
p_i \mid p(i) = 0 \rbrace. \]

Now consider $k \in \Z$, a family $(r_i)_{i \in I}$ of integers and a family
$(p_i)_{i \in I}$ of propositional variables over the same index
set. For $I = I_0 \cup I_1$, we let
\[ \sum_{i \in I_0} r_i p_i +  \sum_{i \in I_1} r_i \bar{p_i} < k = \lbrace \Gamma_p
\mid p \mbox{ prime implicant of } f \rbrace \]
for the set of sequents induced by the prime implicants of the boolean function
$f: \lbrace 0, 1 \rbrace^I \to \lbrace 0, 1 \rbrace$ 
defined by $f(v) = 1 \iff \sum_{i \in I_0} r_i v(i) +  \sum_{i \in I_1} r_i (1-v(i)) < k$.
\end{defi}

In other words, the set of prime implicants of a boolean function
corresponds to the reduced disjunctive normal form of the associated
propositional formula. 
The notation $\sum_i r_i p_i < k$ introduced above allows us to read
a linear inequality involving propositional variables as a set of
sequents (that we will later use as the conclusion of a one-step
rule). If we think of the propositional variables $p_i$ as denoting
subsets $U_i$ of some set $X$, then the set of all points $x \in X$ that
satisfy the inequality $\sum_i \cf_{U_i}(x) \leq k$ is precisely the
set of points that satisfies the induced collection of sequents. (We
write $\cf_U: X \to \lbrace 0, 1 \rbrace$ for the characteristic
function of a subset $U \subseteq X$).
For one-step rules formulated in terms of linear inequalities, we
need this property to establish completeness.
\begin{lem} \label{lemma:cf}
Suppose $X$ is a set and $\tau: \Var \to \Pow(X)$ is a valuation of
propositional variables. 
Then $x$ satisfies one of the elements of 
$\sum_{i \in I_0} r_i p_i +  \sum_{i \in I_1} r_i \bar{p_i} < k$ iff $\sum_{i \in I_0}^n r_i \cf_{\tau(p_i)}(x) +  \sum_{i \in I_1} r_i \cf_{X \setminus \tau(p_i)}(x) <k$. That is,
\[ \sum_{i \in I_0} r_i \cf_{\tau(p_i)}(x) +  \sum_{i \in I_1} r_i \cf_{X \setminus \tau(p_i)}(x) <k \iff x \in \bigcup \lbrace
\lsem \Gamma \rsem_{(X, \tau)} \mid \Gamma \in 
\sum_{i\in I_0} r_i p_i + \sum_{i\in I_0} r_i \bar{p_i} < k  \rbrace
\]
for all $x \in X$, whenever  $r_1, \dots, r_n, k \in \Z$.
\end{lem}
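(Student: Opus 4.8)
The plan is to translate the set-theoretic data $(X,\tau,x)$ into a single boolean valuation and then reduce the statement to the standard fact that a boolean function is computed by the disjunction of its prime implicants. First I would fix $x \in X$ and introduce the valuation $v_x : I \to \lbrace 0,1 \rbrace$ defined by $v_x(i) = \cf_{\tau(p_i)}(x)$. Since $\cf_{X \setminus \tau(p_i)}(x) = 1 - \cf_{\tau(p_i)}(x)$ for every $i$, the numerical inequality on the left of the asserted equivalence, namely $\sum_{i \in I_0} r_i \cf_{\tau(p_i)}(x) + \sum_{i \in I_1} r_i \cf_{X \setminus \tau(p_i)}(x) < k$, holds precisely when $\sum_{i \in I_0} r_i v_x(i) + \sum_{i \in I_1} r_i (1 - v_x(i)) < k$, i.e.\ precisely when $f(v_x) = 1$, where $f : \lbrace 0,1 \rbrace^I \to \lbrace 0,1 \rbrace$ is the boolean function from the definition of $\sum_{i\in I_0} r_i p_i + \sum_{i \in I_1} r_i \bar p_i < k$.

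Second I would analyse when $x$ lies in $\lsem \Gamma_q \rsem_{X,\tau}$ for a partial valuation $q : I \pto \lbrace 0,1 \rbrace$. By definition $\Gamma_q = \lbrace p_i \mid q(i) = 1 \rbrace \cup \lbrace \bar p_i \mid q(i) = 0 \rbrace$, and, extending $\tau$ to negated variables by $\tau(\bar p_i) = X \setminus \tau(p_i)$ as elsewhere in the paper, $\lsem \Gamma_q \rsem_{X,\tau} = \bigcap \lbrace \tau(p_i) \mid q(i) = 1 \rbrace \cap \bigcap \lbrace X \setminus \tau(p_i) \mid q(i) = 0 \rbrace$. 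Hence $x \in \lsem \Gamma_q \rsem_{X,\tau}$ iff $\cf_{\tau(p_i)}(x) = q(i)$ for all $i \in \dom(q)$, that is, iff $v_x$ is a total extension of $q$. Consequently, $x$ satisfies some element of $\sum_{i\in I_0} r_i p_i + \sum_{i \in I_1} r_i \bar p_i < k$ iff $v_x$ extends some prime implicant of $f$.

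Third I would invoke the equivalence ``$f(v) = 1$ iff $v$ extends some prime implicant of $f$''. The right-to-left direction is immediate from the definition of (prime) implicant. For left-to-right, if $f(v_x) = 1$ then $v_x$, regarded as a total partial valuation, is itself an implicant of $f$; shrinking its domain to a subset-minimal one on which it remains an implicant produces a prime implicant extended by $v_x$. Chaining the three observations yields: $x$ satisfies some element of $\sum_{i\in I_0} r_i p_i + \sum_{i \in I_1} r_i \bar p_i < k$ iff $v_x$ extends some prime implicant of $f$ iff $f(v_x) = 1$ iff $\sum_{i \in I_0} r_i \cf_{\tau(p_i)}(x) + \sum_{i \in I_1} r_i \cf_{X \setminus \tau(p_i)}(x) < k$, which is exactly the lemma. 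I do not expect a genuine obstacle here; the one thing that needs care is the bookkeeping between the variables $p_i$, the negated variables $\bar p_i$, and the complemented characteristic functions $\cf_{X \setminus \tau(p_i)}$ — which is precisely why the statement is phrased with the index split $I = I_0 \cup I_1$.
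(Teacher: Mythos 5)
Your proof is correct and follows essentially the same route as the paper's: both reduce membership in $\lsem \Gamma_q \rsem_{X,\tau}$ to the induced boolean valuation $v_x$ extending $q$, and both use the fact that $f(v)=1$ exactly when $v$ extends some prime implicant of $f$. If anything, your version is the cleaner write-up, since you fix the convention $v_x(i)=\cf_{\tau(p_i)}(x)$ once and carry it through both index sets $I_0$ and $I_1$ uniformly, whereas the paper's proof switches between the truth value of $p_i$ and of the literal $\bar p_i$ in mid-argument.
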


\begin{proof}
First suppose that $x \in  \bigcup \lbrace
\lsem \Gamma \rsem_{(X, \tau)} \mid \Gamma \in 
\sum_{i \in I_0}  r_i p_i + \sum_{i \in I_1}  r_i \bar{p_i} < k  \rbrace$. Then there exists a prime 
implicant $p: I \pto
\lbrace 0, 1 \rbrace$ of the function $f$ given by $f(v) = 1 \iff
\sum_{i \in I_0} r_i v(i) + \sum_{i \in I_1} r_i (1 - v(i)) < k$ such that $x \in \lsem \Gamma_p \rsem_{(X,
\tau)}$.
Then the function $c: I_0 \cup I_1 \to \{0,1\}$ given by
\[      c(i)  \mathrel{:=} \left\{\begin{array}{l} 
	1 \; \mbox{if} \; x \in \tau(p_i) \; \mbox{and} \; i \in I_0 \\
	1 \; \mbox{if} \; x \not\in \tau(p_i) \; \mbox{and}  \; i \in I_1 \\
	0 \; \mbox{otherwise.}
\end{array} \right. \]
extends $p$ and
therefore $f(c) = 1$ whence 
\[ \sum_{i \in I_0} r_i \cf_{\tau(p_i)}(x) +  \sum_{i \in I_1} r_i \cf_{X \setminus \tau(p_i)}(x) = \sum_{
i \in I_0}
r_i c(i) + \sum_{i \in I_1} r_i c(i) < k.\] Now suppose that $ \sum_{i \in I_0} r_i \cf_{\tau(p_i)}(x) +  \sum_{i \in I_1} r_i \cf_{X \setminus \tau(p_i)}(x) < k$
and consider the valuation 
\[ v(i) = \left\{ \begin{array}{l} 1 \mbox{ if } x \in \tau(p_i), i \in I_0 \\
 1 \mbox{ if } x \not\in \tau(p_i), i \in I_1\\
 0 \mbox{ otherwise.} 
 \end{array} \right. \]
We
have that $f(v) = 1$ and therefore obtain a prime implicant $p: I
\pto \lbrace 0, 1 \rbrace$ of $f$ such that $v$ extends $p$ and $x
\in \lsem \Gamma_p \rsem_{(X, \tau)}$.
\end{proof}
%
\noindent
This finishes our discussion of prime implicants and we are ready
to have a look at several examples.
We use the following one-step rules to axiomatise the model classes
introduced in Example \ref{example:logics}.

\noindent
\begin{exa}\label{example:rules}
\begin{enumerate}[(1)]
\item
The (standard) modal logic of Kripke frames is axiomatised by all the
instances of
\[
  (K) \frac{\Diamond p_0; \Box p_1; \dots; \Box p_n}{p_0; p_1; \dots; p_n}
\]
where $n \geq 0$.
\item the set of one-step rules associated with graded modal logic
(and the graded $\mu$-calculus, interpreted over finitely branching
multigraphs) can be axiomatised by the rule schema
\[
	(G)  \frac{\langle k_1 \rangle p_1; \dots; \langle k_n \rangle p_n;
[ l_1 ] q_1; \dots; [l_m] q_m}
       {\sum_{j = 1}^m s_j \bar{q_j}  - \sum_{i = 1}^n r_i p_i < 0} 
			 \]
where $m, n \geq 0$ and $r_i, s_j \in \Nat \setminus \lbrace 0 \rbrace$ and 
$\sum_{i = 1}^n r_i (k_i + 1) \geq 1 + \sum_{j = 1}^m s_j l_j$.
\item The set of rules associated to the probabilistic
$\mu$-calculus comprises all instances of
\[
  (P) \frac{\langle a_1 \rangle  p_1; \dots ; \langle a_n \rangle  p_n; [b_1] q_1 ;
\dots ; [b_m] q_m}
{ \sum_{j = 1}^m s_j \bar q_j - \sum_{i = 1}^n r_i p_i < k} \quad
\]
where $m, n \geq 0$, $r_i, s_j \in \Nat \setminus \lbrace 0 \rbrace$ and
$\sum_{i = 1}^n r_i a_i - \sum_{j = 1}^m s_j b_j \leq k$  if
$n > 0$ and $ - \sum_{j = 1}^m s_j b_j < k$  if $n = 0$.
\item 
For coalition logic, we have all instances of
\[
(C_1)\frac{[C_1] p_1; \dots; [C_n] p_n}{p_1; \dots ; p_n} 
\qquad
   (C_2)\frac{[C_1] p_1; \dots; [C_n] p_n; \bar{[D]} q; \bar{[N]} r_1;
	 \dots ; \bar{[N]} r_m}
	 {p_1; \dots; p_n; q; r_1; \dots; r_m} 
\]
where again $m, n \geq 0$. Both rules are subject to the side
condition that the $C_i$ are disjoint. For $(C_2)$ we moreover
require
$C_i \subseteq D$.
\item
  Finally, the rule set associated to monotone modal logic contains
	the single rule
\[
	(M)  \frac{\Box p; \Diamond q}{p; q}.
\]
\end{enumerate}
\end{exa}
%
In the rule schemas $(G)$ and $(P)$, we note 
that $\sum_i r_i a_i < k$ is a set of (propositional) sequents, and therefore
qualifies as the conclusion of a tableau rule. To ensure
monotonicity, we have to ensure that no literal appears negatively.
This is a direct consequence of the following: 

\begin{lem} \label{lemma:rule-monotone}
Suppose that $p$ is a prime implicant of the boolean function $f:
\lbrace 0, 1 \rbrace^I \to \lbrace 0, 1 \rbrace$ given by $f(v) = 1$
iff $\sum_{i \in I} r_i v_i < k$, where $(r_i)_{i \in I}$ is a
sequence of nonzero integers. Then
$p(i) = 1$ or undefined whenever $r_i < 0$ and analogously, $p(i) =
0$ or undefined whenever $r_i > 0$. In particular, all instances of
$(G)$ and $(P)$ are monotone.
\end{lem}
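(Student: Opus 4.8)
The plan is to prove the contrapositive for each sign condition: if $r_i < 0$ and $p(i) = 0$, then $p$ cannot be a prime implicant, and symmetrically if $r_i > 0$ and $p(i) = 1$. First I would recall what being a prime implicant means: $p: I \pto \lbrace 0,1 \rbrace$ has $f$ identically $1$ on every total extension of $p$, and no proper restriction $p'$ of $p$ (obtained by deleting one assignment from $\dom(p)$) has this property. So it suffices to show that whenever $r_i < 0$ and $i \in \dom(p)$ with $p(i) = 0$, the restriction $p'$ obtained by removing $i$ from $\dom(p)$ still forces $f$ to $1$ on all total extensions, contradicting minimality of $\dom(p)$.

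The key monotonicity observation is this: for fixed values of all $v_j$ with $j \neq i$, the quantity $\sum_{j \in I} r_j v_j$ is a monotone (non-increasing, since $r_i < 0$) function of $v_i$. Hence if the inequality $\sum_{j \in I} r_j v_j < k$ holds when $v_i = 0$, it also holds when $v_i = 1$ (because raising $v_i$ from $0$ to $1$ only decreases the sum when $r_i < 0$). Now take any total extension $w$ of $p'$ (the restriction of $p$ that is undefined at $i$). If $w(i) = 1$, then setting $w'(i) = 0$ and $w'(j) = w(j)$ elsewhere gives a total extension of $p$ (since $p(i) = 0$), so $f(w') = 1$, i.e. $\sum_j r_j w'_j < k$; then by the monotonicity observation $\sum_j r_j w_j < k$ as well, so $f(w) = 1$. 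If $w(i) = 0$ then $w$ itself extends $p$, so $f(w) = 1$ directly. Either way $f$ is $1$ on all total extensions of $p'$, so $p$ was not minimal --- contradiction. The argument for $r_i > 0$ is entirely dual (lowering $v_i$ from $1$ to $0$ decreases the sum), establishing that a prime implicant never sets $p(i) = 1$ when $r_i > 0$.

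For the final assertion about $(G)$ and $(P)$, I would simply unwind the notation: the conclusion of $(G)$ is the set of sequents $\sum_{j} s_j \bar{q_j} - \sum_i r_i p_i < 0$, which by definition is the set $\lbrace \Gamma_p \mid p \text{ prime implicant of } f \rbrace$ for the boolean function with coefficient $-r_i < 0$ on each $p_i$ and coefficient $s_j > 0$ on each $\bar{q_j}$. By the first part, every prime implicant $p$ satisfies $p(p_i) \in \lbrace 1, \text{undefined}\rbrace$ (since the coefficient of $p_i$ is negative) and $p(\bar{q_j}) \in \lbrace 0, \text{undefined}\rbrace$ (since the coefficient of $\bar{q_j}$ is positive); by the definition of $\Gamma_p$ this means $\Gamma_p$ contains only the positive literals $p_i$ and the positive literals $q_j$ (never $\bar{p_i}$, never $\bar{q_j}$). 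So no propositional variable occurs negated in any conclusion, which is exactly the monotonicity requirement for one-step tableau rules; the case of $(P)$ is identical with $k$ in place of $0$. I do not expect any real obstacle here --- the only mild subtlety is keeping the correspondence between the sign of a coefficient and which literal ($p_i$ versus $\bar{p_i}$) it is attached to straight, since in the rule schemas some variables enter as $\bar{q_j}$ with positive coefficient $s_j$.
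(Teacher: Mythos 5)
Your proposal is correct and follows essentially the same route as the paper: assume $r_i<0$ and $p(i)=0$, remove $i$ from the domain of $p$, and derive a contradiction with minimality. The paper's version merely asserts that all total extensions of the restricted valuation still satisfy $f=1$, whereas you spell out the coordinatewise monotonicity of the linear form that justifies this, and you also make explicit the sign bookkeeping for the $\bar{q_j}$ terms in $(G)$ and $(P)$ that the paper leaves implicit.
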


\begin{proof}
We only demonstrate the first item, the second is analogous.
Suppose, for a contradiction, that $p(i) = 0$ and $r_i < 0$. Then,
removing $i$ from the domain of definition of $p$ yields a function
$q: I \pto \lbrace 0, 1 \rbrace$ such that all total extensions $e$
of $q$ still satisfy $f(e) = 1$, contradicting the minimality of
$p$.
\end{proof}

It is easy to see that every $\Lambda$-structure admits a one-step
sound and complete set of one-step tableau rules. While this
demonstrates that our approach is applicable to all conceivable
$\Lambda$-structures, the challenge of finding a tractable
representation of the rule set remains, which is crucial for a
complexity analysis.
An adaptation of \cite[Theorem 17]{Schroder:2006:FMC} to the setting
of monotone tableau rules shows that one-step complete rule sets
always exist.

\begin{prop} \label{propn:mon-existence}
Every monotone $\Lambda$-structure admits a one-step tableau sound and one-step tableau complete set of monotone tableau rules.
\end{prop}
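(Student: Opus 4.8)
The strategy is the standard "canonical rule set" construction, adapted from \cite[Theorem 17]{Schroder:2006:FMC} to the monotone setting. Given a monotone $\Lambda$-structure $T$, I would take as rules \emph{all} pairs $\Gamma_0 / \Gamma_1, \dots, \Gamma_n$ (over some fixed countable supply $V$ of variables) that are one-step tableau sound with respect to $T$ and satisfy the syntactic side-conditions of a monotone tableau rule — in particular, that no conclusion $\Gamma_i$ contains a negated variable and that every variable occurs at most once in $\Gamma_0$. Soundness of this rule set is then immediate by construction, so the entire content of the proof is one-step tableau \emph{completeness}.

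**Steps.** First I would fix a finite modalised sequent $\Gamma \subseteq (\Lambda \cup \bar\Lambda)(V)$, a set $X$, and a valuation $\tau \colon V \to \Pow(X)$, and assume for contradiction that $\lsem \Gamma \rsem_{TX,\tau} = \emptyset$; I must produce a sound rule $\Gamma_0 / \Gamma_1, \dots, \Gamma_n$ and a substitution $\sigma$ with $\Gamma_0\sigma \subseteq \Gamma$ such that $\lsem \Gamma_i\sigma \rsem_{X,\tau} = \emptyset$ for every $i$. Without loss of generality the variables occurring in $\Gamma$ are $p_1, \dots, p_k$, pairwise distinct (this is where the "at most once" normalisation is harmless), so I may take $\Gamma_0 = \Gamma$ and $\sigma$ the identity. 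The conclusions $\Gamma_1, \dots, \Gamma_n$ are defined semantically: let them enumerate \emph{all} sequents $\Delta \subseteq \{p_1, \dots, p_k\}$ (positive literals only!) such that $\lsem \Delta \rsem_{X',\tau'} = \emptyset$ fails — no, more precisely, I want the conclusions to capture exactly the "bad" boolean combinations, so I would instead define $\{\Gamma_1, \dots, \Gamma_n\}$ to be the collection of \emph{minimal} positive sequents $\Delta$ such that, in \emph{every} set $Y$ with valuation $\kappa \colon V \to \Pow(Y)$ for which $\lsem \Gamma \rsem_{TY,\kappa} = \emptyset$, one has $\lsem \Delta \rsem_{Y,\kappa} = \emptyset$. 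Equivalently, working at the "generic" one-step model: the premise $\Gamma$, read as a subset of $T(\Pow(\{p_1,\dots,p_k\}))$ via the lifting applied to the obvious valuation, being empty forces a propositional consequence expressible with positive literals only, and monotonicity of $T$ is exactly what guarantees such a positive witness exists (a non-monotone structure could force $p_i \vee \bar p_i$-style conclusions, which monotone rules forbid). I would verify (i) that this rule is one-step sound — by its very definition it only fires when the premise is unsatisfiable in the relevant generic sense, and soundness transfers along substitutions by naturality of the liftings; and (ii) that it fires on our $\Gamma, \tau, X$ with all conclusions empty — which is immediate since $\lsem\Gamma\rsem_{TX,\tau} = \emptyset$ by assumption, so $X, \tau$ is one of the "bad" pairs and hence $\lsem \Gamma_i\rsem_{X,\tau} = \emptyset$ for each conclusion $\Gamma_i$.

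**The obstacle.** The delicate point is showing that when $\lsem\Gamma\rsem_{TX,\tau} = \emptyset$, the induced propositional constraint on $X$ can always be packaged into conclusions using \emph{only positive literals}, as required for a monotone rule. This is where monotonicity of the $\Lambda$-structure is essential and where the adaptation of \cite[Theorem~17]{Schroder:2006:FMC} does real work: in the non-monotone case one closes the conclusion set under both literals, but here I must argue that if $\lsem\Gamma\rsem_{TX,\tau}=\emptyset$ then already $\bigcap_{\hearts(p_1,\dots,p_n)\in\Gamma}\lsem\hearts\rsem_X(\tau(p_1),\dots,\tau(p_n)) = \emptyset$ is witnessed by a \emph{downward-closed} reason, so that shrinking the $\tau(p_i)$'s keeps the intersection empty, which lets me phrase the conclusion as a positive sequent. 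Concretely I would pass to the canonical one-step situation where $Y = \Pow(V_0)$ for $V_0$ the variables of $\Gamma$ and $\kappa(p) = \{S \subseteq V_0 \mid p \in S\}$, observe that $\lsem\Gamma\rsem_{TY,\kappa}$ is an up-set (by monotonicity), hence its being empty is a condition preserved under the relevant boolean operations, and read off the conclusions as the minimal up-set-complements, i.e.\ positive sequents. Once this positivity lemma is in hand, the rest is bookkeeping: finiteness of the conclusion set follows because only finitely many sequents over $V_0$ exist, the side-conditions are satisfied by construction, and soundness plus the firing property give the contrapositive of one-step completeness.
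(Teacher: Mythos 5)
Your opening move (take $\Rules$ to be \emph{all} one-step sound monotone rules, then exhibit one member that matches $\Gamma$ and has all conclusions unsatisfiable) is exactly the paper's, but the way you build the conclusions of the witnessing rule destroys its soundness, and soundness is the half you then cannot supply. You define $\Gamma_1,\dots,\Gamma_n$ to be the (minimal) positive sequents $\Delta$ that are unsatisfiable in \emph{every} one-step model $(Y,\kappa)$ with $\lsem\Gamma\rsem_{TY,\kappa}=\emptyset$. This makes your step (ii) trivial, but the resulting rule need not be sound: soundness requires that \emph{any} $(Y,\kappa)$ refuting all the $\Gamma_i$ also refutes $\Gamma$, and your definition only gives the converse implication. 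Concretely, take $TX=\Pow(X)$ with the binary lifting $\lsem\hearts\rsem_X(U_1,U_2)=\lbrace V\subseteq X \mid V\cap U_1\neq\emptyset \mbox{ and } V\cap U_2\neq\emptyset\rbrace$ (monotone and natural) and $\Gamma=\lbrace\hearts(p,q)\rbrace$. Then $\lsem\Gamma\rsem_{TX,\tau}=\emptyset$ iff $\tau(p)=\emptyset$ \emph{or} $\tau(q)=\emptyset$, and the only positive sequent unsatisfiable in all such models is $\lbrace p,q\rbrace$; but the rule $\hearts(p,q)/p;q$ is unsound, since $\kappa(p)=\lbrace a\rbrace$, $\kappa(q)=\lbrace b\rbrace$ refutes the conclusion without refuting the premise. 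In general the class of one-step models refuting $\Gamma$ is a \emph{union} of positive-sequent conditions rather than an intersection, so no single rule whose conclusions are chosen uniformly over all refuting models can work; your rule is then simply not an element of $\Rules$, and the completeness argument stops there.

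The paper sidesteps this by tailoring the conclusions to the \emph{specific} witnessing pair $(X,\tau)$: the conclusions are all (not just the minimal) $\Delta\subseteq V_\Gamma$ with $\lsem\Delta\rsem_{X,\tau}=\emptyset$, so that different witnesses contribute different rules to $\Rules$. The whole content of the proof is then to show that this particular rule is sound, and the argument you are missing is the transfer from an arbitrary refuting model back to the fixed one: given $(Y,\rho)$ with $\lsem\Gamma_i\rsem_{Y,\rho}=\emptyset$ for all $i$, one constructs $f\colon Y\to X$ with $\rho(p)\subseteq f\inv(\tau(p))$ pointwise --- if no admissible value $f(y)$ existed, the sequent $\lbrace p_x\mid x\in X\rbrace$ would be unsatisfiable at $(X,\tau)$, hence a conclusion, yet satisfied by $y$ --- and then concludes $\lsem\Gamma\rsem_{TY,\rho}\subseteq(Tf)\inv(\lsem\Gamma\rsem_{TX,\tau})=\emptyset$ by monotonicity plus naturality. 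Your proposed substitute, that $\lsem\Gamma\rsem_{TY,\kappa}$ ``is an up-set'' of $TY$, is not meaningful in this framework: $TY$ carries no order for a general functor, and monotonicity here is monotonicity in the arguments $U_i\subseteq Y$, not in $TY$. Note also that restricting to \emph{minimal} conclusions would break even the correct construction, since the sequent $\lbrace p_x\mid x\in X\rbrace$ produced above need not be minimal.
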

\begin{proof}
Suppose that $T$ is a monotone $\Lambda$-structure. We show that
there exists a set $\Rules$ of monotone tableau rules so that
$\Rules$ is one-step tableau sound and one-step tableau complete for $T$, essentially by
showing that the set of all monotone one-step sound rules is indeed one-step
complete. We let $\Rules$ consist of all monotone tableau rules
$\Gamma_0 / \Gamma_1, \dots, \Gamma_n$ that satisfy
\[ \lsem \Gamma_1 \rsem_{X, \tau} = \dots = \lsem \Gamma_n \rsem_{X,
\tau} = \emptyset \implies \lsem \Gamma_0 \rsem_{TX, \tau} =
\emptyset \]
for all sets $X$ and valuations $\tau: V \to \Pow(X)$. We claim that
$\Rules$ is one-step tableau sound and one-step tableau complete.

First, for one-step tableau soundness, 
suppose that $\tau: V \to \Pow(X)$ is given and $\lsem \Gamma
\rsem_{TX, \tau} \neq \emptyset$ for some $\Gamma \subseteq (\Lambda
\cup \bar\Lambda)(V)$. For $\Gamma_0 / \Gamma_1,
\dots, \Gamma_n \in \Rules$ and a renaming $\sigma: V \to V$ such
that $\Gamma_0 \sigma \subseteq \Gamma$, we have to show that 
$\lsem \Gamma_i \sigma \rsem_{X, \tau} \neq \emptyset$ for some $1
\leq i \leq n$. Assume, for a contradiction, that $\lsem \Gamma_i
\sigma\rsem_{X, \tau} = \emptyset$ for all $1 \leq i \leq n$. Then, for
$\tau'(p) = \tau(\sigma(p))$ we have $\lsem \Gamma_i \rsem_{X,
\tau'} = \emptyset$ for all $1 \leq i \leq n$ so that
$\lsem \Gamma_0 \sigma \rsem_{TX, \tau} = \lsem \Gamma_0 \rsem_{TX,
\tau'} = \emptyset$, contradicting $\lsem \Gamma_0 \sigma \rsem_{TX, \tau}
\supseteq \lsem \Gamma \rsem_{TX, \tau} \neq \emptyset$.

For one-step tableau completeness, we directly show the contrapositive. 
Assume that $\lsem \Gamma \rsem_{TX, \tau} = \emptyset$ for some set
$X$ and some valuation $\tau: V \to \Pow(X)$. We show that, in this
case, there exists $\Gamma_0 / \Gamma_1 \dots, \Gamma_n \in \Rules$
and $\sigma: V \to V$ such that $\Gamma_0 \sigma \subseteq \Gamma$
and $\lsem \Gamma_i \sigma \rsem_{X, \tau} = \emptyset$.

So suppose that $\lsem \Gamma \rsem_{TX, \tau} = \emptyset$ and
consider the set
\[ S = \lbrace \Delta \subseteq V_\Gamma \mid \lsem \Delta \rsem_{X,
\tau} = \emptyset \rbrace \]
where $V_\Gamma$ denotes the set of propositional variables occurring
in $\Gamma$.
If we let $S = \lbrace \Gamma_1, \dots, \Gamma_n \rbrace$, it
suffices to show that $\Gamma / \Gamma_1, \dots, \Gamma_n \in
\Rules$. So suppose $\rho: V \to \Pow(Y)$ is a valuation such that
$\lsem \Gamma_1 \rsem_{Y, \rho}  = \dots = \lsem \Gamma_n \rsem_{Y,
\rho} = \emptyset$. We show that $\lsem \Gamma \rsem_{TY, \rho} =
\emptyset$. To this effect, we claim that there exists a function
$f: Y \to X$ such that $y \in \rho(p) \implies f(y) \in \tau(p)$ for
all $p \in V_\Gamma$. For if not, there exists $y \in Y$ for which a
suitable $f(y)$ cannot be found, i.e.~for all $x \in X$ we may find $p_x \in
V_\Gamma$ such that $x \notin \tau(p_x)$ but $y \in \rho(p_x)$. For
the sequent $\Delta = \lbrace p_x \mid x \in X \rbrace$ we then
obtain $\lsem \Delta \rsem_{X, \tau} = \emptyset$ whence $\Delta \in
S$ but $y \in \lsem \Delta \rsem_{Y, \rho}$, contradicting $\lsem
\Gamma_i \rsem_{Y, \rho} = \emptyset$ for all $i = 1, \dots, n$.

By construction, the function $f$ satisfies $\rho(p) \subseteq
f\inv(\tau(p))$ for all $p \in V_\Gamma$, which gives, by
monotonicity of the $\Lambda$-structure $T$, that
\[ \lsem \Gamma \rsem_{TY, \rho} \subseteq \lsem \Gamma \rsem_{TY,
f\inv \circ \tau} = (Tf)\inv (\lsem \Gamma \rsem_{TX, \tau}) =
\emptyset \]
as required, where the second equality is by naturality of predicate liftings.
\end{proof}

\noindent
In the examples, we can find concrete (and tractable)
representations of one-step complete rule sets.
\begin{prop} \label{propn:one-step-tab-complete}
The rule sets introduced in Example~\ref{example:rules} are both one-step
tableau sound and one-step tableau complete with respect to the corresponding structures
defined in Example~\ref{example:logics}.
\end{prop}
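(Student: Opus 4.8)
The statement comprises two assertions --- one-step tableau soundness and one-step tableau completeness of each rule set of Example~\ref{example:rules} with respect to the matching structure of Example~\ref{example:logics} --- and the natural plan is to prove both by a case distinction over the five logics. In each case I would first unfold the interpretation $\lsem\Gamma\rsem_{TX,\tau}$ of a modalised sequent using the concrete predicate liftings given in Example~\ref{example:logics}, turning it into an explicit condition on $TX$; and, for the conclusions, I would use Lemma~\ref{lemma:cf} to translate the linear-inequality notation $\sum_j s_j\bar{q_j} - \sum_i r_i p_i < k$ into the statement that a point $x$ is excluded from every conclusion iff a corresponding linear inequality over the bits $\cf_{\tau(p_i)}(x)$ holds. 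One-step tableau soundness is then the routine direction: given a rule instance $\Gamma_0\sigma\subseteq\Gamma$ all of whose conclusions are unsatisfiable under $\tau$, one shows directly, by summing the point-wise inequality against a hypothetical witness in $\lsem\Gamma\rsem_{TX,\tau}$ and invoking the rule's side condition, that no such witness exists. One-step tableau completeness is the substantive direction: assuming $\lsem\Gamma\rsem_{TX,\tau}=\emptyset$ one must produce \emph{some} matching rule instance, all of whose conclusions are empty under $\tau$.

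For the propositional-style logics the completeness direction is elementary. For $(K)$: if $t\in\lsem\Diamond p_0;\Box p_1;\dots;\Box p_n\rsem_{TX,\tau}$ then $t\cap\tau(p_0)\neq\emptyset$ and $t\subseteq\tau(p_i)$ for $i\geq 1$, so any element of $t\cap\tau(p_0)$ witnesses $\lsem p_0;\dots;p_n\rsem_{X,\tau}\neq\emptyset$; conversely such a witness $x$ makes $\{x\}$ a successor structure satisfying the premise, which gives soundness. The rule $(M)$ for monotone neighbourhood frames and $(C_1)$, $(C_2)$ for coalition logic are handled in the same spirit from the explicit shape of $\lsem\Box\rsem$, $\lsem[C]\rsem$ and their duals; here the side conditions --- pairwise disjointness of the $C_i$, and $C_i\subseteq D$ for $(C_2)$ --- are precisely what is needed to merge the local strategy witnesses for the individual modal literals into a single game frame (and conversely to decompose one). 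Repeated variables and the ``each variable at most once in the premise'' restriction are absorbed throughout by a renaming $\sigma$, using monotonicity of the liftings.

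The heart of the proof is the arithmetic underlying $(G)$ and $(P)$, which I would treat by reducing to a finite linear feasibility problem. Take $\Gamma$ to contain (after renaming) exactly the literals $\langle k_i\rangle p_i$ ($1\leq i\leq n$) and $[l_j] q_j$ ($1\leq j\leq m$). Partition $X$ into the finitely many atoms cut out by the sets $\tau(p_i),\tau(q_j)$; then $\lsem\Gamma\rsem_{TX,\tau}=\emptyset$ says exactly that the system $\sum_{x\in\tau(p_i)}f(x)\geq k_i+1$ $(\forall i)$, $\sum_{x\in X\setminus\tau(q_j)}f(x)\leq l_j$ $(\forall j)$ has no solution $f\in\Bag(X)$, i.e.\ no solution in non-negative integers indexed by atoms. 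An integer Farkas/Gordan-type argument then yields positive integers $r_i,s_j$ with $\sum_i r_i(k_i+1)\geq 1+\sum_j s_j l_j$ --- i.e.\ precisely the side condition of $(G)$ --- and with $\sum_i r_i\cf_{\tau(p_i)}(x)\leq\sum_j s_j\cf_{X\setminus\tau(q_j)}(x)$ for every $x\in X$; by Lemma~\ref{lemma:cf} this last property says that no $x$ satisfies any sequent in the conclusion $\sum_j s_j\bar{q_j}-\sum_i r_i p_i<0$, so this is the required instance. Soundness of $(G)$ is the converse computation: from the same point-wise inequality, summing against a putative witness $f$ gives $\sum_i r_i(k_i+1)\leq\sum_i r_i\sum_{x\in\tau(p_i)}f(x)\leq\sum_j s_j\sum_{x\in X\setminus\tau(q_j)}f(x)\leq\sum_j s_j l_j$, contradicting the side condition. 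The probabilistic rule $(P)$ is handled by the same template over $\Rat_{\geq 0}$ in place of $\Nat$, additionally folding the normalisation constraint $\sum_{x}\mu(x)=1$ (for $\mu\in\Dist(X)$) into the system; the duality step is then the classical Farkas lemma, and the case split $n>0$ versus $n=0$ in the side condition of $(P)$ reflects whether the inhomogeneous or the homogeneous part of the system is responsible for infeasibility. Both arithmetic arguments are adaptations to the monotone tableau format of the rank-1 completeness proofs for graded and probabilistic modal logic in \cite{Schroder:2006:FMC}.

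The step I expect to be the main obstacle is exactly this arithmetic core. In the graded case one must ensure the separating certificate consists of strictly positive \emph{integers} (a merely rational Farkas certificate is insufficient, since the side condition of $(G)$ lives over $\Nat\setminus\{0\}$), so the proof has to exploit the special shape of the system --- only lower bounds on the $\langle k_i\rangle$-literals, only upper bounds on the $[l_j]$-literals, and unbounded coordinates in $\Bag(X)$ --- to pass from a rational to an integral certificate. In the probabilistic case the delicate points are the correct bookkeeping of the normalisation equation and of the degenerate boundary cases ($n=0$ or $m=0$, empty sums). Everywhere, Lemma~\ref{lemma:rule-monotone} is what guarantees that the conclusion sets $\sum_j s_j\bar{q_j}-\sum_i r_i p_i<k$ produced by these certificates are genuinely monotone tableau rules, and Lemma~\ref{lemma:cf} is the bridge between the inequality notation and the sequent semantics.
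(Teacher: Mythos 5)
Your proposal is correct in outline, but it takes a genuinely different route from the paper. The paper's proof is a two-line reduction: it observes that a set of monotone tableau rules is one-step tableau sound and complete iff the \emph{dual} proof rules --- obtained by negating and swapping premise and conclusion --- are one-step sound and strictly one-step complete in the sense of \cite{Schroder:2008:PBR}, and then simply cites the soundness and completeness results already established there for those dual rule sets, invoking Lemma~\ref{lemma:cf} together with Lemma~3.18 of \emph{op.~cit.} to bridge the linear-inequality notation in the graded and probabilistic cases. You instead verify the definitions directly for each of the five structures, re-deriving the arithmetic core of $(G)$ and $(P)$ from scratch via an integral Farkas/Gordan-type separation argument rather than importing it. Both approaches are sound: yours buys a self-contained proof and makes visible exactly where the side conditions $\sum_i r_i(k_i+1)\geq 1+\sum_j s_j l_j$ and the integrality of the certificate enter, which the paper leaves implicit behind the citation; the paper's dualization buys brevity and avoids having to redo the delicate step you correctly flag as the main obstacle (passing from a rational to a strictly positive \emph{integer} certificate, and handling the normalisation constraint and the $n=0$ boundary case for $(P)$). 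Be aware that this step is precisely the content of the cited rank-1 completeness lemmas, so if you intend your proof to be genuinely independent of \cite{Schroder:2006:FMC} and \cite{Schroder:2008:PBR} you would still owe a full proof of the integral separation claim; as a sketch that defers to those sources, your argument matches what the literature establishes. Your treatment of $(K)$, $(M)$, $(C_1)$ and $(C_2)$ is at the same level of detail as the paper's (namely, none), and your use of Lemma~\ref{lemma:cf} and Lemma~\ref{lemma:rule-monotone} to connect the inequality notation to sequent semantics and to monotonicity is exactly as intended.
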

\begin{proof}
It is straightforward to see that a set of monotone rules is one-step tableau
complete iff the set of proof rules arising by negating and swapping premise
and conclusion is one-step sound and strictly one-step complete  in
the sense of \cite{Schroder:2008:PBR}, where soundness and
completeness is established for the dual rule sets. The case of
graded and probabilistic modal logic additionally requires to invoke
Lemma~\ref{lemma:cf} together with Lemma 3.18 of \emph{op.~cit.}
\end{proof}

\noindent
We now introduce the set of tableau rules that we are using to
axiomatise the coalgebraic $\mu$-calculus. As to be expected, these
rules are parametric in a set of one-step rules, and we will
instantiate our results to the logics introduced in Example
\ref{example:logics} with help of the previous proposition. Along
with the tableau rules, we also introduce rule blueprints and rule
representations that will aid us in the definition of paths through
a tableau later on.
\begin{defi} \label{defn:rules}
The set $\Tab\Rules$ of \emph{tableau
rules} induced by a set $\Rules$ of one-step rules contains the
propositional and fixpoint rules, the modal rules $(\mod)$ and the
axiom (rule) below:
\[ (\land) \frac{\Gamma; A \land B}{\Gamma; A; B}  
\quad (\lor) \frac{\Gamma; A \lor B}{\Gamma; A \quad \Gamma; B} 
\quad
(\fix) \frac{\Gamma; \eta p. A}{\Gamma; A [p := \eta p.A]} 
\quad
(\mod) \frac{\Gamma_0 \sigma, \Delta}{\Gamma_1 \sigma \dots \Gamma_n
\sigma} 
\quad
(\mathsf{Ax}) \frac{\Gamma, A, \bar A}{} \]
Here $\Gamma_0 / \Gamma_1 \dots \Gamma_n \in \Rules$ and $\sigma:
\Var \to \FoRm(\Lambda)$ is a substitution satisfying $\sharp(
\Gamma_0) = \sharp(\Gamma_0 \sigma)$ where $\sharp$ denotes
cardinality.  The formulas $A \land B$, $A \lor B$ and $\eta p. A$ are
called \emph{principal} in the rules $(\land)$, $(\lor)$ and
$(\fix)$.
A \emph{rule blueprint}  is of the form $A \land
B$, $A \lor B$, $\eta p. A$, $(A, \bar A)$ or $(r, \sigma)$, where $r \in \Rules$ and $\sigma: V_0 \to
\FoRm(\Lambda)$ is a substitution satisfying $\sharp(
\Gamma_0) = \sharp(\Gamma_0 \sigma)$ and 
$V_0 \subseteq \Var$ is the set of variables occurring
in $r$. We write $\Bp(\Rules)$ for the
set of rule blueprints over the set $\Rules$ of one-step rules.
A \emph{rule representation} is a tuple $(\Gamma, \flat)$ where
$\Gamma \in \Seq(\Lambda)$ and $\flat$ is a rule blueprint that satisfies
\begin{enumerate}[$\bullet$]
\item $\flat \in \Gamma$ if $\flat$ is of the form $A \land B$, $A
\lor B$ or $\eta p. A$
\item $A,\bar A \in \Gamma$ if $\flat = (A,\bar A)$
\item $\Gamma_0 \sigma \subseteq \Gamma$ if $\flat = (r, \sigma)$
and $r = \Gamma_0 / \Gamma_1 \dots \Gamma_n$.
\end{enumerate}
Each rule representation $(\Gamma, \flat)$ induces a tableau rule
$\rho(\Gamma, \flat) \in \Tab\Rules$ given by
\begin{align*}
  \rho(\Gamma, A \land B) & = \frac{\Gamma}{A, B, \Gamma'} &
	\rho(\Gamma, A \lor B) & = \frac{\Gamma}{A, \Gamma' \quad B,
	\Gamma'} \\
	\rho(\Gamma, \eta p. A) & = \frac{\Gamma}{A[p := \eta p. A],
	\Gamma'} &
	\rho(\Gamma, (r, \sigma)) & = \frac{\Gamma}{\Gamma_1 \sigma \dots
	\Gamma_n \sigma}\\
        \rho(\Gamma, (A,\bar A)) & = \frac{\quad \Gamma \quad}{}
\end{align*}
where $\Gamma' = \Gamma \setminus \lbrace \flat \rbrace$ in the
first three clauses, and $r
= \Gamma_0 / \Gamma_1 \dots \Gamma_n$ in the fourth clause.
\end{defi}

\noindent
The restriction $\sharp(\Gamma_0 \sigma) = \sharp(\Gamma_0)$ on
instances of one-step rules ensures that the substitution does not
identify literals in the premise of a one-step rule, which implies
that only finitely many modal rules are applicable to any sequent.
Similarly, because of the restriction $\sharp(\Gamma_0 \sigma) = \sharp(\Gamma_0)$ on substitutions 
and on the size of the domain of such substitutions in rule representations, it is also easy to see that for any $\Gamma \in \Seq(\Lambda)$ the set of rule representations $(\Gamma,\flat)$ is finite.
This will enable us to deduce decidability, and indeed complexity
bounds later. We will, however, need to require that 
the set of modal rules is contraction closed in order to ensure completeness of the restricted calculus.

\begin{rem}\label{rem:contra_closed}
      Alternatively, we could also prove soundness and completeness for the tableau calculus
      without the restriction $\sharp(\Gamma_0 \sigma) = \sharp(\Gamma_0)$ and without
      requiring contraction closure for the set of modal rules. In this case, in order to obtain
       decidability, we would have to require  contraction-closedness of the modal rules.
      This is essential for proving that we can restrict the calculus to 
      (finitely many) instances $(r,\sigma)$ of modal rules
      with non-identifying substitutions $\sigma$.
\end{rem}

Our definition of rule blueprints and rule representations may seem
a bit bureaucratic at first sight, so some comments are in order. If
we understand a tableau as a two-player game where $\forall$ plays a
tableau rule and $\exists$ selects a conclusion, the winning
condition for $\exists$ stipulates that least fixpoints are not
unfolded infinitely often. This condition is formalised in terms of
the evolution of formulas along a path in a tableau, which in turn
necessitates that we can re-construct the rules applied to tableau
nodes. This is achieved by annotating each tableau node with a rule
blueprint. Together with the node label, the blueprint forms a rule
representation which in turn induces a rule. We use this mechanism
for two reasons:
\begin{enumerate}[$\bullet$]
\item for propositional rules and the fixpoint rule, the rule
blueprint records the principal formula, that we need to track to
define traces later. Moreover, we can distinguish between the different
conclusions of the induced rule, and
\item for modal rules, the rule blueprint is an unsubstituted
one-step rule, which allows us to track (unsubstituted)
propositional variables, which is again needed for the definition of
traces.
\end{enumerate}
The usefulness of the blueprints and rule representations will become clearer
in Definition~\ref{def:trace} where we define the set of traces through a tableau path.
We are now ready to introduce the notion of tableau that we will use
throughout the paper. 
As fixpoint rules generate infinite paths,
we formalise tableaux as finite, rooted graphs. As a consequence, 
\emph{closed} tableaux are  finitely represented proofs of the
unsatisfiability of the root formula.
\begin{defi}
A \emph{tableau} for a clean, guarded sequent $\Gamma \in \Seq(\Lambda)$ is a
finite, directed, rooted and labelled graph $(N, K, R, \ell, \alpha)$ where $N$ is
the set of nodes, $K \subseteq N \times N$ is the set of edges, $R$
is the root node and $\ell: N \to \Seq(\Gamma)$ is a labelling
function such that $\ell(R) =   \Gamma$ and 
$\alpha: N \pto \Bp(\Rules)$ is a partial function (that we think of
as an annotation) satisfying
\begin{sparitemize}
\item $\alpha(n)$ is defined iff there exists a tableau rule with premise
$\ell(n)$ iff $K(n) \neq \emptyset$.
\item if $\Gamma_0 / \Gamma_1 \dots \Gamma_n  = \rho
(\ell(n), \alpha(n))$ then $\lbrace \Gamma_1, \dots,
\Gamma_n \rbrace = \lbrace \ell(n') \mid n' \in K(n) \rbrace$
\end{sparitemize}
where  $K(n) = \lbrace n' \mid (n, n') \in K \rbrace$ and $\rho$ is
as in Definition \ref{defn:rules}.
\end{defi}
In other words, tableaux are sequent-labelled graphs where
a rule has to be applied at a node if the node label matches a rule
premise, and no rule may be applied otherwise.  The purpose of the
annotation $\alpha$ is to record which rule (if any) has been
applied at a particular node. To keep track of whether least
fixpoints are unfolded infinitely often, we record the 
\emph{unsubstituted} one-step rule (together with a substitution) at
modal nodes, as we need to track the evolution of formulas along
one-step rules, where propositional variables may become identified
by a substitution. Moreover,
it may be the case that
two different one-step rules generate the same rule instance: both
rules
$\hearts p / p$ and $\hearts p, \hearts q / p$ generate the
instance $\hearts A / A$. As we will be required to
traverse infinite loops in a tableaux to ensure that only greatest
fixponits are unfolded infinitely often, we need to ensure that the
identity of a rule does not change when nodes are encountered
multiple times. 

The reader might wonder why we make a distinction between 
nodes in a tableau and their labels. The technical reason for this
is that we need to run an automaton in parallel to the tableau, so
that the same sequent may be associated with different automata
states (see the definition of the tableau game in 
Section~\ref{sec:tabgame}). Informally speaking, we have to allow
for enough paths through a tableau to ensure completeness.
We can view a tableau as a strategy of $\forall$
in this tableau game, where  $\forall$ tries to prove that a given sequent is
not satisfiable. Accordingly, a closed tableau will correspond to a winning strategy for him
in the tableau game. An identification of nodes and sequents in a tableau
would mean that the corresponding strategy of $\forall$ in the tableau game would only depend
on the set of formulas with which a position of $\forall$ is labeled. 
We cannot guarantee, however, that $\forall$ has a winning strategy
of this special kind, even if he has some winning strategy. Therefore, in order to 
be able to represent arbitrary strategies of $\forall$ in the tableau game as
tableaux,  
we have to have the possibility to distinguish between nodes and their labels.
The only restriction we make is that the tableau graph is finite, i.e.~we only consider
strategies of $\forall$ with bounded memory.

Our goal is to show that a formula $A \in \FoRm(\Lambda)$ is
satisfiable iff no tableau for $A$ ever closes. In a setting without
fixpoints, a tableau is closed iff all leaves are labelled with
axioms.  Here we also need to consider 
infinite paths, and ensure that
only greatest fixpoints are unfolded infinitely often at the top
level of an infinite path. As in \cite{Niwinski:1996:GMC}, this
necessitates to consider the set of \emph{traces} through a given
tableau. Informally, a trace records the evolution (by application
of tableau rules) of a single formula through a tableau. Formally,
we associate binary relations with tableau rules, and traces arise
by sequencing these relations.
\begin{defi} \label{def:trace}\label{defn:tiles}
Suppose that $\Tab = (N, K, R, \ell, \alpha)$ is a 	tableau for
$\Gamma$. A
\emph{path} through $\Tab$ is a finite or infinite sequence
\[ \pi: n_0 \stackrel{c_0}\to n_1 \stackrel{c_1}\to n_2 \dots \]
where $n_0 = R$, $n_{i+1} \in K(n_i)$ and $c_i \in \Nat$ satisfying
that $\ell(n_{i+1})$ is the $c_i$-th conclusion of the rule
represented by $(\ell(n_i), \alpha(n_i))$. A path is called {\em complete} if it is infinite or if it ends at a node $n \in N$ with $K[n] = \emptyset$.

A \emph{trace} through a path $\pi$ is a finite or infinite sequence of
formulas $(A_0, A_1, \dots)$ such that $A_i \in \ell(n_i)$ and
$(A_i, A_{i+1}) \in \Tr ( \ell(n_i), \alpha(n_i), c_i )$
where the relations $\Tr ( \Gamma, \flat, i ) \subseteq
\FoRm(\Lambda) \times \FoRm(\Lambda)$ are given as follows:
\begin{enumerate}[$\bullet$]
\item $\Tr(\Gamma, A_1 \land A_2, 1) = \lbrace (A_1 \land A_2, A_1),
(A_1 \land
A_2, A_2) \rbrace \cup \Diag(\Gamma \setminus \lbrace A \land B \rbrace)$
\item $\Tr(\Gamma, A_1 \lor A_2, i) = \lbrace (A_1 \lor A_2, A_i)
\rbrace \cup \Diag(\Gamma \setminus \lbrace A_1  \lor A_2 \rbrace)$
for $i = 1, 2$.
\item $\Tr(\Gamma, \eta p. A, 1) = \lbrace (\eta p. A, A [p := \eta
p. A]) \rbrace \cup \Diag(\Gamma \setminus \lbrace \eta p. A
\rbrace)$
\item $\Tr(\Gamma, (r, \sigma), i) = \lbrace (\hearts(p_1, \dots,
p_n)\sigma, p_j \sigma) \mid \hearts(p_1, \dots, p_n) \in \Gamma_0,
p_j \in \Gamma_i \rbrace$\\ where $r = \Gamma_0 / \Gamma_1 \dots
\Gamma_n$.
\end{enumerate}
Here $\Diag(X) =  \lbrace (x, x) \mid x \in X \rbrace$ is the
diagonal on a set $X$. The triples $(\Gamma,\flat, i)$ where $(\Gamma,\flat)$ is a
rule representation, $i \in \Nat$ and $\rho(\Gamma,\flat)$, the rule represented
by $(\Gamma,\flat)$, has at least $i$ conclusions, are called
\emph{trace tiles}.
Finally, a tableau $\Tab$ with root node labelled by $\Gamma$
is \emph{closed}, 
if the end node of all finite
paths through $\Tab$ of maximal length that starts in the root node
is labelled with a tableau axiom,
and every infinite path starting
in the root node carries at least one bad trace with respect to
a parity function $\Omega$ for $\Gamma$.
\end{defi}

%
\noindent
Informally, a path through a tableau is a sequence of nodes,
together with the information which rule has been applied to nodes,
and we cannot have a path that ends in a node to which $(\mathsf{Ax})$ was
applied.
As for the construction of tableaux, the construction of traces
requires that we pick the same conclusion every time a node is
traversed. While in the instance $A \lor B, A, B / A, B$  of $(\lor)$,
both conclusions are identified, they are not equivalent from the
point of view of traces, as the `left' conclusion continues the
trace from $A \lor B$ to $A$ whereas the right conclusion continues
the same trace to $B$. This difficulty does not arise in
\cite{Niwinski:1996:GMC} where tableaux are formalised as
sibling-ordered trees, and the rule blueprints used here
serve essentially the same purpose. The traces through a path 
are calculated using the so-called trace tiles. A trace tile records which
rule has been applied in a node that is visited by the path and through which
of the successors of the node that path is continuing.
It should be noted that for $\Gamma \in \Seq(\Lambda)$ the set 
	\[ \Sigma_\Gamma = \{ (\Delta,\flat,i) \mid  (\Delta,\flat,i) \; \mbox{is a trace tile and } \Delta \in \Seq(\Gamma) \} \] 
is finite because, as remarked after 
Definition~\ref{defn:rules}, for each $\Delta \in \Seq(\Gamma)$ there are only finitely many 
rule representations $(\Gamma,\flat)$. We stress this fact, because later on we will use $\Sigma_\Gamma$ as alphabet of the parity automaton that is essential for the definition
of our tableau game. 

\begin{exa}
Assume that we have a tableau where
$(\lor)$ has been applied at the node labelled with $ A
\lor \mu p.B; C$ and $(\fix)$ has been applied at the node labelled
with $\mu p.  \Diamond B;C$. (We identify nodes and their labels
here for simplicity.)
Then the path
\[ A \lor \mu p.B; C \stackrel{2}{\longrightarrow} \mu p.
\Diamond B;C
\stackrel{1}{\longrightarrow} \Diamond B[p := \mu p. B];C \dots \]
supports the traces 
$(A \lor \mu p.\Diamond B, \mu p. \Diamond B, \Diamond B[p := \mu p.B], 
\dots)$  and 
$(C, C, C, \dots)$.
Note that there is
no trace on this path that starts with $A$.
\end{exa}
\noindent
We now continue the development of the general theory and first establish
soundness of the tableau calculus: satisfiable sequents cannot have
closed tableaux. This relies on 
Theorem \ref{thm:adequacy}, as a winning strategy for $\exists$ in
the model checking game can be used to construct a path through any
tableau that carries a bad trace.
\begin{thm}\label{thm:noclosedtab}
	Let $\Rules$ be a one-step tableau complete set of monotone rules for the modal similarity type $\Lambda$, and let $\Gamma \in \Seq(\Lambda)$ be clean and guarded. If $\Gamma$ is satisfiable in some
	model $M=(X,\gamma,h)$, then no closed tableau for $\Gamma$ exists.
\end{thm}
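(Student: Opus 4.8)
The strategy is a proof by contradiction resting on the model-checking game. Suppose $\Gamma$ is satisfiable, say $M,x_0\models A$ for all $A\in\Gamma$, and suppose towards a contradiction that $\Tab=(N,K,R,\ell,\alpha)$ is a closed tableau for $\Gamma$. By Theorem~\ref{thm:adequacy}, $\exists$ wins $\MC_\Gamma(M)$ from $(A,x_0)$ for every $A\in\Gamma$, so by history-free determinacy of parity games (Theorem~\ref{fact:paritygames}) I fix a positional winning strategy $s$ for $\exists$; writing $W$ for $\exists$'s winning region, Theorem~\ref{thm:adequacy} gives $(B,y)\in W\iff M,y\models B$ for all $B\in\Cl(\Gamma)$ and $y\in X$.

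The heart of the argument is to build, by induction on $i$, a complete path $\pi\colon n_0\xrightarrow{c_0}n_1\xrightarrow{c_1}\cdots$ with $n_0=R$, together with states $x_i\in X$, maintaining the invariant $M,x_i\models A$ for all $A\in\ell(n_i)$, and to do so in such a way that \emph{every} trace along $\pi$ replays, step for step, a play of $\MC_\Gamma(M)$ consistent with $s$. At a leaf the construction stops; otherwise I case on the rule $\rho(\ell(n_i),\alpha(n_i))$. The axiom rule never applies, since by the standard fact $\lsem\bar A\rsem_M=X\setminus\lsem A\rsem_M$ no satisfiable sequent contains a complementary pair. For $(\land)$ and $(\fix)$ I take the unique conclusion with $x_{i+1}=x_i$ (both conjuncts hold at $x_i$ because $(A\land B,x_i)\in W$ is a $\forall$-node, and unfolding preserves truth sets); the associated trace steps are $\forall$-moves, resp.\ $\exists$'s forced move, in $\MC_\Gamma(M)$. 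For $(\lor)$ with principal $A_1\lor A_2$ I choose the conclusion index $c_i$ with $s(A_1\lor A_2,x_i)=(A_{c_i},x_i)$, so the trace step from $A_1\lor A_2$ is exactly $s$'s move. For the modal rule with blueprint $(r,\sigma)$, $r=\Gamma_0/\Gamma_1\cdots\Gamma_m$: each $(\hearts(\vec q)\sigma,x_i)$ with $\hearts(\vec q)\in\Gamma_0$ lies in $W$, so $s$ prescribes a legal move to some $(\hearts(\vec q)\sigma,\vec U^{\hearts,\vec q})$, and since $s$ wins, $\gamma(x_i)\in\lsem\hearts\rsem_X(\vec U^{\hearts,\vec q})$ and $U^{\hearts,\vec q}_k\subseteq\lsem\sigma q_k\rsem_M$. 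Because every propositional variable occurs at most once in $\Gamma_0$, the valuation $\tau$ with $\tau(q_k)=U^{\hearts,\vec q}_k$ (taking $\hearts(\vec q)\in\Gamma_0$ the unique modal formula in which $q_k$ occurs) is well defined and satisfies $\gamma(x_i)\in\lsem\Gamma_0\rsem_{TX,\tau}$; one-step tableau completeness of $\Rules$ then yields a conclusion index $j$ with $\lsem\Gamma_j\rsem_{X,\tau}=\bigcap_{p\in\Gamma_j}\tau(p)\neq\emptyset$. I put $c_i=j$, pick $x_{i+1}$ in this intersection and $n_{i+1}\in K(n_i)$ with $\ell(n_{i+1})=\Gamma_j\sigma$; the invariant persists since $x_{i+1}\in\tau(p)\subseteq\lsem\sigma p\rsem_M$ for $p\in\Gamma_j$, and each modal trace step from $\hearts(\vec q)\sigma$ to $\sigma q_k$ (with $q_k\in\Gamma_j$) is realized in $\MC_\Gamma(M)$ by $\exists$'s $s$-move followed by the legal $\forall$-move to $(\sigma q_k,x_{i+1})$, legal precisely because $x_{i+1}\in\tau(q_k)=U^{\hearts,\vec q}_k$.

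Now I conclude. If $\pi$ is finite it ends at a leaf whose label is satisfiable, hence not a tableau axiom, contradicting closedness. So $\pi$ is infinite, and closedness supplies a trace $\chi=(A_0,A_1,\dots)$ along $\pi$ that is bad for some parity function $\Omega$ of $\Gamma$. By the construction, $\chi$ replays a play $P$ of $\MC_\Gamma(M)$ consistent with $s$ (when $A_i$ is merely a side formula of a propositional rule, $\chi$ and $P$ both stall). Guardedness forces modal rules to occur infinitely often along any infinite path of $\Tab$, and a trace can survive a modal node only at a modal formula (inspection of the relations $\Tr$), so $\chi$ is never eventually constant and $P$ is an infinite play; being $s$-consistent, $P$ is won by $\exists$. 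Finally $P$ visits exactly the positions $(A_i,x_i)$, of priority $\Omega(A_i)$, interspersed with priority-$0$ intermediate modal positions, so the greatest priority occurring infinitely often is the same in $P$ and in $\chi$; since that value is even for $P$ (won by $\exists$) but odd for $\chi$ (bad), we have a contradiction. Hence no closed tableau for $\Gamma$ exists.

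I expect the genuinely delicate points to be (a) the use of the at-most-one-occurrence condition on $\Gamma_0$, which is exactly what lets $s$'s separate modal choices be reassembled into a single coherent valuation $\tau$ feeding one-step completeness, and (b) the appeal to guardedness to rule out degenerate (eventually constant) traces and thereby force the companion $\MC_\Gamma(M)$-play to be genuinely infinite; matching up the priorities of $\chi$ and $P$, and the bookkeeping with rule blueprints so that trace steps correspond to game moves, should then be routine.
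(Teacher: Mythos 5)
Your proof follows essentially the same route as the paper's: both build a complete path through the alleged closed tableau together with a sequence of satisfying states, guided by $\exists$'s history-free winning strategy in the model-checking game (Theorem~\ref{thm:adequacy}), use the one-step condition on $\Rules$ at modal nodes (with the valuation $\tau$ assembled from $\exists$'s choices, well defined by the single-occurrence condition on $\Gamma_0$) to find a satisfiable conclusion, and conclude that every trace along the path lifts to a strategy-consistent play of $\MC_\Gamma(M)$ and hence cannot be bad. Your explicit treatment of why the companion play is genuinely infinite (via guardedness) and of the matching of priorities between trace and play spells out points the paper leaves implicit, but the argument is the same.
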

\begin{proof} 
Consider a model $\model=(X,\gamma,h)$ and $x \in X$ such that $\model, x
\models \Gamma$ and let
$\Tab=(N,K,R,\ell,\alpha)$ be a
tableau for $\Gamma$.  As $\model,x \models
\Gamma$, Theorem~\ref{thm:adequacy} implies that
$\exists$ has a history-free winning strategy $g$ in
$\MC_\Gamma=\MC_\Gamma(\model)$ from all positions $(B,x)$ of the game
board with $B \in  \Gamma$. 
We now establish that
there exists a complete path and an associated sequence of
model states satisfying the formulas on this path that can be
contracted to a play in the model checking game.
 More precisely, we
establish the existence of a path
$\pi=n_0 c_0 n_1 c_1 \ldots n_l c_l\ldots$ through $\Tab$
and a sequence $\chi = x_0 x_1 \ldots x_l \ldots$ of states that
satisfy
\begin{enumerate}[(i)] 
\item\label{path1} $n_0 = R$ and $x_0 = x$ and $\model,x_i \models \ell(n_i)$ whenever 
$n_i$ is defined,
\item\label{path2}
for each trace $\tau= B_0 B_1 \ldots B_l \ldots$ through
$\pi$ there exists a play $(A_0, y_0) (A_1, y_1) \dots$ (where we do not record
the positions that have subsets of the model as second component) 
that is
played according to $g$ and there is an increasing sequence $0 = s_0 < s_1 < 
\dots$ of indices such that
$B_{s_1} B_{s_2} \dots = A_0 A_1 \dots$ and $y_0 y_1 \dots = x_{s_0}
x_{s_1} \dots$ 
where  $B_i = B_{s_j}$
 and $x_i =
x_{s_j}$ 
whenever $s_j \leq i  < s_{j+1}$.
\end{enumerate} 
Once this claim is established, it follows that $\Tab$ cannot be
closed: 
consider the path $\pi$ just constructed.
If $\pi$ is finite, the label $\Delta$ of the last node of $\pi$
cannot be a tableau axiom, as $\Delta$ is satisfiable by
construction. In case $\pi$ is infinite, every trace $\tau$ through
$\pi$ induces a $\MC_\Gamma$-play that is played according to
$\exists$'s winning strategy $g$ which implies that $\tau$ is not
bad. Taken together, this shows that $\Tab$ cannot be closed, so it
remains to establish the claim.

We construct the required path $\pi$ and the
sequence $\chi$ of states in a step-by-step fashion, 
starting at the root of the
tableau and at the state $x$, i.e.~we put $n_0 =  R$ and $x_0
= x$.  
So suppose that a path $\pi=n_0 c_0
\ldots c_{j-1} n_j$ and a sequence of model states $x_0 \ldots
x_j$ satisfying (\ref{path1}) and (\ref{path2}) above have already
been constructed, and $\pi$ is not yet complete. 
We distinguish
cases on the rule $r=\rho(\ell(n_j), \alpha(n_j))$ 
applied at (the last) node $n_j$.

We begin with the case where $r=\Delta;D_1\vee D_2/
\Delta;D_1 \quad \Delta;D_2$ is an instance of the disjunction rule. 
In this case, we can find tableau
nodes $m_1$ and $m_2$ with $\ell(m_1) = \Delta, D_1$ and $\ell(m_2)
= \Delta, D_2$ and 
$K(n_j)\supseteq \{m_1,m_2\}$. 
Suppose that
$g(D_1\vee D_2,x_j)=(D_i,x_j)$ for $i\in \{1,2\}$.
We 
put $c_j = i$, $n_{j+1} =  m_i$ and $x_{j+1}
= x_j$. 
Then the extended path $\pi' = \pi
c_j n_{j+1}$ and $x_0 x_1 \ldots x_j x_{j+1}$ satisfy condition
(\ref{path1}) of our claim. Obviously we have  $x_{j+1} \models
\Delta$. Furthermore, $x_{j+1} \models D_i$ as $(D_i,x_{j+1})$ is a winning position
of $\exists$ in $\MC_\Gamma$. Thus, as $ \ell(n_{j+1}) = \Delta;D_i$, we have
$x_{j+1} \models \ell(n_{j+1})$ as required.
To see that (\ref{path2}) also holds,
consider a trace $\tau' = B_0
\ldots B_j B_{j+1}$ through $\pi'$ and let $P=(A_0,y_0) (A_1,y_1)
\ldots (B_j,y_k)$ be the partial play of $\MC_\Gamma$ that is associated
to $\tau=B_0 \ldots B_j$ and
that is played according to $g$.
If $B_j \neq  D_1 \vee D_2$ we have $B_j = B_{j+1}$ and $P$ can be
chosen as the corresponding $\MC_\Gamma$-play for $\tau'$. Otherwise,
if $B_j = D_1 \vee D_2$,  we have $B_{j+1} = D_i$
and we extend $P$ to $(A_0,y_0) (A_1,y_1) \ldots (D_1\vee
D_2,y_k)(D_i,y_{k+1})$ with $y_{k+1} = y_k$.  This $\MC_\Gamma$-play
now satisfies condition (\ref{path2}) of our claim.

The cases where $r$ is an instance of the conjunction or fixpoint
rules  are similar (even easier, as these rules only have one
conclusion).  
So suppose that $r$ is an instance of a modal rule.
That is, $r=\rho(\ell(n_j),\alpha(n_j))$ with
$\alpha(n_j) = (r,\sigma)$ for some rule $\Delta/\Delta_1,
\cdots,\Delta_s$ with $\Delta \sigma \subseteq \ell(n_j)$
and $K(n_j)
\supseteq \{m_1,\ldots,m_s\}$ with
$\ell(m_i)=\Delta_i \sigma$ for $i \in \{1,\ldots,s\}$.
We define a valuation
$\tau: V_{\Delta} \to \Pow(X)$ on the set
$V_{\Delta}$ of variables occurring in $\Delta$ by stipulating
that $\tau(p) = U_k$ where the (unique) occurrence of $p = p_k$
is in the formula $\hearts(p_1,\ldots,p_r) \in
\Delta$ and
$g(D,x_{j})=(D, (U_1,\ldots,U_r))$ with $D=\hearts(\sigma(p_1),
\ldots,\sigma(p_r))$.
As $g$ is winning for $\exists$ in $\MC_\Gamma$ at position
$(D',x_j)$ for all $D' \in \Delta\sigma$, it follows that
$\gamma(x_j) \in \lsem \Delta \rsem_{TX,\tau}$,
which implies that $\lsem \Delta \rsem_{T X ,\tau} \not=
\emptyset$. 

By one-step tableau completeness, 
$\lsem \Delta_i  \rsem_{X,\tau} \neq \emptyset$ for some $i \in
\{1,\ldots,s\}$. We now extend $\pi$ to a path $\pi' = \pi \,
m_i$ and let $x_{j+1}$
be an arbitrary element of $\lsem \Delta_i \rsem_{X,\tau}$.
Now consider a trace $\tau'$ through $\pi'$ that ends in some formula $A$
with $A = \sigma(p_A)$ for some $p_A \in \Delta_i$. 
Then, by Definition \ref{def:trace}, $\tau'$ is of
the form 
$\tau A$ where $\tau$ is a trace through $\pi$ ending in
a formula of the form $B=
\hearts(p_1, \dots, p_n) \sigma$, $\hearts(p_1, \dots, p_n) \in
\Delta$, and $p_A =p_k$ for some $k \in \{1,\ldots,n\}$.

By assumption, there exists a
corresponding $\MC_\Gamma$-play, played according to $g$, that ends
in position $(\hearts(p_1, \ldots,p_n)\sigma,x_j)$. 
This
play can now be extended by $\exists$ moving to 
$g(\hearts(p_1,\ldots, p_n)\sigma,x_j)$ $=
(\hearts(p_1,\ldots, p_n)\sigma, (U_1,\ldots U_n))$.
We extend this play
letting
$\forall$ move to $(A,x_{j+1})$.  The latter move is legitimate as
$\sigma(p_k) = A$ and because
$x_{j+1} \in \lsem \Delta_i  \rsem_{X,\tau} = \bigcap_{p \in
\Delta_i} \tau(p) \subseteq \tau(p_k) \in \lbrace U_1, \dots,
U_n \rbrace$. 
It remains to note
that 
for every formula $A' \in \Delta_i$ there exists
a trace through $\pi'$ that ends in $A'$, and therefore
also a possibly partial $\MC_\Gamma$-play according to $\exists$'s
winning strategy $g$ ending at $(A',x_{j+1})$.  This implies that for
all $A' \in \Delta_i$, $(A',x_{j+1})$ is a winning position for
$\exists$ in $\MC_\Gamma$, and hence $\model,x_{j+1} \models A'$ for all
$A'\in \Delta_i$.  
This finishes the proof of the claim and hence that of the theorem.
\end{proof}
\begin{exa}
Consider the following formula of the coalitional $\mu$-calculus
\[[C]\nu X.(p \land \overline{[N]}X) \land [D]\mu Y.(\bar p \lor [D]Y)\]
stating that ``coalition $C$ can achieve that, from the next stage
onwards, $p$ holds irrespective of the strategies used by other agents, and coalition $D$ can ensure (through suitable strategies used in the long term) that $\bar p$ holds after some finite number of steps''. Here, we assume that $C,D \subseteq N$ are such that $C \cap D = \emptyset$. Define a parity map $\Omega$ for the above formula by 
$\Omega(\nu X.(p \land \overline{[N]}X)) = 2$, 
$\Omega(\mu Y.(\bar p \lor [D]Y)) = 1$ and
$\Omega(A) = 0$ otherwise.
The unsatisfiability of this formula is witnessed by the following closed tableau:
{\small
\[\xymatrix@R=-1.2ex{
   &  \\
  \underline{~[C] B \land [D] A~} 
  \\
  \underline{~[C] B \,;\, [D] A~} 
  &  \\
  \underline{\quad B \,;\, A \quad }  &  \\ 
	\underline{\quad p \land \overline{[N]} B\,;\, A \quad }  &  \\
	\underline{~ p \land \overline{[N]} B\,;\, \overline{p} \lor
	[D] A ~} &  \\
	\underline{\qquad \qquad p \,;\, \overline{[N]} B \,;\,
	\overline{p} \lor [D] A \qquad \qquad} &  \\
	          {\underline{{~p\,;\, \overline{[N]}B\,;\,\overline{p}~}} \qquad\qquad p\,;\,
	\overline{[N]} B\,;\, [D] A} 
	\ar `r `^l[ruuuu] [uuuu]  &   \\ 
}
\]}
\noindent
where $B =  \nu X.(p \wedge \overline{[N]}X)$,
$A = \mu Y.(\bar p \vee [D]Y)$ and where we omitted the annotation
$\alpha$ because in this case $\alpha$ can be easily deduced from the structure of the tableau. 
For example, the annotation for the root node is equal to $[C] B \land [D] A$ and for the child of the root
the annotation is equal to $(\frac{[C] p; [D] q}{p;q},\sigma)$ where $\sigma: \{p,q\} \to \{A,B\}$
is a substitution with $\sigma(p) = B$ and $\sigma(q) = A$.
  
Any finite path through this
tableau ends in an axiom,
and the only infinite path contains the trace
\[[C]B \wedge [D]A,\, [D]A,\, A,\, \overline{ A,\, \bar p \vee
[D]A,\, \bar p \vee [D]A,\,[D]A,\, A } \]
where the overlined sequence is repeated ad infinitum.
This trace is bad with respect to
$\Omega$, as $\Omega(A) = 1$ and $A$ is the only fixpoint formula
that occurrs infinitely often.
\end{exa}


\section{The Tableau Game}\label{sec:tabgame}

We now introduce the tableau game associated to a clean and guarded sequent $\Gamma$, and use it to characterise the (non-)existence of
closed tableaux in terms of winning strategies in the tableau game.
For the entire section, we fix a modal similarity type $\Lambda$ and a
set $\Rules$ of monotone tableau rules that is both one-step sound
and complete.
The idea underlying the tableau game is that $\forall$
intends to construct 
a closed tableau for a given set of formulas $\Gamma$, 
while $\exists$ wants to demonstrate
that any tableau constructed by $\forall$ contains
a path $\pi$ that violates the closedness condition.    
As  infinite plays of the
tableau game correspond to paths through a tableau, an infinite play
should be won by $\exists$ if it does not carry a bad trace, that
is, outermost least fixpoints are only unfolded finitely often.
To be able to see this tableau game as a parity game, we therefore need a
mechanism to detect  bad traces, and we employ parity word automata
for this task. Board positions in the ensuing tableau game will
therefore be sequent / automata state pairs, with the priority of a
board position being determined by the parity function of the
automaton. In particular, this will ensure that winning strategies of
$\exists$ in the tableau game do not generate bad traces. 
We start our discussion of the tableau game by recalling some basic
notions concerning parity word automata.
\begin{defi}
Let $\Sigma$ be a finite alphabet.  A \emph{non-deterministic parity
$\Sigma$-word automaton} is a quadruple $\bbA=(Q,a_I,\delta:Q \times
\Sigma \to \Pow (Q),\Omega)$ where $Q$ is the set of states of
$\bbA$, $a_I \in Q$ is the initial state, $\delta$ is the transition
function,  and $\Omega: Q \to \omega$ is a (parity) function. Given
an infinite word $\gamma= c_0 c_1 c_2 c_3 \ldots$ over $\Sigma$, a
run of $\bbA$ on $\gamma$ is a sequence $\rho= a_0 a_1 a_2 \ldots
\in Q^\omega$ such that $a_0 = a_I$ and for all $i \in \omega$ we
have $a_{i+1} \in \delta(a_i,c_i)$.  A run $\rho$ is {\em accepting}
if $\rho$ is not a bad sequence with respect to $\Omega$.  We say
that $\bbA$ \emph{accepts} an infinite $\Sigma$-word $\gamma$ if
there exists an accepting run $\rho$ of $\bbA$ on $\gamma$.  Finally
we call $\bbA$ {\em deterministic} if $\delta(a,c)$ is a one-element
set
for all $(a,c) \in Q \times \Sigma$.  \end{defi}
\noindent
In other words, a parity word automaton is deterministic if its
transition function has type $Q \times \Sigma \to Q$. To develop the
tableau game, we use parity word automata over trace tiles
(cf.~Definition~\ref{defn:tiles}) to detect the existence of bad
traces through infinite plays. We now establish the existence of
such automata,  together with a
bound on both the state set and the range of the parity function.
\begin{lemmadef}\label{lem:parityword}
Let $\Gamma \in \Seq(\Lambda)$ be a clean, guarded sequent, and let
$\Sigma_\Gamma$ denote the set of trace tiles
$(\Delta,\flat,i)$ with $\Delta \in \Seq(\Gamma)$.
There exists a deterministic parity $\Sigma_\Gamma$-word automaton
$\bbA_\Gamma = (Q_\Gamma, a_\Gamma, \delta_\Gamma, \Omega')$ such
that $\bbA_\Gamma$ accepts an infinite sequence $(t_0, t_1, \dots) \in
\Sigma_\Gamma^\infty$ of trace tiles iff there is no sequence of
formulas $(A_0, A_1, \dots)$ with $(A_i, A_{i+1}) \in \Tr(t_i)$
which is a bad trace with respect to a parity function for $\Gamma$.
Moreover, the index of $\bbA$ and the cardinality of $Q$ are bounded
by $p(|\Cl(\Gamma)|)$ and $2^{p(|\Cl(\Gamma)|)}$ for a polynomial
$p$, respectively.  Such an automaton $\bbA$ is called a
\emph{$\Gamma$-parity automaton}.
\end{lemmadef}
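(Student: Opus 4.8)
The plan is to construct $\bbA_\Gamma$ as the complement of a simpler non-deterministic automaton that \emph{guesses} a bad trace, and then determinise. Concretely, I would first build a non-deterministic parity $\Sigma_\Gamma$-word automaton $\bbA_\Gamma'$ whose state set is $\Cl(\Gamma) \cup \{\star\}$ (a ``current formula on the guessed trace'' plus a start state $\star$), whose transitions on reading a trace tile $t = (\Delta,\flat,i)$ send a formula $A$ to any $B$ with $(A,B) \in \Tr(t)$, and which from $\star$ may jump to any formula in $\Delta$ when reading the first tile (so that the guessed trace need not start at the root formula of the first tile, matching the clause $A_i \in \ell(n_i)$ rather than a fixed start). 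The parity/acceptance of $\bbA_\Gamma'$ should be set up so that a run is \emph{accepting} iff the guessed trace \emph{is} bad with respect to a fixed parity function $\Omega$ for $\Gamma$: since a trace is bad iff the highest priority $\Omega$-value occurring infinitely often is odd, and since along a trace the $\Omega$-values of the successive formulas are related by the subformula-compatibility of $\Omega$, this is itself a parity condition on the sequence of states, so $\bbA_\Gamma'$ is a genuine parity automaton with index bounded by the range of $\Omega$, hence (by the lemma preceding Definition~\ref{defn:paritymap}) by $|\Cl(\Gamma)|$, and with $|Q'| \le |\Cl(\Gamma)| + 1$.

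Next I would observe that a sequence of trace tiles $(t_0,t_1,\dots)$ carries \emph{some} bad trace iff $\bbA_\Gamma'$ accepts it, so the automaton we actually want, $\bbA_\Gamma$, is one recognising the complement language. I would therefore invoke the standard determinisation-and-complementation results for parity (equivalently Rabin/Streett) word automata: a non-deterministic parity word automaton with $n$ states and index $d$ can be converted into an equivalent deterministic parity word automaton with $2^{O(nd\log(nd))}$ states (or similar — e.g.\ via Safra/Piterman determinisation of the underlying Büchi-style automaton) and index polynomial in $nd$; complementation of a deterministic parity automaton is trivial (shift the priority function by one). Plugging in $n = |\Cl(\Gamma)|+1$ and $d \le |\Cl(\Gamma)|$ gives a deterministic $\bbA_\Gamma$ with $|Q_\Gamma| \le 2^{q(|\Cl(\Gamma)|)}$ and index $\le q(|\Cl(\Gamma)|)$ for a suitable polynomial $q$, which is exactly the claimed bound (absorbing the complementation shift into the polynomial). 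The equivalence statement in the lemma — $\bbA_\Gamma$ accepts $(t_0,t_1,\dots)$ iff no $(A_i)$ with $(A_i,A_{i+1})\in\Tr(t_i)$ is a bad trace — then follows immediately from the defining property of $\bbA_\Gamma'$ together with the correctness of determinisation and complementation.

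Two points need a little care. First, one must check that ``bad trace'' is genuinely expressible as a parity acceptance condition on the run of $\bbA_\Gamma'$ rather than something more complex: here I would use that along any trace the sequence of formulas, read through fixpoint-unfolding and modal tiles, only ever passes to subformulas-up-to-unfolding, so $\Omega$ of the trace formulas is eventually monotone-compatible in the precise sense captured by the third clause of Definition~\ref{defn:paritymap}; assigning each state $A \in \Cl(\Gamma)$ the priority $\Omega(A)$ (and $\star$ priority $0$) makes ``highest priority seen infinitely often is odd'' coincide with ``the trace is bad''. This is the one genuinely logic-specific step, and I expect it to be the main obstacle — everything else is an off-the-shelf automata-theoretic construction. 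Second, the guessing of the initial formula and the handling of finite tile sequences (a finite word carries no infinite trace and should be accepted) are routine bookkeeping: one makes $\bbA_\Gamma'$ reject every finite run and then complementation sends finite words into the language of $\bbA_\Gamma$, as desired. Since the lemma only asserts existence together with the stated cardinality and index bounds, this is all that is required.
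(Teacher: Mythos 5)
Your construction is essentially the paper's own proof: a non-deterministic parity automaton with state set $\Cl(\Gamma)\cup\{a_I\}$ guessing a bad trace via the $\Tr$-relations, followed by Safra/Piterman determinisation and complementation by shifting the parity function, with the same size and index bookkeeping. The only nit is that, under the paper's convention that a run is accepting iff it is \emph{not} bad, assigning state $A$ the priority $\Omega(A)$ directly gets the acceptance the wrong way round --- the paper uses $\Omega(A)+1$ so that the guessing automaton accepts exactly the words carrying a bad trace, which is what you state you intend; this is a one-line fix.
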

\begin{proof}
We start by constructing a 
non-deterministic parity automaton that
accepts $w = t_0 t_1 t_2\ldots \in \Sigma_\Gamma^\omega$ iff $w$
does contain a sequence $A_0 A_1 \ldots \in \Cl(\Gamma)^\omega$ that
is bad w.r.t. $\Omega$ and satisfies $(A_i,A_{i+1}) \in
\Tr(t_i)$ for all $i \in \Nat$. We put $Q' =
\Cl(\Gamma)\cup\{a_I \}$ where we assume that $a_I \notin
\Cl(\Gamma)$ and define
$\delta': Q'\times \Sigma_\Gamma \to \Pow(Q')$ by 
$\delta'(a_I,t) = \bigcup_{A \in \Gamma} \Tr(t)(A) \subseteq
\Cl(\Gamma)$ and $\delta'(B,t) = \Tr(t)(B)$ for
$B \in \Cl(\Lambda)$ and $t \in \Sigma_\Gamma$.  If we put
$\Omega''(a_I) = 0$ and $\Omega''(B) = \Omega(B) + 1$ where $\Omega$
is a parity function for $\Gamma$, the automaton
$\bbA' = 
(Q',a_I,\delta',\Omega'')$ accepts a
word $w$ if $w$ \emph{does} contain a bad trace starting in some $B \in
\Gamma$.  We now transform $\bbA'$ into an equivalent deterministic parity
automaton $\bbA_d'$ by means of the Safra construction
to obtain an automaton of size $2^{O(nk\log(nk))}$ whose
parity function 
has a range of order $\mathcal{O}(nk)$ where
$n = |Q'| +1$ and $k$ is the cardinality of the range of $\Omega$
(cf.~\cite{pite:nond06,safr:onth88}).  The automaton $\bbA_\Gamma$
is then 
obtained by complementing $\bbA_d$ which can be done by changing
the parity function, and neither increases the size nor
the index of the automaton.  This implies the claim as the
cardinality $k$ of the range of $\Omega$ is bounded by the
size $n$ of the state set $n$ of the initial automaton.
\end{proof}

\noindent
We thus arrive at the following notion of tableau game, where
$\Gamma$-parity automata are used to detect bad traces.

\begin{defi}\label{defn:tableau-game}
Let $\Gamma \in \Seq(\Lambda)$ be clean and guarded, and let
$\bbA=(Q,a_\Gamma,\delta,\Omega)$ be a $\Gamma$-parity automaton.
We denote the set of tableau rules
$\Gamma_0/\Gamma_1,\ldots,\Gamma_n \in \Tab\Rules$ for which
$\Gamma_0 \in \Seq(\Gamma)$ by $\Tab\Rules_\Gamma$ and write
$\Bp(\Gamma)$ for the set of rule blueprints $\flat$ such that
\begin{enumerate}[$\bullet$]
\item $\flat \in \Cl(\Gamma)$ if $\flat \in \FoRm(\Lambda)$ and $A \in
\Cl(\Gamma)$ if $\flat = (A, \bar A)$
\item $\Gamma_0 \sigma \in \Seq(\Gamma)$ if $\flat = (r, \sigma)$ and
$r = \Gamma_0 / \Gamma_1 \dots \Gamma_n$.
\end{enumerate}
The \emph{$\Gamma$-tableau game} is the parity game
$\game_\Gamma = (B_\exists,B_\forall,E,\Omega')$  where
$B_\forall = \Seq(\Gamma) \times Q$, 
$B_\exists  = \Seq(\Gamma) \times \Bp(\Gamma) \times Q$  and
the relation $E \subseteq B_\forall \times B_\exists \cup B_\exists
\times B_\forall$ that defines the allowed moves is given by $(b_1, b_2)
\in E$ if either
\begin{enumerate}[$\bullet$]
\item $b_1 = (\Delta, a) \in B_\forall$, $b_2 = (\Delta, \flat, a)$
and $(\Delta, \flat)$ is a rule representation
\item  $b_1 =  (\Delta, \flat, a)$, $b_2 = (\Delta', a')$ and there
exists $i \in \Nat$ such that $\Delta'$ is the $i$-th conclusion of
the rule represented by $(\Delta, \flat)$ and $a' = \delta(a,
(\Delta, \flat, i))$.
\end{enumerate}
The parity function 
$\Omega': (B_{\exists} \cup B_\forall) \to \omega$ of $\game_\Gamma$
is given by $\Omega'(\Delta, a) = \Omega(a)$ if $(\Delta, a) \in
B_\forall$ and $\Omega'(\Delta,\flat, a) = 0$.
\end{defi}

  If not explicitly stated otherwise, we will only consider
  $\game_\Gamma$-plays that start at $(\Gamma,a_\Gamma)$
  where $a_\Gamma$ is the initial state of the automaton $\bbA$. 
  In particular,
  we say that a player has a winning strategy in $\game_\Gamma$
  if (s)he has a winning strategy in $\game_\Gamma$ at position
  $(\Gamma,a_\Gamma)$.

The easier part of the correspondence between satisfiability and
winning strategies in $\game_\Gamma$ is proved by
constructing a closed tableau based on a winning strategy for
$\forall$. To show that this tableau is indeed closed, we need to
show that every infinite path carries at least one bad trace, which
follows from the fact that $\forall$ wins in the tableau game. To
make this formal, we consider a notion of path and trace also
relative to plays in  the tableau game.

\begin{defi}\label{def:gametrace}
	For a $\game_\Gamma$-play 
	\[\pi=(\Gamma^0,a_0)
	(\Gamma^0,\flat_0,a_0)
	(\Gamma^1,a_1)
	(\Gamma^1,\flat_1,a_1)
	\ldots
	(\Gamma^l,a_l)
	(\Gamma^l,\flat_l,a_l)
	\ldots\]
	a sequence 
          $\pi' = \Gamma^0 c_0 \Gamma^1
	c_1 \ldots \Gamma^l c_l \ldots$
	of sequents and natural numbers 
	is an
	{\em underlying path} of $\pi$
	if $t_i=(\Gamma^i,\flat_i,c_i)$ is a trace tile
	and $\delta(a_i,t_i)=a_{i+1}$ for all $i \in \Nat$. 
	A sequence of formulas $\alpha= A_0 A_1 A_2 \ldots \in 
	\FoRm(\Lambda)^\infty$
	is a {\em trace} through $\pi$ if there exists
	an underlying path $\pi' = \Gamma^0 c_0\Gamma^1
	c_1 \Gamma^2 \ldots$ of $\pi$ such that
	$(A_i,A_{i+1}) \in \Tr(\Gamma^i,\flat_i,c_i)$ for all $i \in \Nat$.
\end{defi}
An underlying path of a $\game_\Gamma$-play is very similar to the notion of a tableau path.
This is due to the correspondence between tableaux and strategies of $\forall$
in the tableau game. This correspondence is crucial in the proof of the following theorem.

\begin{thm}\label{thm:abelardwinning}
Let $\Gamma \in \Seq(\Lambda)$ be clean and guarded. 
If $\forall$ has a winning strategy in 
$\game_\Gamma$, then  $\Gamma$ has a closed $\Tab\Rules$-tableau. 
\end{thm}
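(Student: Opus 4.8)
The plan is to convert a winning strategy for $\forall$ in $\game_\Gamma$ into a closed tableau, using the observation made informally after Definition~\ref{defn:tableau-game} that tableaux are precisely the bounded-memory strategies of $\forall$ in the tableau game. Since $\Bp(\Gamma)$, $\Seq(\Gamma)$ and the automaton state set $Q$ are all finite, $\game_\Gamma$ is a finite parity game, so by Theorem~\ref{fact:paritygames} we may assume $\forall$ has a \emph{history-free} winning strategy $s: B_\forall \pto B_\exists$ from $(\Gamma, a_\Gamma)$; every $\forall$-position reachable from $(\Gamma, a_\Gamma)$ along an $s$-conforming play is then itself winning for $\forall$. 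From $s$ I would build $\Tab = (N, K, R, \ell, \alpha)$ by taking $N$ to be the set of $\forall$-positions reachable from $R := (\Gamma, a_\Gamma)$ along $s$-plays, $\ell(\Delta, a) := \Delta$, and, for a node $n = (\Delta, a)$ with $s(n) = (\Delta, \flat, a)$, setting $\alpha(n) := \flat$ whenever $\flat$ is not of the form $(A, \bar A)$ (in which case $n$ is a leaf and $\alpha(n)$ is left undefined); the edges out of a non-leaf $n$ are all pairs $(\Delta', a')$ where, for some $i$, $\Delta'$ is the $i$-th conclusion of $\rho(\Delta, \flat)$ (the rule represented by $(\Delta,\flat)$, cf.\ Definition~\ref{defn:rules}) and $a' = \delta(a, (\Delta, \flat, i))$. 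Since $N \subseteq B_\forall$, the graph $\Tab$ is finite.

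Verifying that $\Tab$ is a tableau for $\Gamma$ is routine but needs care: one checks $\ell(R) = \Gamma$, that $\ell$ indeed lands in $\Seq(\Gamma)$ (because $B_\forall = \Seq(\Gamma) \times Q$), that $\alpha(n)$ is defined exactly when some tableau rule has premise $\ell(n)$ and exactly when $K(n) \neq \emptyset$, and that $\{\ell(n') \mid n' \in K(n)\}$ equals the set of conclusions of $\rho(\ell(n), \alpha(n))$. All of these follow from the definition of the move relation $E$ in Definition~\ref{defn:tableau-game} together with the fact that a $\forall$-position with no legal move is lost by $\forall$ and hence unreachable under $s$; in particular a reachable node at which no rule applies cannot occur.

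It then remains to show $\Tab$ is closed. First, a maximal finite path through $\Tab$ ends at a node $n$ with $K(n) = \emptyset$; the only $\exists$-positions without a successor are those of the form $(\Delta, (A, \bar A), a)$ (every other rule blueprint induces a rule with at least one conclusion), and reaching such a position is the only way for $\forall$ to win a finite play, so $s(n) = (\ell(n), (A, \bar A), a)$ and hence $\ell(n)$ contains the complementary pair $A, \bar A$, i.e.\ it is a tableau axiom. Second, consider an infinite path $\pi : n_0 \stackrel{c_0}{\to} n_1 \stackrel{c_1}{\to} \dots$ with $n_i = (\Gamma^i, a_i)$ and $\flat_i = \alpha(n_i)$. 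By construction $\pi$ is the underlying path of the infinite $s$-play $(\Gamma^0, a_0)(\Gamma^0, \flat_0, a_0)(\Gamma^1, a_1)(\Gamma^1, \flat_1, a_1)\dots$, and, writing $t_i := (\Gamma^i, \flat_i, c_i) \in \Sigma_\Gamma$, the sequence $a_0 a_1 a_2 \dots$ is the unique run of the deterministic automaton $\bbA_\Gamma$ on the word $t_0 t_1 t_2 \dots$, since $a_0 = a_\Gamma$ and $a_{i+1} = \delta(a_i, t_i)$. Because $\Omega'$ gives priority $0$ to every $\exists$-position and priority $\Omega(a_i)$ to the $\forall$-position $(\Gamma^i, a_i)$, this $s$-play is bad with respect to $\Omega'$ if and only if the run $a_0 a_1 \dots$ is bad with respect to $\Omega$. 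As $s$ is winning for $\forall$ and the play is infinite, the play \emph{is} bad, so the run is bad and hence rejecting; thus $\bbA_\Gamma$ does not accept $t_0 t_1 \dots$. By the characterising property of $\bbA_\Gamma$ (Lemma and Definition~\ref{lem:parityword}, whose auxiliary automaton $\bbA'$ is built so that a witnessing trace starts in some formula of $\Gamma$) there is a sequence of formulas $(A_0, A_1, \dots)$ with $A_0 \in \Gamma = \ell(n_0)$ and $(A_i, A_{i+1}) \in \Tr(t_i) = \Tr(\ell(n_i), \alpha(n_i), c_i)$ that is a bad trace with respect to a parity function for $\Gamma$. Inspecting the four clauses defining $\Tr$ shows that membership in $\Tr(\ell(n_i), \alpha(n_i), c_i)$ forces $A_i \in \ell(n_i)$ and $A_{i+1} \in \ell(n_{i+1})$, so $(A_0, A_1, \dots)$ is a genuine trace through $\pi$, and it is bad. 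Hence every infinite path through $\Tab$ carries a bad trace, so $\Tab$ is a closed $\Tab\Rules$-tableau for $\Gamma$.

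I expect the main obstacle to be the bookkeeping in the second step: confirming that the $s$-strategy subgraph genuinely meets every clause in the definition of a tableau — in particular the interplay between axiom/leaf nodes and the conditions tying $\alpha$ to $K$ and to the conclusions of $\rho$ — and making precise the triple identification \emph{$s$-play $\leftrightarrow$ tableau path $\leftrightarrow$ input word for $\bbA_\Gamma$}. The conceptual content is by contrast short: it is exactly the chain ``the $s$-play is bad $\Rightarrow$ the $\bbA_\Gamma$-run on the trace tiles is rejecting $\Rightarrow$ (by the defining property of $\bbA_\Gamma$) $\pi$ carries a bad trace'', which uses only that $\bbA_\Gamma$ is deterministic and that all priorities in $\game_\Gamma$ originate from the automaton component.
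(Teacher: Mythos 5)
Your proposal is correct and follows essentially the same route as the paper's proof: extract a history-free winning strategy for $\forall$, read off its strategy subgraph as the tableau (labels by first projection, annotations by the blueprint component of $\forall$'s move), identify finite maximal paths with plays ending in axiom positions, and identify the automaton component of an infinite play with the deterministic run of $\bbA_\Gamma$ on the trace tiles, so that badness of the play yields rejection and hence a bad trace. The only (immaterial) deviations are that you take $N$ to be the reachable positions rather than all positions where the strategy wins, and that you make explicit a couple of bookkeeping points (the trace starting in $\Gamma$, membership $A_i \in \ell(n_i)$) that the paper leaves implicit.
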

\begin{proof}
  Suppose that $\forall$ has a winning strategy $f$ in $\game_\Gamma$
  at position $(\Gamma,a_\Gamma)$. As $\game_\Gamma$ is a parity game
  we can assume that $\forall$'s strategy is history-free, i.e.~it can be encoded
  as a partial function 
  $f: \Seq(\Gamma)\times Q \pto \Seq(\Gamma) \times \Bp(\Gamma) \times Q$.
  In order to prove the claim we are going to define a closed 
  tableau $\Tab=(N,K,R,\ell,\alpha)$ for $\Gamma$. 
    We define $N$ to be the set of positions in $\Seq(\Gamma) \times Q$ 
   for which $f$ is a winning strategy (in particular, this entails that
   $f$ is defined at all positions in $N$).
  Obviously
  we have $(\Gamma,a_\Gamma) \in N$ and we put
  $R = (\Gamma,a_\Gamma)$.
  The labelling function on $N$ is the first projection 
  map, i.e.~$\ell(\Delta,a) = \Delta$
  for all $(\Delta,a) \in N \subseteq \Seq(\Gamma) \times Q$. 
  
  For all $(\Delta,a) \in N$ the set of $K$-successors 
  is defined using $\forall$'s
  strategy by putting $K(\Delta,a) = \{ 
  (\Delta',a') \mid (\Delta',a') \in E(f(\Delta,a)) \}$ 
  where $E(f(\Delta,a))$ is the set of possible moves
  of $\exists$ at $f(\Delta,a)$.
  Finally we define the annotation $\alpha$ of $\Tab$ by putting
  $\alpha(\Delta,a) = \pi_2 (f(\Delta,a))$ where $\pi_2: 
  \Seq(\Gamma) \times \Bp(\Gamma) \times Q \to \Bp(\Gamma)$ 
  denotes the second projection map.
  
  It is an easy consequence of the definition of the tableau game that $\Tab$ is a well-defined
  tableau. We now show that $\Tab$ is a closed tableau. To this aim consider first a finite
  complete
  path $\pi= (\Gamma_0,a_0)c_0(\Gamma_1,a_1) c_1\cdots 	
  c_{n-1}(\Gamma_n,a_n)$ through
  $\Tab$ with $(\Gamma_0,a_0) =  (\Gamma,a_\Gamma)$.  
   This gives rise to a $\game_\Gamma$-play of the form
  $$(\Gamma_0,a_0)(\Gamma_0,\flat_0,a_0)(\Gamma_1,a_1)(\Gamma_1,\flat_1,a_1)
  (\Gamma_2,a_2) \ldots (\Gamma_n,a_n)$$
  that is played according to $\forall$'s winning strategy $f$.
  In order to see this, note that for all $0 \le i < n$
  we have $(\Gamma_{i+1},a_{i+1}) \in 
  E(f(\Gamma_i,a_i))$, i.e.~$(\Gamma_{i+1},a_{i+1})$ is a legal
  answer to $\forall$'s move at $(\Gamma_i,a_i)$
  if $\forall$ is playing according to his strategy $f$.  
    Since $\pi$ was assumed to be complete, and since $\forall$ has a winning strategy at the last position $(\Gamma_n,a_n)$ of the corresponding $\game_\Gamma$-play, it follows that $\exists$ cannot move in the position obtained by $\forall$ playing according to his strategy at $(\Gamma_n,a_n)$. This can only be the case if $\forall$ moves to $(\Gamma_n,(A,\bar A),a_n)$ at $(\Gamma_n,a_n)$ for some $A \in \FoRm(\Lambda)$, which in turn is only possible if $\Gamma_n$ is a tableau axiom.
 
  Consider now an infinite path $\pi = (\Gamma_0,a_0)
  c_0(\Gamma_1,a_1)c_1(\Gamma_2,a_2)\ldots$
  through $\Tab$ starting with $(\Gamma_0,a_0)=(\Gamma,a_\Gamma)$. 
  As in the previous case, 
  this induces an infinite $\game_\Gamma$-play $P$ of the form
  $$P = (\Gamma_0,a_0)(\Gamma_0,\flat_0,a_0)(\Gamma_1,a_1)(\Gamma_1,\flat_1,a_1)
  (\Gamma_2,a_2) \ldots (\Gamma_n,a_n)\ldots$$
  that is played according to $\forall$'s winning strategy $f$.
  By the definition of 
  the game board of $\game_\Gamma$, this means that
  the infinite sequence
  $\rho= 
  a_0 a_1 a_2\ldots \in Q^\omega$ can be seen
  as a run of $\bbA_\Gamma$ on 
  $$w =
  (\Gamma_0,\flat_0,c_0)
  (\Gamma_1,\flat_1,c_1)
  (\Gamma_2,\flat_2,c_2) \ldots
  \in \Sigma_\Gamma^\omega.$$
  By assumption 
  $f$ was winning for $\forall$
  and therefore $P$ does not satisfy the 
  parity condition $\Omega'$ of $\game_\Gamma$.
  This implies that $\rho=
  a_\Gamma a_1 a_2 \ldots \in Q^\omega$ 
  does not fulfil the parity condition $\Omega$
  of the automaton $\bbA_\Gamma$. In other words, as $\rho$
  is the run of $\bbA_\Gamma$ on $w$, there must be a 
  sequence $\beta = B_0 B_1 B_2 \ldots \in \Cl(\Gamma)^\omega$  
  such that $(B_i,B_{i+1}) \in \Tr(\Gamma_i,\flat_i,c_i)$ 
  that is bad w.r.t. $\Omega$. 
  In other words,
 $\beta$ is also a trace 
  through the path $\pi$, which implies 
  that there exists a trace through $\pi$ that is bad w.r.t. $\Omega$ 
  as required.  This finishes the proof 
  that $\Tab$ is closed.
\end{proof}
\noindent
The converse of the above theorem is established later as Theorem
\ref{thm:summary}. The challenge there is to construct a model for $\Gamma$ based on a winning strategy for $\exists$ in the $\Gamma$-tableau game. 
As we only allow substitution instances of modal (one-step) rules
that do not duplicate literals (we require that substitutions do not
decrease the cardinality of premises in one-step rules in Definition
\ref{defn:rules}), we need to require that 
the set of tableau rules to be closed under contraction.
\begin{defi}\label{def:contr_closed}
A set $\Rules$ of monotone one-step rules is \emph{closed under
contraction}, if for all rules $\Gamma_0 / \Gamma_1, \dots, \Gamma_n \in
\Rules$ and all $\sigma: \Var \to \Var$, there exists a rule
$\Delta_0 / \Delta_1, \dots, \Delta_k \in \Rules$ and a renaming
$\tau: V \to V$ such that $A \tau = B \tau$ for $A, B \in \Delta_0$
implies that $A = B$, $\Delta_0 \tau \subseteq \Gamma_0 \sigma$
and, for each $1 \leq i \leq n$, there exists $1 \leq j \leq k$ such
that $\Gamma_i \sigma \subseteq \Delta_j \tau$.
\end{defi}
\noindent
In other words, instances of one-step rules which duplicate
literals in the premise may be replaced by instances for
which this is not the case. 
\begin{rem}
      Every monotone $\Lambda$-structure admits a one-step tableau sound and 
      one-step tableau   complete set of monotone tableau rules that is closed under contraction.
      This follows from the fact that the set of one-step rules from 
      the proof of Proposition~\ref{propn:mon-existence}
      is closed under contraction:
      Consider a rule  $\Gamma_0 / \Gamma_1, \dots, \Gamma_n \in \Rules$ 
      and any renaming $\sigma$. Then, by the definition of the set of one-step rules $\Rules$ 
      in~Prop.~\ref{propn:mon-existence} we can easily show that
      $\Delta_0 / \Delta_1, \dots, \Delta_n \in \Rules$ with $\Delta_i=\Gamma_i \sigma$
      for $i=0,\dots,n$. Closure under contraction follows from the fact that  $\Delta_0 / \Delta_1, \dots, \Delta_n$ together 
      with
      $\tau=\mathrm{id}_\Var$ satisfy the conditions of Definition~\ref{def:contr_closed}.
\end{rem}
Under the condition of closure under contraction (cf.~Remark~\ref{rem:contra_closed}), we prove:

\begin{thm} \label{thm:eloise-wins}  Suppose that $\Gamma \in \Seq(\Lambda)$ is
clean and guarded and $\Rules$ is one-step tableau complete
and contraction closed. If
$\exists$ has a winning strategy in $\game_\Gamma$, then $\Gamma$
is satisfiable in a model of size $\Ord(2^{p(n)})$ where $n$ is the
cardinality of $\Cl(\Gamma)$ and $p$ is a polynomial.
\end{thm}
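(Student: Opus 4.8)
The plan is to read a model off a positional winning strategy of $\exists$ and then verify satisfaction via a simulation of the model-checking game. Since $\game_\Gamma$ is a parity game, Theorem~\ref{fact:paritygames} gives $\exists$ a history-free winning strategy $s$ from $(\Gamma,a_\Gamma)$; let $W \subseteq B_\exists \cup B_\forall$ be her winning region. Then $(\Gamma,a_\Gamma) \in W$, $W$ is closed under all moves of $\forall$ and under the $s$-move of $\exists$, and $s$ restricted to $W$ wins every play. Two observations: no $(\Delta,a) \in W \cap B_\forall$ has $\Delta$ containing a complementary pair $A,\bar A$ (otherwise $\forall$ moves to $(\Delta,(A,\bar A),a)$, leaving $\exists$ stuck), and $|W \cap B_\forall| \le |\Seq(\Gamma)| \cdot |Q_\Gamma| \le 2^{|\Cl(\Gamma)|} \cdot 2^{p_0(|\Cl(\Gamma)|)} = \Ord(2^{p(n)})$ by Lemma and Definition~\ref{lem:parityword}; this is what will bound the model size.

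Next I construct $M = (X,\gamma,h)$. By guardedness, every $\game_\Gamma$-play in which $\forall$ applies only propositional and fixpoint rules is finite and ends at a \emph{modal} position, i.e.\ a $(\Delta,a)$ whose sequent consists of literals and modalised formulas only; I take $X$ to be the set of modal positions in $W$ reachable from $(\Gamma,a_\Gamma)$ by plays consistent with $s$, and I put $h(p) = \{(\Delta,a) \in X \mid p \in \Delta\}$, which by the first observation is compatible with the intended reading $h(\bar p) = X \setminus h(p)$. To define $\gamma$ at $x = (\Delta,a) \in X$ with modalised part $\Delta_m$, I invoke one-step tableau completeness of $\Rules$ on $\Delta_m$ (treating the arguments of its modalised formulas as propositional variables): for every one-step rule $\Gamma_0/\Gamma_1,\dots,\Gamma_n$ and every, possibly literal-identifying, renaming $\sigma$ with $\Gamma_0\sigma \subseteq \Delta_m$, closure under contraction (Definition~\ref{def:contr_closed}) produces a non-identifying instance that $\forall$ may legally play at $(\Delta,a)$, so $s$ selects a conclusion that stays in $W$; saturating that conclusion by propositional rules lands in $X$, and recording, for each argument of each modalised formula of $\Delta_m$, the set of states so obtained yields a valuation $\tau_x$ witnessing that the chosen conclusion is nonempty. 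One-step tableau completeness then provides an element of $\lsem \Delta_m \rsem_{TX,\tau_x} \subseteq TX$, and after making the $\tau_x$ coherent across all rules applicable at $x$ (monotonicity of the liftings keeps conclusions nonempty under the enlarged valuation) I let $\gamma(x)$ be such an element, which simultaneously records the successors of $x$ inside $X$.

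The crux is the \emph{Truth Lemma}: for all $(\Delta,a) \in X$ and all $B \in \Delta$, $M,(\Delta,a) \models B$. By Theorem~\ref{thm:adequacy} it suffices to exhibit a winning strategy for $\exists$ in the model-checking game on $M$ from $(B,(\Delta,a))$. I obtain one by simulating $s$: an MC-position $(C,(\Delta',a'))$ is kept in step with the $\game_\Gamma$-position $(\Delta',a')$; boolean and fixpoint moves of the MC-game are matched with the corresponding propositional/fixpoint rule applications (whose conclusions $\forall$ is free to pick in $\game_\Gamma$); and at a modal position $\exists$'s choice of successor sets in the MC-game is read off from the $\tau_x$ used to build $\gamma$. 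An infinite MC-play thus induces an infinite $\game_\Gamma$-play consistent with $s$, and --- by Definitions~\ref{def:trace} and~\ref{def:gametrace} --- a bad trace of formulas through the former is a bad trace through the latter; but $s$ is winning in $\game_\Gamma$, so the run of the $\Gamma$-parity automaton $\bbA_\Gamma$ along that play is accepting, which by the defining property of $\bbA_\Gamma$ means no bad trace exists --- a contradiction. Hence every infinite play following the simulated MC-strategy is good, so the strategy is winning. Running a few propositional/fixpoint rules from $(\Gamma,a_\Gamma)$ to a modal position $(\Delta_0,a_0) \in X$ and applying the Truth Lemma there, a short induction on these unfoldings gives $M,(\Delta_0,a_0) \models \Gamma$, and $|X| \le |W \cap B_\forall| = \Ord(2^{p(n)})$.

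I expect the Truth Lemma to be the main obstacle, for two intertwined reasons. First, the simulation must be arranged so that the parity condition of the model-checking game on $M$ is faithfully reflected by the parity condition of $\game_\Gamma$ \emph{through} the automaton $\bbA_\Gamma$, which requires careful bookkeeping with trace tiles and with the indexing of rule conclusions. Second --- and this is where the coalgebraic generality bites --- unlike rank-based completeness arguments one cannot build the model by induction on modal depth: the coalgebra structure $\gamma$ must be super-imposed on the relation induced by $s$ once and for all, so the witnessing sets $\tau_x$ chosen while defining $\gamma$ must be \emph{exactly} the ones $\exists$ later needs in the model-checking game, and making these choices consistently, while respecting monotonicity and one-step completeness, is the delicate point.
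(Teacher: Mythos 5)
Your proposal is correct and follows essentially the same route as the paper: states are the atomic (purely modal/literal) positions reachable under $\exists$'s history-free strategy, the coalgebra structure is super-imposed via one-step tableau completeness plus contraction closure with $\tau_x$ given by the strategy-induced successor sets, and the truth lemma is obtained by simulating the tableau-game strategy in the model-checking game and transferring bad traces through the $\Gamma$-parity automaton. The paper packages these steps as the coherent-model construction (Definition~\ref{def:premodel}, Proposition~\ref{prop:premodexist}) and Lemma~\ref{lem:mcstrategy}/Theorem~\ref{thm:existssatisfiability}, but the underlying argument is the one you describe.
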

The proof of Theorem~\ref{thm:eloise-wins} constructs a model for $\Gamma$ out of the game board of $\game_\Gamma$ using a winning strategy $f$ for $\exists$ in $\game_\Gamma$. We use one-step tableau completeness to impose a $T$-coalgebra structure on those $\forall$-positions in $\game_\Gamma$ that are reachable through $f$-conform 
$\game_\Gamma$-plays, with the resulting coalgebra satisfying the truth lemma.
We then equip this $T$-coalgebra with a valuation that makes
$\Gamma$ satisfiable in the resulting model. While our construction
shares some similarities with the shallow model construction
of~\cite{Schroder:2008:PBR}, it is by no means a simple adaptation
of \emph{op.~cit.}, as we are dealing with fixpoint formulas and thus cannot employ induction over the modal rank of formulas to construct satisfying models. Our proof of satisfiability is also substantially different from the corresponding proof for the modal $\mu$-calculus (cf.~\cite{Niwinski:1996:GMC}) -- we show satisfiability by directly deriving a winning strategy for $\exists$ in the model-checking game from a winning strategy of $\exists$ in the tableau game.

We now turn to the details of the proof of Theorem~\ref{thm:eloise-wins}.
Throughout the proof, we assume that $\Gamma \in \Seq(\Lambda)$ is a
clean, guarded sequent and $f : \Seq(\Gamma) \times \Bp(\Gamma)
\times Q \pto \Seq(\Gamma) \times Q$ is a history-free winning strategy for $\exists$ in $\game_\Gamma$. The construction of a supporting Kripke frame for a model of $\Gamma$ is based on $\forall$-positions of $\game_\Gamma$ where only modal rules can be applied. This is formalised through the notion of \emph{atomic sequent}.
\begin{defi}
	A $\Lambda$-formula is {\em atomic} if it is either a propositional variable $p \in \Var$, a ne\-gated propositional
	variable $\bar p \in \bar\Var$, or a formula of the form $\hearts(A_1,\ldots,A_n)$ or $\bar \hearts(A_1, \dots, A_n)$. A sequent $\Delta \in \Seq(\Lambda)$ is {\em atomic} if all its elements are atomic. 
	We write $\atomic(\Gamma)$ for 
	the set of atomic sequents in $\Seq(\Gamma)$, and call a $\game_\Gamma$-position $(\Delta,a) \in B_\forall$ \emph{atomic} if $\Delta$ is atomic. 
\end{defi} 

The state set of the satisfying model that we are about to construct
are the atomic $\game_\Gamma$-positions $(\Delta,a)$ that are
reachable from $(\Gamma,a_\Gamma)$ through $\game_\Gamma$-play that
is played according to $f$. As the propositional rules are
invertible, we may assume that $\forall$ applies them in any fixed,
given order. This simplifies the model construction as it implies --
together with $\exists$'s strategy -- that every sequent is
unfolded to an atomic sequent in a unique way.  Fixing the order in
which $\forall$ applies propositional rules can be seen as a
strategy, that we call propositional:

\begin{defi}
     A {\em propositional strategy} for $\forall$ in the tableau game $\game_\Gamma$
     is a function 
     $$\pstrat: \Seq(\Gamma)\setminus \atomic(\Gamma) \to \Bp(\Gamma)$$
     such that 
$(\Delta,\pstrat(\Delta))$ is a rule representation 
for all $\Delta \in \Seq(\Gamma)\setminus \atomic(\Gamma)$. 
     A $\game_\Gamma$-play  is \emph{played according to $\pstrat$}
     if $\forall$ moves at any position of the form 
     $(\Delta,a) \in (\Seq(\Gamma)\setminus
     \atomic(\Gamma))\times Q$  that occurs in the play to the position 
$(\Delta,\pstrat(\Delta),a)$.      \end{defi}

For the remainder of this section we fix a propositional strategy $\pstrat$ for 
$\forall$. As annonced informally in the beginning, this dictates
that plays proceed to atomic positions in a unique way, and in fact
induces a function from arbitrary positions to atomic ones in the
tableau game.
\begin{lemmadef}
     Let $f$ be a strategy for $\exists$ in $\game_\Gamma$. 
     For any position $(\Delta,a) \in \Seq(\Gamma) \times Q$ there exists precisely one
     position $(\Delta',a') \in \atomic(\Gamma) \times Q$ and one
     partial $\game_\Gamma$-play
     \[ (\Delta,a),\cdots,(\Delta',a') \]
     that is played according to $f$ and $\pstrat$ and which does not contain an instance of a modal rule.
     We let $\sigma_f : \Seq(\Gamma) \times Q \to \atomic(\Gamma)\times Q$ be the function given by
     $\sigma_f(\Delta,a) = (\Delta',a')$. 
\end{lemmadef}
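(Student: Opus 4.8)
The proof splits into a uniqueness part, which follows because $\pstrat$ and the (history-free) strategy $f$ leave no choices open, and an existence part, where guardedness of $\Gamma$ is used to force termination of the propositional/fixpoint-rule reduction. I would proceed as follows.

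\emph{Uniqueness.} First note that in any $\game_\Gamma$-play that is played according to both $f$ and $\pstrat$, as soon as a $\forall$-position $(\Delta_k,a_k)$ with $\Delta_k$ non-atomic is reached, the next two moves are completely determined: by the definition of playing according to $\pstrat$, $\forall$ must move to $(\Delta_k,\pstrat(\Delta_k),a_k)$, where $\pstrat(\Delta_k)$ is the blueprint of a propositional or fixpoint rule -- a non-atomic sequent contains a non-atomic formula, and a propositional strategy selects such a formula, so $(\mod)$ is never applied here -- and then $\exists$ must move to $f(\Delta_k,\pstrat(\Delta_k),a_k)=:(\Delta_{k+1},a_{k+1})$, which is a legal move since every propositional and fixpoint rule has at least one conclusion, so $f$ is defined at that position. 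Secondly, if $\Delta_k$ is atomic then no propositional or fixpoint rule has premise $\Delta_k$, so only modal rules -- excluded by hypothesis -- and possibly the axiom rule $(\mathsf{Ax})$ can be applied at $\Delta_k$; but an application of $(\mathsf{Ax})$ passes to an $\exists$-position with no legal successor, i.e.\ to a position outside $\atomic(\Gamma)\times Q\subseteq B_\forall$. Consequently any partial play of the form required by the statement -- played according to $f$ and $\pstrat$, containing no modal rule, and ending in $\atomic(\Gamma)\times Q$ -- must stop exactly at the first atomic sequent it meets; since every preceding move is forced, there is at most one such play (an easy induction on its length), and if $\Delta$ is itself atomic it is the trivial play $(\Delta,a)$.

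\emph{Existence.} It remains to show that the forced sequence of sequents $\Delta=\Delta_0,\Delta_1,\Delta_2,\dots$, where $\Delta_{k+1}$ arises from the non-atomic $\Delta_k$ by the propositional or fixpoint rule $\rho(\Delta_k,\pstrat(\Delta_k))$ together with $\exists$'s choice of conclusion, terminates after finitely many steps. I would assign to each formula $A$ a \emph{modality-free depth} $\mathrm{d}(A)\in\Nat$, defined by structural recursion: $\mathrm{d}(A)=0$ when $A$ is atomic (a (negated) propositional variable, or of the form $\hearts(B_1,\dots,B_n)$ with $\hearts\in\Lambda\cup\bar\Lambda$), $\mathrm{d}(B_1\land B_2)=\mathrm{d}(B_1\lor B_2)=1+\max(\mathrm{d}(B_1),\mathrm{d}(B_2))$, and $\mathrm{d}(\eta p.B)=1+\mathrm{d}(B)$ for $\eta\in\{\mu,\nu\}$. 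On the principal formula of a propositional rule $\mathrm{d}$ strictly decreases, since $\mathrm{d}(B_i)<\mathrm{d}(B_1\land B_2)$ and $\mathrm{d}(B_i)<\mathrm{d}(B_1\lor B_2)$; and on the principal formula $\eta p.B$ of a fixpoint rule $\mathrm{d}$ strictly decreases as well: since $\eta p.B\in\Cl(\Gamma)$ and $\Gamma$ is clean and guarded, $p$ occurs in $B$ only within the scope of a modal operator, so substituting $\eta p.B$ for $p$ alters $B$ only below some modality, whence $\mathrm{d}(B[p:=\eta p.B])=\mathrm{d}(B)<1+\mathrm{d}(B)=\mathrm{d}(\eta p.B)$. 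As the side formulas of each rule are carried over unchanged, every step of the reduction replaces one element of the finite multiset $M_k=\{\mathrm{d}(A)\mid A\in\Delta_k\}$ by one or two strictly smaller ones, so $M_{k+1}<M_k$ in the well-founded multiset ordering on $\Nat$. Therefore the sequence terminates after finitely many steps, necessarily at an atomic sequent $\Delta'$ paired with some automaton state $a'$, and we set $\sigma_f(\Delta,a):=(\Delta',a')$.

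\emph{Main obstacle.} The one genuinely non-routine step is the termination argument above, i.e.\ exhibiting a well-founded measure decreased by every propositional and fixpoint rule application. As the example $\mu p.\,p$ (which unfolds to itself) shows, no such measure can exist without guardedness, so this is precisely where the guardedness hypothesis on $\Gamma$ is indispensable; the uniqueness part and all remaining details are routine bookkeeping about the moves of $\game_\Gamma$.
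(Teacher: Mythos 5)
The paper states this Lemma-and-Definition without any proof, so there is nothing to compare against; your argument is correct and supplies exactly the justification the authors leave implicit. Uniqueness is indeed just the observation that $\pstrat$ and $f$ force every move until the first atomic sequent is reached, and your termination measure for existence --- the multiset of modality-free depths, which strictly decreases under every propositional and fixpoint step precisely because guardedness confines the substituted occurrences of $p$ below a modality --- correctly isolates the one non-trivial point (as your $\mu p.\,p$ counterexample shows, this is where guardedness is genuinely needed).
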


For the construction of a satisfying model for $\Gamma$ we are going to define a relation
on the set of atomic positions of $\game_\Gamma$ where two
atomic positions are related if the second position is selected by
$\exists$'s strategy in response to $\forall$ playing a modal rule.
In the case of Kripke frames, this relation would already define the
satisfying model, but in the general case, we need to impose a
coalgebra structure on top of this relation in a coherent way. To
achieve this, we single out specific states (the $A$-successors) that we
take as under-approximation of the semantics of a formula $A$.
Informally speaking, an $A$-successor of an atomic state arises by
$\forall$ playing a modal rule, and $\exists$ selecting a conclusion
containing $A$ that is then reduced to another atomic position.
Formally, we introduce the notions of $A$-children (conclusions
selected by $\exists$ that contain $A$) and $A$-successors
(reductions of $A$-children to atomic form), both relative to a
strategy for $\exists$.
\begin{defi}
Suppose that $f$ is a history-free strategy of $\exists$ in
$\game_\Gamma$, and let $(\Delta, a) \in \atomic(\Gamma)$.
A position $(\Delta', a') \in \Seq(\Gamma) \times Q$ is an \emph{$A$-child} of $(\Delta, a)$
along $f$ if $A \in \Delta'$ and $(\Delta', a') = f(\Delta, 
\flat,a)$ where $((\Delta, a), (\Delta, \flat, a))$ is a legal move of
$\forall$ in $\game_\Gamma$. We put
\[ \Chld_f(A, \Delta, a) = \lbrace (\Delta', a')  \in \Seq(\Gamma) \times Q \mid (\Delta', a')
\mbox{ $A$-child of $(\Delta, a)$ along $f$} \rbrace \]
and write $\Chld_f(\Delta, a)$ for the collection of all $A$-children of
$(\Delta, a)$ along $f$.
An atomic position $(\Delta'', a'')$ is an \emph{$A$-successor} of
$(\Delta, a)$ along $f$ if $(\Delta'', a'') = \sigma_f(\Delta', a')$
for some $A$-child $(\Delta', a')$ of $(\Delta, a)$ along $f$. This
is denoted by
\[ \Succ_f(A, \Delta, a) = \lbrace (\Delta'', a'') \in \atomic(\Gamma) \times Q \mid \mbox{
$(\Delta'', a'')$ $A$-successor of $(\Delta, a)$ along $f$ } \rbrace
\]
and we write $\Succ_f(\Delta, a) = \bigcup_{A \in \Cl(\Gamma)} \Succ_f(A,\Delta,a)$ 
for the collection of
all $A$-successors of $(\Delta, a)$.
\end{defi}

In other words, an atomic position $(\Delta'', a'')$ is a successor of
$(\Delta, a)$ if it is reachable from $(\Delta, a)$ by a play that
is played according to $\exists$'s strategy $f$ and the (fixed)
propositional strategy $g$ that involves precisely one modal rule.
The position $(\Delta'', a'')$ is an $A$-successor of $(\Delta, a)$
if the conclusion of this modal rule that is picked by $f$ contains
the formula $A$. This allows us to introduce \emph{coherent}
coalgebra structures, i.e.~those structures on atomic positions that
satisfy the truth lemma.

\begin{defi}\label{def:premodel}
Suppose that $f$ is a history-free strategy for $\exists$ in
$\game_\Gamma$ and let
\[ Y = \lbrace (\Delta, a) \in \atomic(\Gamma)\times Q \mid \sigma_f(\Gamma, a_I)
\to^* (\Delta, a)  \rbrace \]
where for $(\Delta,a),(\Delta',a') \in \atomic(\Gamma) \times Q$, $(\Delta, a) \to (\Delta', a')$ if $(\Delta', a') \in
\Succ_f(\Delta, a)$. A coalgebra structure $\gamma: Y \to TY$ on $Y$ is
called \emph{coherent} if 
\[\gamma(\Delta, a) \in \lsem \hearts \rsem_Y(\Succ_f(A_1, \Delta,
a),  \dots, \Succ_f(A_n, \Delta, a) )\]
whenever $\hearts(A_1, \dots, A_n) \in \Delta$.
A valuation $h: \Var
\to \Pow(Y)$ is \emph{coherent} if $(\Delta, a) \in h(p)$ whenever
$p \in \Delta$.
\end{defi}

In other words, the carrier of a coherent coalgebra is the set of
atomic positions that are reachable from the initial position via
$\exists$'s strategy $f$, and the coalgebra structure is so that we
can establish the truth lemma, together with monotonicity of the
modal operators: the $A$-successors of an atomic position contain an
element of the disjunctive normal form of $A$ and hence serve as
an under-approximation of the truth-set of $A$. We note that a
position cannot be both an $A$-successor  and an $\bar A$-successor
of the same position.

\begin{lem}\label{lem:sanity}
	Let $f$ be a history-free winning strategy for $\exists$ in $\game_\Gamma$
	and let $(\Delta_1,a_1)$ and $(\Delta_2,a_2)$ be atomic $\game_\Gamma$-positions such that $f$ is a winning strategy for 
	$\exists$ at $(\Delta_1,a_1)$. Then for all formulas $A$ we have
	 \[  (\Delta_2, a_2) \in \Succ_f(A, \Delta_1, a_1) \quad
	 \mbox{implies that} \quad
	     (\Delta_2, a_2) \notin \Succ_f(\bar A, \Delta_1, a_1). \] 
\end{lem}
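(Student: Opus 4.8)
The plan is to argue by contradiction. Suppose $(\Delta_2,a_2) \in \Succ_f(A,\Delta_1,a_1) \cap \Succ_f(\bar A,\Delta_1,a_1)$. Unfolding the definition of $A$-successors, there is an $A$-child $(\Delta',a')$ of $(\Delta_1,a_1)$ along $f$, i.e.\ $A \in \Delta'$ and $(\Delta',a') = f(\Delta_1,\flat',a_1)$ for some legal modal move of $\forall$ at $(\Delta_1,a_1)$, with $\sigma_f(\Delta',a') = (\Delta_2,a_2)$; and likewise a $\bar A$-child $(\Delta'',a'')$ with $\bar A \in \Delta''$ and $\sigma_f(\Delta'',a'') = (\Delta_2,a_2)$. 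The first ingredient I would record is the standard game-theoretic fact that a history-free strategy that is winning for $\exists$ from a position is winning from every position occurring on an $f$-conform play out of it; since $(\Delta_2,a_2)$ is reached from $(\Delta_1,a_1)$ by $\forall$'s modal move, $\exists$'s $f$-response to $(\Delta',a')$, and then the unique $(f,\pstrat)$-conform, modal-rule-free reduction witnessing $\sigma_f(\Delta',a') = (\Delta_2,a_2)$, it follows that $f$ is winning for $\exists$ at $(\Delta_2,a_2)$.

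The heart of the argument is to show that $\Delta_2$ is a tableau axiom, i.e.\ $\{B,\bar B\} \subseteq \Delta_2$ for some formula $B$. For this I would look at the two reductions witnessing $\sigma_f(\Delta',a') = (\Delta_2,a_2)$ and $\sigma_f(\Delta'',a'') = (\Delta_2,a_2)$. Each is a \emph{finite} play (finiteness is exactly the well-definedness of $\sigma_f$, which rests on guardedness) that applies only $(\land)$, $(\lor)$ and $(\fix)$ until an atomic sequent is reached; along it, the occurrence of $A$ in $\Delta'$ (respectively $\bar A$ in $\Delta''$) is repeatedly decomposed, $\exists$ choosing one disjunct at each $(\lor)$-step, and every atomic formula produced this way ends up in $\Delta_2$. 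I would then descend the decomposition of $A$ and, in lock-step, the decomposition of $\bar A$, keeping the invariant that the current subformula on the $\bar A$-side is the De Morgan dual of the current subformula on the $A$-side: at a conjunction $C \land D$ on the $A$-side both $C$ and $D$ survive, while on the $\bar A$-side the dual $\bar C \lor \bar D$ is a disjunction where $\exists$ retains one disjunct, say $\bar C$, so we continue with $C$ and $\bar C$; the case of a disjunction on the $A$-side is symmetric; and at a fixpoint $\eta p.C$ both sides unfold, which is compatible since negation commutes with fixpoint unfolding (a routine syntactic check using $\bar p \notin A$ and cleanness). As the $A$-side decomposition is a finite tree and each step strictly descends, this terminates at an atomic subformula $B$, which therefore lies in $\Delta_2$, and by the invariant the $\bar A$-side is then at $\bar B$, which likewise lies in $\Delta_2$. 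Hence $\{B,\bar B\} \subseteq \Delta_2$, and since $B \in \Delta_2 \subseteq \Cl(\Gamma)$ we also have $(B,\bar B) \in \Bp(\Gamma)$.

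Finally, because $\Delta_2$ is a tableau axiom, $\forall$ has the legal move from $(\Delta_2,a_2)$ to $(\Delta_2,(B,\bar B),a_2)$, and the induced rule $\rho(\Delta_2,(B,\bar B)) = \frac{\Delta_2}{}$ has no conclusions, so $\exists$ is immediately stuck and loses this $f$-conform play; this contradicts that $f$ is winning for $\exists$ at $(\Delta_2,a_2)$. I expect the lock-step decomposition argument — the De Morgan bookkeeping, getting the fixpoint-unfolding case right, and correctly threading $\exists$'s disjunction choices through both reductions — to be the only genuinely delicate part; the remaining steps are unpacking of the definitions of $\Succ_f$ and $\sigma_f$ together with the standard inheritance property of winning strategies.
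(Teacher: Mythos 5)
Your proof is correct and follows the same route as the paper's: both derive a contradiction by exhibiting a formula $B$ with $B,\bar B \in \Delta_2$, so that $\Delta_2$ is a tableau axiom and hence a winning position for $\forall$, contradicting that $f$ is winning for $\exists$ at $(\Delta_1,a_1)$; your lock-step De Morgan descent is precisely the ``straightforward induction argument'' that the paper leaves implicit. One small caveat: termination of that descent is not by strict structural descent (fixpoint unfolding can enlarge the tracked formula), but rather, as you yourself note, by finiteness of the two $\sigma_f$-reductions, whose traces of $A$ and $\bar A$ must end in atomic formulas of $\Delta_2$.
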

\begin{proof}
     Suppose for a contradiction that $(\Delta_2, a_2) \in \Succ_f(A, \Delta_1, a_1)$
     as well as $(\Delta_2, a_2) \in \Succ_f(\bar A, \Delta_1, a_1)$ for some formula $A$.
     Then, by the definition of $\Succ_f$, there must exist $(\Delta',a')$ and
     $(\Delta'',a'')$ in $\Seq(\Gamma) \times Q$ such that $A \in \Delta'$,
     $\bar A \in \Delta''$ and 
     $\sigma_f(\Delta',a') = \sigma_f(\Delta'',a'') = (\Delta_2,a_2)$.
     A straightforward induction argument shows that in this case there must exist
     a formula $B$ such that $B,\bar B \in \Delta_2$. Therefore $(\Delta_2,a_2)$
     is a winning position for $\forall$. But this contradicts the fact that there
     exists a $\game_\Gamma$-play from $(\Delta_1,a_1)$ to $(\Delta_2,a_2)$ played according to $f$, and
     our assumption that $f$ is winning at $(\Delta_1,a_1)$.
\end{proof}
\noindent
We now show that if $\exists$ has a winning strategy $f$ in the tableau game for $\Gamma$, 
then a coherent model for $\Gamma$ exists. This is where contraction closure is needed as the application of modal rules may not identify elements in the premise of a rule.

\begin{prop}\label{prop:premodexist}
Every history-free winning strategy $f : \Seq(\Gamma) \times \Bp(\Gamma) \times Q \pto \Seq(\Gamma) \times Q$ 
for $\exists$ in $\game_\Gamma$ induces a coherent model $(Y,
\gamma, h)$. 
\end{prop}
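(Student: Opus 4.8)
The plan is to take $Y$ as in Definition~\ref{def:premodel} and to produce the coherent model in three moves: observe that every position of $Y$ is winning for $\exists$ under $f$; read off a coherent valuation from the propositional literals of atomic positions; and define the coalgebra structure pointwise, using one-step tableau completeness together with the responses of $f$. First, since $Y$ consists of atomic positions reachable from the initial position $(\Gamma,a_\Gamma)$ along $\game_\Gamma$-plays conforming to $f$ and to the fixed propositional strategy $\pstrat$, and $f$ is winning from $(\Gamma,a_\Gamma)$, it is winning from every $(\Delta,a)\in Y$. As in the proof of Lemma~\ref{lem:sanity}, this rules out $B,\bar B\in\Delta$ for any formula $B$ and any $(\Delta,a)\in Y$, since otherwise $\forall$ wins by moving to $(\Delta,(B,\bar B),a)$. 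Hence I can set $h(p)=\{(\Delta,a)\in Y\mid p\in\Delta\}$ for $p\in\Var$, extend to $\bar\Var$ by complementation, and this $h$ is coherent: $p\in\Delta$ forces $(\Delta,a)\in h(p)$, while $\bar p\in\Delta$ forces $p\notin\Delta$ and hence $(\Delta,a)\in h(\bar p)$.

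Next, fix an atomic $(\Delta,a)\in Y$ and choose a fresh propositional variable $p_B$ for every formula $B$ occurring as an argument of a modality in $\Delta$; let $\bar\Delta$ be the modalised sequent obtained from the modal part of $\Delta$ by replacing each such $B$ by $p_B$, and put $\tau(p_B)=\Succ_f(B,\Delta,a)\subseteq Y$. Then
\[ \lsem\bar\Delta\rsem_{TY,\tau}=\bigcap\{\lsem\hearts\rsem_Y(\Succ_f(A_1,\Delta,a),\dots,\Succ_f(A_n,\Delta,a))\mid\hearts(A_1,\dots,A_n)\in\Delta\}, \]
and it is precisely the non-emptiness of this set that I need in order to choose a coherent value $\gamma(\Delta,a)$ (when $\Delta$ carries no modal formula the intersection is all of $TY$ and the argument below still applies). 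To establish non-emptiness I use one-step tableau completeness, so it suffices to check that every one-step rule instance matching $\bar\Delta$ has a conclusion with non-empty interpretation under $(Y,\tau)$. Arguing contrapositively, suppose $\lsem\bar\Delta\rsem_{TY,\tau}=\emptyset$; then there are a rule $\Gamma_0/\Gamma_1\dots\Gamma_n\in\Rules$ and a renaming $\sigma$ with $\Gamma_0\sigma\subseteq\bar\Delta$ and $\lsem\Gamma_i\sigma\rsem_{Y,\tau}=\emptyset$ for all $i$. Since the cardinality restriction on rule representations (Definition~\ref{defn:rules}) forbids me from playing $\sigma$ directly as a move of $\forall$, I invoke contraction closure (Definition~\ref{def:contr_closed}): it supplies a rule $r'=\Delta_0/\Delta_1\dots\Delta_k\in\Rules$ and a renaming $\theta$, non-identifying on $\Delta_0$, with $\Delta_0\theta\subseteq\Gamma_0\sigma\subseteq\bar\Delta$ and with the conclusions $\Delta_1\theta,\dots,\Delta_k\theta$ each enlarging some $\Gamma_i\sigma$ — so $\lsem\Delta_j\theta\rsem_{Y,\tau}=\emptyset$ for every $j$. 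As $\Delta_0\theta\subseteq\bar\Delta$ only involves the variables $p_B$, I un-propositionalise $\theta$ into a substitution $\hat\theta$ on the variables of $r'$ by setting $\hat\theta(q)=B$ whenever $\theta(q)=p_B$; because $\theta$ is non-identifying on $\Delta_0$ and $B\mapsto p_B$ is injective, $\sharp(\Delta_0\hat\theta)=\sharp(\Delta_0)$, and $\Delta_0\hat\theta\subseteq\Delta$, so $(\Delta,(r',\hat\theta))$ is a rule representation and $((\Delta,a),(\Delta,(r',\hat\theta),a))$ is a legal $\forall$-move in $\game_\Gamma$.

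Finally, since $f$ is winning at $(\Delta,a)$, $\exists$ answers this move by $f(\Delta,(r',\hat\theta),a)=(\Delta_{j_0}\hat\theta,a')$ for some conclusion index $j_0$; this answer is an $A$-child of $(\Delta,a)$ along $f$ for every $A\in\Delta_{j_0}\hat\theta$, so the atomic reduct $(\Delta'',a''):=\sigma_f(\Delta_{j_0}\hat\theta,a')$ is an $A$-successor of $(\Delta,a)$ along $f$ for each such $A$, whence $(\Delta'',a'')\in\bigcap_{q\in\Delta_{j_0}}\Succ_f(\hat\theta(q),\Delta,a)=\lsem\Delta_{j_0}\theta\rsem_{Y,\tau}$. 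Since moreover $(\Delta,a)\to(\Delta'',a'')$ we get $(\Delta'',a'')\in Y$, so $\lsem\Delta_{j_0}\theta\rsem_{Y,\tau}\neq\emptyset$, contradicting the previous step. Hence $\lsem\bar\Delta\rsem_{TY,\tau}\neq\emptyset$, $\gamma$ is well-defined and coherent, and $(Y,\gamma,h)$ is the desired coherent model. I expect the main obstacle to be exactly the translation performed in the second paragraph: replacing an arbitrary matching one-step instance by a non-identifying one that is a legal move of $\game_\Gamma$ \emph{without} any conclusion gaining satisfiability, and then transporting $\exists$'s response back through $\sigma_f$ into $Y$ and into the interpretation of the selected conclusion — in particular one must be careful that the contracted rule introduces no conclusion escaping the emptiness of the $\lsem\Gamma_i\sigma\rsem_{Y,\tau}$.
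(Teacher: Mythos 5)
Your proof is correct and follows essentially the same route as the paper's: propositionalise the modal part of an atomic position with fresh variables $p_B$, value them by $\Succ_f(B,\Delta,a)$, apply the contrapositive of one-step tableau completeness, use contraction closure to turn the matching instance into a legal (non-identifying) $\forall$-move, and derive the contradiction from $\exists$'s winning response reduced via $\sigma_f$ into the interpretation of the chosen conclusion. The only cosmetic differences are that you interpret the propositionalised sequent over $Y$ where the paper uses $\Succ_f(\Delta,a)$ as the base set (which spares you an implicit naturality step), and that you spell out the consistency of atomic positions needed for the valuation; note that both your argument and the paper's rely on the same ``w.l.o.g.'' reading of contraction closure, under which every conclusion of the contracted instance contains some conclusion of the original.
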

\begin{proof}
We follow Definition~\ref{def:premodel} and put $Y = \lbrace (\Delta, a) \in \atomic(\Gamma) \mid \sigma_f(\Gamma, a_I)
\to^* (\Delta, a)  \rbrace$ where $\to$ is as in the definition, and
we define a coherent valuation $h: \Var \to Y$ by $h(p) = \lbrace
(\Delta, a) \in Y \mid p \in \Delta \rbrace$. It remains to be seen
that we can define $\gamma: Y \to TY$ coherently.
It is a consequence of Lemma~\ref{lem:sanity}
  and of the fact that
  $f$ is a winning strategy for $\exists$ in $\game_\Gamma$ that  for all $(\Delta_1,a_1), (\Delta_2,a_2) \in Y$ we have
  \begin{equation}\label{equ:consistent}
      (\Delta_2, a_2) \in \Succ_f(A, \Delta_1, a_1) \quad
	 \mbox{implies} \quad
	     (\Delta_2, a_2) \notin \Succ_f(\bar A, \Delta_1, a_1).
  \end{equation}
   Now suppose for a contradiction that there is no $\gamma: Y \to \F Y$
   such that $(Y,\gamma)$ is a coherent coalgebra structure for $\Gamma$. Then there exists
   some $(\Delta,a) \in Y$ such that we cannot find a $t \in \F Y$ that
   satisfies the condition in Definition~\ref{def:premodel}.
   Consider the set of formulas
  \begin{equation}\label{magic_tableau}
     \begin{array}{l}
     \Theta =  \{\hearts (p_{A_1},\ldots,p_{A_n}) \mid \hearts
     (A_1, \ldots, A_n)  \in \Delta\}  \\ \hspace{2cm}\cup  \quad
     \{ \bar{\hearts} (p_{A_1},\ldots,p_{A_n}) \mid
     \bar{\hearts}( A_1,\ldots,A_n)  \in \Delta\}
     \end{array}
   \end{equation}
  where for any formulas of the form
   $\hearts(A_1,\ldots,A_n)$ or $\bar\hearts(A_1,\ldots,A_n)$ in $\Delta$
  we associate a unique propositional variable $p_{A_i}$ 
  to the formula $A_i$, for $i \in \{1,\ldots,n\}$. Let $V_\Theta$ be the set of propositional variables
  occurring in $\Theta$.
  We define a valuation $\tau: V_\Theta \to \Pow ( \Succ_f(\Delta,a))$ by putting $\tau(p_A) = \Succ_f(A,\Delta,a)$.

  Using our assumption on $(\Delta,a)$ it is not difficult to see that
  $\psem{\Theta}_{T \Succ_f(\Delta,a), \tau}=\emptyset$.
  Therefore one-step tableau completeness implies that there exists
  a rule $\Gamma_0 / \Gamma_1 \cdots \Gamma_n$ and a substitution
  $\sigma:V \to V$ such that $\Gamma_0 \sigma \subseteq \Theta$
   and $\psem{\Gamma_i \sigma}_{\Succ_f(\Delta,a),\tau}
   = \emptyset$ for all $i \in \{1,\ldots,n\}$. Because 
   of contraction closure of $\Rules$ we can assume w.l.o.g.\ that
   $\sharp(\Gamma_0 \sigma) = \sharp(\Gamma_0)$.
   
   On the other hand, for $\eta: V_\Theta \to \FoRm(\Lambda)$ with $\eta(p_A) = A$,
   we clearly have $\Gamma_0 \sigma \eta \subseteq \Delta$ with $\sharp(\Gamma_0 \sigma \eta)=\sharp(\Gamma_0)$, and
   thus $\forall$ can move in the tableau game from position $(\Delta,a)$
   to the position $(\Delta, (\Gamma_0/\Gamma_1 \cdots 
   \Gamma_n,\eta \circ \sigma),a)$.
   Now $\exists$ moves to some $(\Gamma_j \sigma \eta,a'')$ with
   $j \in \{1,\ldots,n\}$ according to her winning strategy $f$.
   Therefore we have $(\Gamma_j \sigma \eta,a'') \in \Chld_f(\Delta,a)$.
   Furthermore, the play can be continued according to $\exists$'s
   strategy $f$ until the atomic position $(\Delta',a') = \sigma_f(\Gamma_j \sigma \eta,a'')$
   is reached. By definition we have
   \begin{equation}\label{equ:contradict}
 (\Delta',a') \in \Succ_f(B\eta,\Delta,a) 
   \;\mbox{for all} \; B \in \Gamma_j \sigma . 
   \end{equation}
   It now follows that $(\Delta',a')  \in \psem{B}_{\Succ_f(\Delta,a),\tau}$
   for all $B \in \Gamma_j \sigma$. To see this, consider an arbitrary formula $B \in \Gamma_j \sigma$.
   By the definition of $\Theta$ and the fact that $\Gamma_0 \sigma \subseteq \Theta$
   we have that $\Gamma_j \sigma$ consists of atoms only. 
   Therefore $B = p_A$ for some formula $A$. By (\ref{equ:contradict}), we know that
           $(\Delta',a') \in \Succ_f(p_A\eta,\Delta,a) = \Succ_f(A,\Delta,a)$, 
           and therefore
           $(\Delta',a')  \in \psem{B}_{\Succ_f(\Delta,a),\tau}$. 
   As $B$ was an arbitrary element of $\Gamma_j \sigma$ we obtain 
   $(\Delta',a') \in \psem{B}_{
   \Succ_f(\Delta,a),\tau}$
   for all $B \in \Gamma_j \sigma$, which
   contradicts the fact that
   $\psem{\Gamma_j \sigma}_{
   \Succ_f(\Delta,a),\tau} = \emptyset$. This concludes the proof.
\end{proof}
\noindent
\eat{
\begin{defi}
\label{defn:initialpos}
Let $f : \Seq(\Gamma) \times \Bp(\Gamma) \times Q \to \Seq(\Gamma) \times Q$ be a history-free winning strategy for $\exists$ in $\game_\Gamma$, and let $\bbY = (Y, \gamma, h)$ be the 
corresponding model of a coherent coalgebra structure $(Y,\gamma)$ for $\Gamma$ (cf.~Definition~\ref{defn:model}). We write $\MC_\Gamma$ for the model-checking game $\MC_\Gamma(\bbY)$, and call a $\MC_\Gamma$-position $(A,(\Delta,a))$ with $A \in \Gamma$ and $(\Delta,a) = \sigma_f(\Gamma,a_\Gamma)$ an \emph{initial} position.
\end{defi}
}
\noindent
We can now take a history-free winning strategy $f$ for $\exists$ in
the tableau game and show that the induced coherent model $Y$ satisfies
the initial sequent. This is achieved by converting the strategy $f$
(in the tableau game) to a strategy $\tilde f$ in the model checking
game over $Y$. Satisfiability then follows as soon as we establish
that $\MC_\Gamma$-plays that are played according to $\tilde f$ correspond to traces through $\game_\Gamma$-plays that are played according to $f$. 

\begin{lem}
\label{lem:mcstrategy}
Let $f$ be a history-free winning strategy for $\exists$ in $\game_\Gamma$,  let $\bbY = (Y, \gamma, h)$ be the 
coherent model induced by $f$, 
and consider a position $(A_0,(\Delta_0,a_0))$ in $\MC_\Gamma(\bbY)$ 
with $(\Delta_0,
a_0) = \sigma_f(\Gamma, a_I)$ and $A_0 \in \Delta_0$.
Then $\exists$ has a strategy $\tilde f$ in $\MC_\Gamma(\bbY)$ at $(A_0,(\Delta_0,a_0))$ 
such that for any (possibly infinite) sequence
$(A_0,
(\Delta_0,a_0))(A_1,(\Delta_1,a_1))\ldots(A_n,(\Delta_n,a_n))\ldots$
that can be extended to an $\tilde f$-conform $\MC_\Gamma(\bbY)$-play by inserting positions of the form $(\hearts(B_1,
\dots, B_n), (U_1, \dots, U_n))$
we have
\begin{enumerate}[\em(1)]
\item there exists a (possibly infinite) $\game_\Gamma$-play $\pi$ and a trace $\tau = B_0, B_1, \ldots,B_r,\ldots$ through 
$\pi$ (cf. Def.~\ref{def:gametrace}), such that
  \begin{enumerate}[\em(a)]
\item $\pi$ contains a sub-sequence of $\forall$-positions of the form 
\[(\Delta'_0,a'_0),(\Delta'_1,a'_1),\ldots,(\Delta'_n,a'_n),\ldots\]
with $\sigma_f(\Delta'_i,a'_i) =
(\Delta_i,a_i)$ and $\Delta'_i \ni A_i$ for each $i \ge 0$
\item $\tau$ is contractable to $A_0,A_1,\ldots,A_n,\ldots$\;, that is, there exists an increasing sequence $0 = s_0 < s_1 < 
\dots$ of indices such that
$A_0 A_1 \ldots = B_{s_0} B_{s_1} \ldots$, where $B_i = B_{s_j}$ whenever $s_j \leq i  < s_{j+1}$, for $j=0,1,\ldots$.
  \end{enumerate}
\item for all $\MC_\Gamma(\bbY)$-positions of the form $(A,(\Delta,a))$ occurring in $\pi$, with $A$ atomic, we have $A \in \Delta$.
\end{enumerate}
\end{lem}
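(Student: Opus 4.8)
The plan is to define $\tilde f$ \emph{together with} an accompanying bookkeeping: for every finite $\tilde f$-conform prefix of an $\MC_\Gamma(\bbY)$-play starting at $(A_0,(\Delta_0,a_0))$ I will build a $\game_\Gamma$-play $\pi$ played according to $f$ and to the fixed propositional strategy $\pstrat$, and a trace $\tau = B_0 B_1 \dots$ through $\pi$, by induction on the length of the model-checking play. The invariant to carry along is that $\pi$ visits $\forall$-positions $(\Delta'_0,a'_0),(\Delta'_1,a'_1),\dots$ with $\sigma_f(\Delta'_i,a'_i) = (\Delta_i,a_i)$ and $A_i \in \Delta'_i$, while $\tau$ runs through a non-empty finite block of copies of $A_0$, then of $A_1$, and so on in this order. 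Since a contraction only stutters, this invariant delivers (1)(a) and (1)(b) at once; and since an atomic formula is never principal in a propositional or a $(\fix)$-rule, it survives every reduction, so $A_i$ atomic together with $A_i \in \Delta'_i$ forces $A_i \in \Delta_i$, which is (2). For the base case I will take $\pi$ to be the unique modal-rule-free play, provided by the definition of $\sigma_f$, from $(\Gamma,a_\Gamma)$ down to $\sigma_f(\Gamma,a_\Gamma) = (\Delta_0,a_0)$; as $A_0 \in \Delta_0$ is atomic it lies in every sequent of that play, so I may simply put $(\Delta'_0,a'_0) = (\Delta_0,a_0)$ and start $\tau$ with one copy of $A_0$.

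For the inductive step I distinguish on the shape of $A_i$. If $A_i \in \Var \cup \bar\Var$ the play is complete and there is nothing to add (and (2) has already supplied $A_i \in \Delta_i$). If $A_i$ is not atomic, I prolong $\pi$ from $(\Delta'_i,a'_i)$ towards the atomic position $(\Delta_i,a_i)$ following $\pstrat$ and $f$; along this segment $A_i$ becomes principal in exactly one rule application, since propositional rules are invertible and so the order imposed by $\pstrat$ still reaches it. I define $\tilde f$ to \emph{mirror} the choice $f$ makes there: if $A_i = B_1 \lor B_2$ and $f$ picks the conclusion containing $B_k$, then $\tilde f$ answers $(B_k,(\Delta_i,a_i))$; if $A_i = B_1 \land B_2$ the $(\land)$-rule has the single conclusion $\Gamma;B_1;B_2$, so whichever conjunct $\forall$ chooses in $\MC_\Gamma(\bbY)$ is matched by the appropriate pair in $\Tr(-,B_1\land B_2,1)$; if $A_i = \eta p.B$ with $\eta \in \{\mu,\nu\}$ I set $A_{i+1} = B[p := \eta p.B]$ and follow the $(\fix)$-step. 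In each case $\tau$ is extended by the (stuttering) block of copies of $A_i$ produced before $A_i$ is reduced, followed by $A_{i+1}$, and $(\Delta'_{i+1},a'_{i+1})$ is taken to be the $\forall$-position reached immediately after the rule step; it contains $A_{i+1}$ and satisfies $\sigma_f(\Delta'_{i+1},a'_{i+1}) = (\Delta_i,a_i) = (\Delta_{i+1},a_{i+1})$ because the modal-rule-free reduction continues unchanged from there.

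The substantial case — and the one I expect to be the main obstacle — is the modal one, $A_i = \hearts(C_1,\dots,C_n)$, which is atomic, so by the invariant $A_i \in \Delta_i$ and, up to a modal-rule-free prefix, $(\Delta'_i,a'_i) = (\Delta_i,a_i)$. I will have $\tilde f$ move to $\bigl(\hearts(C_1,\dots,C_n),(\Succ_f(C_1,\Delta_i,a_i),\dots,\Succ_f(C_n,\Delta_i,a_i))\bigr)$; this is legal precisely because $\gamma$ is a \emph{coherent} coalgebra structure (Definition~\ref{def:premodel}, Proposition~\ref{prop:premodexist}), so $\gamma(\Delta_i,a_i) \in \lsem\hearts\rsem_Y(\Succ_f(C_1,\Delta_i,a_i),\dots,\Succ_f(C_n,\Delta_i,a_i))$. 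When $\forall$ answers with $(C_j,x)$ for some $x \in \Succ_f(C_j,\Delta_i,a_i)$, the definition of $\Succ_f$ furnishes a $C_j$-child $(\Delta^\ast,a^\ast) = f(\Delta_i,\flat,a_i)$ with $C_j \in \Delta^\ast$ and $\sigma_f(\Delta^\ast,a^\ast) = x$; in $\pi$ I let $\forall$ play the modal blueprint $\flat$, let $\exists$ answer by $f$, take $(\Delta'_{i+1},a'_{i+1}) = (\Delta^\ast,a^\ast)$ and set $(\Delta_{i+1},a_{i+1}) = x$. The delicate point, which is the heart of the argument, is that $\tau$ must pass from $\hearts(C_1,\dots,C_n)$ \emph{directly} to $C_j$ along this step, i.e.\ $(\hearts(C_1,\dots,C_n),C_j) \in \Tr(\Delta_i,\flat,c)$ for the chosen conclusion index $c$; equivalently, $\hearts(C_1,\dots,C_n)$ must occur in the premise instance $\Gamma_0\sigma$ of $\flat = (\Gamma_0/\Gamma_1\dots\Gamma_n,\sigma)$ with the premise variable standing over $C_j$ lying in $\Gamma_c$. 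Since $C_j = q\sigma$ for some $q \in \Gamma_c$ already, I expect to use contraction closure of $\Rules$ together with the per-operator restriction $\sharp(\Gamma_0\sigma) = \sharp(\Gamma_0)$ to choose $\flat$ so that all matchable modal formulas of $\Delta_i$ — in particular $\hearts(C_1,\dots,C_n)$ — are carried into $\Gamma_0\sigma$ while $f$'s answer still reduces to $x$; carrying out this choice, and checking that the resulting child is still a $C_j$-child of $(\Delta_i,a_i)$ along $f$, is where most of the work sits. Granting it, $\tau$ is extended by the stuttering block of copies of $A_i$ followed by $C_j$, which completes the step. The remaining parity bookkeeping is routine: $\Omega'$ on $\MC_\Gamma(\bbY)$ reads off $\Omega$ on the current formula exactly as $\Omega$ acts on $\tau$, and a contraction leaves the set of infinitely recurring priorities unchanged, so the recurring parities of the $\MC_\Gamma(\bbY)$-play agree with those of $\tau$.
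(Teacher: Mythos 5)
Your proposal reconstructs the paper's proof essentially step for step: the same simultaneous induction that defines $\tilde f$ while building an $f$- and $\pstrat$-conform $\game_\Gamma$-play $\pi$, an underlying path and a stuttering trace $\tau$, carrying the invariant $\sigma_f(\Delta'_i,a'_i)=(\Delta_i,a_i)$ and $A_i\in\Delta'_i$; the same mirroring of $f$'s choice at disjunctions; and the same appeal to coherence of $\gamma$ to legitimise the modal move to $\bigl(\hearts(C_1,\dots,C_n),(\Succ_f(C_1,\Delta_i,a_i),\dots,\Succ_f(C_n,\Delta_i,a_i))\bigr)$. The case analysis, the treatment of conjunctions as $\forall$-moves matched by $\Tr(-,B_1\land B_2,1)$, and the parity bookkeeping via contraction all agree with the paper.

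The one place you diverge is exactly the step you defer with ``Granting it'', and there your proposed tool is not the one the paper uses. The paper closes the modal trace step as follows: from $(\Delta_{i+1},a_{i+1})\in\Succ_f(C_j,\Delta_i,a_i)$ it extracts a $C_j$-child $(\Delta'',a'')=f(\Delta_i,\flat,a_i)$ with $C_j\in\Delta''$ and $\sigma_f(\Delta'',a'')=(\Delta_{i+1},a_{i+1})$, notes that $\Delta''$ is the $c$-th conclusion of $\rho(\Delta_i,\flat)$ for some $c$, and reads $(A_i,C_j)\in\Tr(\Delta_i,\flat,c)$ off the definition of the trace relation; contraction closure plays no role here --- in the paper it is used only in Proposition~\ref{prop:premodexist}, to arrange that the rule instance delivered by one-step completeness can be taken with a non-identifying substitution. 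Your worry (that $A_i$ itself must occur in the matched premise $\Gamma_0\sigma$ of $\flat$, with $C_j$ the image of one of its argument variables lying in $\Gamma_c$) identifies the right obligation, but contraction closure does not let you re-route a given $C_j$-child through a larger premise while keeping $f$'s answer, and you do not carry the argument out; as it stands this is the missing piece of your modal case. A smaller point: your base case sets $(\Delta'_0,a'_0)=(\Delta_0,a_0)$ on the grounds that $A_0$ is atomic, whereas the paper's proof takes $A_0\in\Gamma$ and starts $\pi$ at $(\Gamma,a_\Gamma)$ --- which is what is needed when the lemma is invoked in Theorem~\ref{thm:existssatisfiability} for an arbitrary, not necessarily atomic, $A_0\in\Gamma$.
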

\begin{proof}
    We define the strategy $\tilde f$ for $\exists$ in 
		$\MC_\Gamma(\bbY)$ starting at position $(A_0,(\Delta_0,a_0))$ by showing
		how to extend each partial, $\tilde f$-conform $\MC_\Gamma(\bbY)$-play starting in $(A_0,(\Delta_0,a_0))$ and ending in an $\exists$-position $b = (B,(\Delta,a))$ 
    with a position $b'$, such that $(b,b')$ is a valid move for
		$\exists$ in $\MC_\Gamma(\bbY)$. 
    We will show later that each such partial play determines a partial $\game_\Gamma$-play starting in $(\Delta_0,a_0)$ and ending in some $(\Delta',a') \in \Seq(\Gamma) \times Q$ with $\sigma_f(\Delta',a')=(\Delta,a)$ 
    and $\Delta' \ni B$.
    At this point, we assume the above, and base our definition of $\exists$'s strategy solely on 
    $(B,(\Delta,a))$ and $(\Delta',a')$. We define $\exists$'s move in $(B,(\Delta,a))$ by case analysis on $B$:
\begin{enumerate}[\hbox to8 pt{\hfill}]
\item\noindent{\hskip-12 pt\bf Case $B = B_1 \vee B_2$:}\ Then $\sigma_f(\Delta',a')=(\Delta,a)$ together with $\Delta' \ni B$ ensure the existence of a $\game_\Gamma$-play of the form 
$$(\Gamma_0,d_0)(\Gamma_0,\flat_0,d_0) \ldots (\Gamma_{k-1},\flat_{k-1},d_{k-1})(\Gamma_k,d_k)$$ 
with $(\Gamma_0,d_0)=(\Delta',a')$, $\Gamma_j \not\in \atomic(\Gamma)$ for 
$0 \le j < k$ and $(\Gamma_k,d_k)=(\Delta,a)$,  
that is played according to $f$ and $\pstrat$, such that 
$\flat_j = B_1 \vee B_2$ for some $0 \le j < k$. 
Let 
$\Gamma_0 c_0 \ldots c_{k-1} \Gamma_k$
 be an underlying path of the above $\game_\Gamma$-play.
Then, 
$c_{j} \in \{1,2\}$, and 
%
we define $\exists$'s move at position 
$(B,(\Delta,a))$ of $\MC_\Gamma(\bbY)$ to be to the position 
$(B_{c_{j}},(\Delta,a))$. 
Moreover, we note for future reference that the tableau 
node $(\Gamma_{j+1},d_{j+1})$ satisfies $\sigma_f(\Gamma_{j+1},d_{j+1})= (\Delta,a)$ and $\Gamma_{j+1} \ni B_{c_j}$.
\item\noindent{\hskip-12 pt\bf Case $B = \hearts (B_1,\ldots,B_n)$:}\ We define
$\exists$'s move at position $(B,(\Delta,a))$ of $\MC_\Gamma(\bbY)$ to be 
to the position 
$(B,(U_1,\ldots,U_n))$
with 
$$
\begin{array}{lcl}
U_j & = &\Succ_f(B_j,\Delta,a)
\end{array}$$
for $j = 1,\ldots,n$. To justify this move, we must show that $\gamma(\Delta,a) \in \lsem \hearts \rsem_Y(U_1,\ldots,U_n)$.
But this follows from Definition~\ref{def:premodel}.
\end{enumerate}
\noindent
This defines a strategy for $\exists$ as there is no choice for
$\exists$ at all other positions $(B,(\Delta,a))$
in $\MC_\Gamma(\bbY)$.
Now consider a (possibly infinite) $\MC_\Gamma(\bbY)$-play of the form
\[(A_0, (\Delta_0,a_0)),(A_1,(\Delta_1,a_1)),\ldots,(A_n,(\Delta_n,a_n)),\ldots\]
played according to the previously defined strategy. 
We shall construct a $\game_\Gamma$-play $\pi$ and an underlying path $\pi'$ of $\pi$ with an associated trace $\tau$, with the required properties. In particular, the construction of $\pi$ will supply a sequence of $\game_\Gamma$-positions $(\Delta_0',a_0'), (\Delta_1',a_1'), \ldots$ to be used in defining $\exists$'s moves.

To begin with, note that by assumption on $(A_0, (\Delta_0,a_0))$ 
we have $\sigma_f(\Gamma,a_\Gamma) = (\Delta_0,a_0)$ and $A_0 \in \Gamma$. Hence, we let $(\Delta_0',a_0') = (\Gamma,a_\Gamma)$ be the first position
of $\pi$, let $\Gamma$ be the first position of $\pi'$, and let $\tau_0 = A_0 \in \Gamma$.

Now assume that $\pi$, $\pi'$ and $\tau$ have been constructed up to a position $(\Delta'_i,a'_i)$, with $\sigma_f(\Delta'_i,a'_i)=(\Delta_i,a_i)$ and $\Delta'_i \ni A_i$. We extend the partial $\game_\Gamma$-play $\pi$ with a segment starting in $(\Delta'_i,a'_i)$ and ending in some $(\Delta'_{i+1},a'_{i+1})$, with $\sigma_f(\Delta'_{i+1},a'_{i+1})=(\Delta_{i+1},a_{i+1})$ and $\Delta'_{i+1} \ni A_{i+1}$. Here $(\Delta_{i+1},a_{i+1})$ represents the position obtained as a result of $\exists$ moving in $(\Delta_i,a_i)$, based on the additional information provided by $(\Delta_i',a_i')$, according to the strategy defined earlier. At the same time, we extend the underlying path $\pi'$ of $\pi$ with a segment $\Delta_i' \ldots \Delta_{i+1}'$, and the trace $\tau$ with a segment $A_i, \ldots, A_i, A_{i+1}$. These constructions are carried out by case analysis on $A_i$.
 \begin{enumerate}[\hbox to8 pt{\hfill}]
\item\noindent{\hskip-12 pt\bf Case $A_i = A_i^1 \vee A_i^2$:}\ Here, the
$\MC_\Gamma(\bbY)$-move from $(A_i,(\Delta_i,a_i))$ to $(A_{i+1},(\Delta_{i+1},a_{i+1}))$ is an $\exists$-move played according to the strategy defined earlier. The definition of this move was based on a $\game_\Gamma$-play of the form
$$(\Gamma_0,d_0)(\Gamma_0,\flat_0,d_0) \ldots(\Gamma_{k-1},\flat_{k-1},d_{k-1})(\Gamma_k,d_k)$$
with $(\Gamma_0,d_0)=(\Delta_i',a_i')$, $\Gamma_l \not\in \atomic(\Gamma)$ for $0 \le l < k$ and $(\Gamma_k,d_k)=(\Delta_i,a_i)$, played according to $f$ and $\pstrat$, with an underlying path $\Gamma_0 c_0 \ldots c_{k-1} \Gamma_k$, such that there exists $0 \le j < k$ with $(A_i,A_i) \in \Tr(\Gamma_l,\flat_l,c_l)$ for $0 \le l < j$ and $(A_i,A_i^{c_{j}}) \in \Tr(\Gamma_{j},\flat_{j},c_{j})$. 
Moreover, this definition guarantees that we have
$\sigma_f(\Gamma_{j+1},d_{j+1}) = (\Delta_{i+1},a_{i+1}) =
(\Delta_i,a_i)$. We now put
$(\Delta_{i+1}',a_{i+1}')=(\Gamma_{j+1},d_{j+1})$, and extend the
play $\pi$ to $(\Gamma_0,\flat_0,d_0) \ldots
(\Gamma_{j},\flat_{j},d_{j}) (\Gamma_{j+1},d_{j+1})$, the underlying
path $\pi'$ with $c_0 \ldots c_j \Gamma_{j+1}$, and the trace $\tau$
with $A_i, \ldots,A_i,A_i^{c_{j}}$.

\item\noindent{\hskip-12 pt\bf Case $A_i = A_i^1 \wedge A_i^2$:}\ This time, the move from $(A_i,(\Delta_i,a_i))$ to $(A_{i+1},(\Delta_{i+1},a_{i+1}))$ is a $\forall$-move, with $A_{i+1} = A_i^l$ for some $l \in \{1,2\}$ and $(\Delta_{i+1},a_{i+1}) = (\Delta_i,a_i)$. Since $A_i \in \Delta_i'$ and $\sigma_f(\Delta_i',a_i') = (\Delta_i,a_i)$, it follows that there exist a $\game_\Gamma$-play of the form 
$$(\Gamma_0,d_0)(\Gamma_0,\flat_0,d_0) \ldots (\Gamma_{k-1},\flat_{k-1},d_{k-1})(\Gamma_k,d_k)$$
with $(\Gamma_0,d_0)=(\Delta_i',a_i')$, $\Gamma_l \not\in \atomic(\Gamma)$ for $0 \le l < k$ and $(\Gamma_k,d_k)=(\Delta_i,a_i)$, played according to $f$ and $\pstrat$, such that $\flat_j = A^1_i \wedge A^2_i$ for some $0 \le j < k$, and an underlying path $\Gamma_0 c_0 \ldots
c_{k-1} \Gamma_k$ of this $\game_\Gamma$-play that satisfies $(A_i,A_i) \in \Tr(\Gamma_h,\flat_h,c_h)$ for $0 \le h < j$ and $(A_i,A_i^{l}) \in \Tr(\Gamma_j,A^1_i \wedge A^2_i,c_j)$. From the latter we obtain $l = c_j$. We then let $(\Delta_{i+1}',a_{i+1}')$ be given by 
$(\Gamma_{j+1},d_{j+1})$, and note that
$\sigma_f(\Delta_{i+1}',a_{i+1}')=(\Delta_{i+1},a_{i+1}) =
(\Delta_i,a_i)$ and $\Delta_{i+1}' = \Gamma_{j+1} \ni A_i^{c_j} =
A_i^l = A_{i+1}$. It is therefore possible for us to extend the play $\pi$
with the sequence $(\Gamma_0,\flat_0,d_0) \ldots
(\Gamma_{j},\flat_{j},d_{j})(\Gamma_{j+1},d_{j+1})$, the underlying
path $\pi'$ with $c_0 \ldots c_{j} \Gamma_{j+1}$, 
and the trace $\tau$ with $A_i, \ldots, A_i,A_i^l$.

\item\noindent{\hskip-12 pt\bf Case $A_i = \hearts (B_1,\ldots,B_n)$:}\ The move from $(A_i,(\Delta_i,a_i))$ to $(A_{i+1},(\Delta_{i+1},a_{i+1}))$ thus incorporates an $\exists$-move played according to the strategy defined earlier, followed by a $\forall$-move. Again, from $A_i \in \Delta_i'$ and $\sigma_f(\Delta_i',a_i')=(\Delta_i,a_i)$ we obtain a $\game_\Gamma$-play of the form 
$$(\Gamma_0,d_0)(\Gamma_0,\flat_0,d_0) \ldots (\Gamma_{k-1},\flat_{k-1},d_{k-1})(\Gamma_k,d_k)$$
with $(\Gamma_0,d_0)=(\Delta_i',a_i')$, $\Gamma_l \not\in
\atomic(\Gamma)$ for $0 \le l < k$ and
$(\Gamma_k,d_k)=(\Delta_i,a_i)$, played according to $f$ and
$\pstrat$, that has an underlying path $\Gamma_0 c_0 \ldots c_{k-1}
\Gamma_k$ such that $(A_i,A_i) \in \Tr(\Gamma_j,\flat_j,c_j)$ for $0
\le j < k$. Also, by definition of $\exists$'s move in  $(A_i,(\Delta_i,a_i))$
we obtain $A_{i+1} = B_j$ and $(\Delta_{i+1},a_{i+1}) \in \Succ_f(B_j,\Delta_i,a_i)$ 
for some $j \in \{1,\ldots,n\}$. 
It follows that there exists a position $(\Delta'',a'')$ such that $(\Delta'',a'') \in \Chld_f(\Delta_i,a_i)$, $B_j \in \Delta''$ and $\sigma_f(\Delta'',a'') = (\Delta_{i+1},a_{i+1})$.
We then let $(\Delta'_{i+1},a'_{i+1})$ be given by $(\Delta'',a'')$.
Moreover, from $(\Delta'_{i+1},a'_{i+1}) \in \Chld_f(\Delta_i,a_i)$
it follows that $\forall$ can move in $\game_\Gamma$ from
$(\Delta_{i},a_{i})$ to some $(\Delta_{i},\flat,a_{i})$ with
$f(\Delta_{i},\flat,a_{i}) = (\Delta'_{i+1},a'_{i+1})$. Since
$\exists$'s move  at position $(\Delta_{i},\flat,a_{i})$ was legal,
this now yields $c \in \Nat$ such that $\Delta'_{i+1}$ is the $c$-th
conclusion of the rule represented by $(\Delta_{i},\flat)$. This
together with $B_j \in \Delta'_{i+1}$ yield $(A_i,B_j) \in
\Tr(\Delta_{i},\flat,c)$. It is now possible to extend the play
$\pi$ with 
\[(\Gamma_0,\flat_0,d_0) \ldots
(\Gamma_{k-1},\flat_{k-1},d_{k-1})(\Gamma_k,d_k) (\Delta_i,\flat,a_i)
(\Delta'_{i+1},a'_{i+1})\ ,
\] 
the underlying path $\pi'$ with $c_0 \ldots c_{k-1} \Gamma_k c \Delta_{i+1}'$, 
and the trace $\tau$ with $A_i, \ldots, A_i , B_j$.

\item\noindent{\hskip-12 pt\bf Case $A_i = \eta X.A$, $\eta \in
\{\mu,\nu\}$:}\ The move from $(A_i,(\Delta_i,a_i))$ to
$(A_{i+1},(\Delta_{i+1},a_{i+1}))$ consists of unfolding the fixpoint
variable $X$, that is, $A_{i+1} = A[X := \eta X.A]$ and
$(\Delta_{i+1},a_{i+1}) = (\Delta_i,a_i)$. Again, $\Delta'_i \ni \eta
X.A$ together with $\sigma_f(\Delta'_i,a'_i) =(\Delta_i,a_i)$ yield a
$\game_\Gamma$-play of the form
$$(\Gamma_0,d_0)(\Gamma_0,\flat_0,d_0) \ldots (\Gamma_{k-1},\flat_{k-1},d_{k-1})(\Gamma_k,d_k)$$
with $(\Gamma_0,d_0)=(\Delta'_i,a'_i)$, $\Gamma_l \not\in \atomic(\Gamma)$ for $0 \le l < k$ and $(\Gamma_k,d_k)=(\Delta_i,a_i)$, played according to $f$ and $\pstrat$, such that $\flat_j=\eta X.A$ for some $0 \le j < k$, and an underlying path $\Gamma_0 c_0 \ldots c_{k-1} \Gamma_k$ of this $\game_\Gamma$-play that satisfies $(A_i,A_i) \in \Tr(\Gamma_h,\flat_h,c_h)$ for $0 \le h < j$ and $(A_i,A_{i+1}) \in \Tr(\Gamma_{j},\flat_j,c_j)$. We now let $(\Delta'_{i+1},a'_{i+1})$ be given by $(\Gamma_{j+1},d_{j+1})$, 
and note that
$\sigma_f(\Delta_{i+1}',a_{i+1}')=(\Delta_{i+1},a_{i+1}) =
(\Delta_i,a_i)$ and $\Delta_{i+1}' \ni A[X := \eta X.A]$. It is
therefore possible to extend the play $\pi$ with
$(\Gamma_0,\flat_0,d_0) \ldots
(\Gamma_{j},\flat_{j},d_{j})(\Gamma_{j+1},d_{j+1})$, the underlying
path $\pi'$ with $c_0 \ldots c_j \Gamma_{j+1}$, and the trace $\tau$
with $A_i, \ldots, A_i,A[X := \eta X.\phi]$.
  \end{enumerate}

\noindent To show the second property of the $\MC_\Gamma(\bbY)$-play 
\[(A_0, (\Delta_0,a_0)),(A_1,(\Delta_1,a_1)),\ldots,(A_n,(\Delta_n,a_n)),\ldots\]
we note that $\sigma_f(\Delta_i',a_i')=(\Delta_i,a_i)$ together with $A_i$ atomic and $A_i \in \Delta_i'$ yield $A_i \in \Delta_i$, for $i = 0,1,\ldots$.
\end{proof}

\noindent
Finally, we prove satisfiability of $\Gamma$ in $\bbY$ by showing
that the strategy resulting from Lemma~\ref{lem:mcstrategy} is a
winning strategy for $\exists$ in $\MC_\Gamma(\bbY)$.

\begin{thm}
\label{thm:existssatisfiability}
Let $f : \Seq(\Gamma) \times \Bp(\Gamma) \times Q \pto \Seq(\Gamma) \times Q$ be a history-free winning strategy for $\exists$ in $\game_\Gamma$, and let $\bbY = (Y, \gamma, h)$ be the 
corresponding model of a coherent coalgebra structure $(Y,\gamma)$ for $\Gamma$. Then, $\bbY,(\Delta,a) \models A$ for all states $(\Delta,a) \in \sigma_f(\Gamma,a_\Gamma)$ and all formulas $A \in \Gamma$.
\end{thm}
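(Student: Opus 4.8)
The plan is to derive the statement from the adequacy of the model-checking game (Theorem~\ref{thm:adequacy}) together with Lemma~\ref{lem:mcstrategy}. Write $(\Delta_0,a_0) := \sigma_f(\Gamma,a_\Gamma)$. By Theorem~\ref{thm:adequacy} it suffices to show, for each $A\in\Gamma$, that $\exists$ has a winning strategy in $\MC_\Gamma(\bbY)$ from position $(A,(\Delta_0,a_0))$, and the candidate is exactly the strategy $\tilde f$ produced by Lemma~\ref{lem:mcstrategy} at that position (its construction starts the associated $\game_\Gamma$-play in $(\Gamma,a_\Gamma)$, so it applies to every $A\in\Gamma$). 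So I would reduce everything to checking that every $\tilde f$-conform $\MC_\Gamma(\bbY)$-play starting in $(A,(\Delta_0,a_0))$ is won by $\exists$.

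First I would dispose of finite full plays. Under $\tilde f$, $\exists$ is never stuck: her prescribed move at a modal position $(\hearts(B_1,\dots,B_n),(\Delta,a))$ is to $(\hearts(B_1,\dots,B_n),(\Succ_f(B_1,\Delta,a),\dots,\Succ_f(B_n,\Delta,a)))$, which is legal by coherence of $\gamma$ (Definition~\ref{def:premodel}, using $\hearts(B_1,\dots,B_n)\in\Delta$, which holds by property~(2) of Lemma~\ref{lem:mcstrategy}), while at disjunctions and fixpoint positions a move always exists. Hence a finite full play ends at a $\forall$-position with no successors, which is won by $\exists$; the only apparently dangerous case would be a play ending at an $\exists$-position $(p,(\Delta,a))$ with $(\Delta,a)\notin h(p)$ (or the dual with $\bar p$). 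But property~(2) of Lemma~\ref{lem:mcstrategy} gives $p\in\Delta$ (resp.\ $\bar p\in\Delta$), and since $(\Delta,a)$ is reachable from $(\Gamma,a_\Gamma)$ by an $f$-conform play, $f$ is winning for $\exists$ there, so $\Delta$ contains no complementary pair (else $\forall$ wins via the axiom rule); coherence of $h$ then yields $(\Delta,a)\in h(p)$ (resp.\ $(\Delta,a)\in Y\setminus h(p)=h(\bar p)$, using $p\notin\Delta$), so this is actually a $\forall$-position and the case does not arise.

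The real work is the infinite plays. Given an infinite $\tilde f$-conform play $P$, let $A_0A_1A_2\dots$ be its sequence of formula-components (consecutive entries related by a move, modal steps passing through an auxiliary position of priority $0$). By property~(1) of Lemma~\ref{lem:mcstrategy} there is an infinite $f$-conform $\game_\Gamma$-play $\pi$ with an underlying path $\pi'$ and a trace $\tau=B_0B_1\dots$ along $\pi'$ that is contractable to $A_0A_1\dots$ by finite consecutive repetitions of each $A_j$. Since $f$ is winning for $\exists$, $\pi$ satisfies the parity condition of $\game_\Gamma$, i.e.\ the unique run of $\bbA_\Gamma$ along $\pi'$ is accepting; by the defining property of the $\Gamma$-parity automaton (Lemma and Definition~\ref{lem:parityword}) no trace along $\pi'$ is bad with respect to a parity function $\Omega$ for $\Gamma$, so $\tau$ is not bad. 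Because $\tau$ is contractable to $A_0A_1\dots$, the priorities occurring infinitely often in $\Omega(B_0)\Omega(B_1)\dots$ coincide with those occurring infinitely often in $\Omega(A_0)\Omega(A_1)\dots$, so $A_0A_1\dots$ is not bad either; and since the priorities seen along $P$ are $\Omega(A_0),\Omega(A_1),\dots$ together with interspersed $0$s that cannot raise the largest priority seen infinitely often, $P$ is not bad and hence won by $\exists$. Thus $\tilde f$ is winning, and Theorem~\ref{thm:adequacy} gives $\bbY,(\Delta_0,a_0)\models A$ for every $A\in\Gamma$.

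The hard part is really already behind us in Lemma~\ref{lem:mcstrategy}, which builds the $\game_\Gamma$-play and trace matching a model-checking play; what remains here is pure parity bookkeeping: converting the winning condition of $\game_\Gamma$ (phrased in terms of automaton states) into the absence of bad formula-traces along the underlying path via the specification of $\bbA_\Gamma$, and checking that contraction plus the priority-$0$ auxiliary positions of $\MC_\Gamma(\bbY)$ preserve exactly the set of priorities seen infinitely often. (If one reads Lemma~\ref{lem:mcstrategy} only for atomic $A_0$, the argument first yields $\bbY,(\Delta_0,a_0)\models A'$ for all atomic $A'\in\Delta_0$, and one then lifts to arbitrary $A\in\Gamma$ by downward induction along the purely propositional/fixpoint reduction from $\Gamma$ to $\Delta_0$ defining $\sigma_f$, using the semantics of $\wedge$, $\vee$ and validity of fixpoint unfolding.)
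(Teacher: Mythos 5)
Your proposal is correct and follows essentially the same route as the paper's proof: take $\tilde f$ from Lemma~\ref{lem:mcstrategy}, dispose of finite full plays via property~(2) of that lemma together with coherence of the valuation and the consistency of reachable sequents, and handle infinite plays by transferring the parity condition from the $f$-conform $\game_\Gamma$-play through the automaton's defining property and the contraction of the trace. You merely make explicit a few steps the paper leaves implicit (the appeal to Theorem~\ref{thm:adequacy}, the unpacking of the automaton's acceptance condition, and the priority bookkeeping under contraction), so no substantive difference remains.
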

\begin{proof}
Let $(\Delta_0,a_0) \in Y$ be such that
$\sigma_f(\Gamma,a_\Gamma)=(\Delta_0,a_0)$, and let $A_0 \in
\Gamma$. Thus, $(A_0,(\Delta_0,a_0))$ is an initial position of
$\MC_\Gamma(\bbY)$. Let $\tilde f$ be the strategy for $\exists$ at
$(A_0,(\Delta_0,a_0))$ in $\MC_\Gamma(\bbY)$ provided by
Lemma~\ref{lem:mcstrategy}. We show that $\bbY,(\Delta_0,a_0)
\models A_0$ by showing that $\exists$ wins all $\MC_\Gamma(\bbY)$-plays that start at position $(A_0,(\Delta_0,a_0))$ and are played according to $\tilde f$.

Consider such a play, and assume first that it is finite. Let $(A,(\Delta,a))$ be its last position of type $\Cl(\Gamma) \times (\Seq(\Gamma) \times Q)$. Thus, the last position of the play is either $(A,(\Delta,a))$ itself, or a $\forall$-position of type $(\hearts(B_1,\ldots,B_n),(U_1,\ldots,U_n))$, with $U_i = \emptyset$ for $i = 1,\ldots,n$. In either case, $A$ is atomic (otherwise the play would not be complete). We distinguish the following cases:

 \begin{enumerate}[(1)]
 \item $A = p$ for some propositional variable $p$. By coherence of
 the valuation, 
we have $p \in \Delta$, and therefore by the definition of $\bbY$ we have $(\Delta,a) \in h(p)$, which
implies that $(p,(\Delta,a))$ is a winning position for $\exists$.
\item $A = \bar p$. Similar to the previous case.
\item $A = \hearts(B_1,\ldots,B_n)$. According to the definition of $\exists$'s strategy $\tilde f$, the last position of the play must be a $\forall$-position of type $(\hearts(B_1,\ldots,B_n),(U_1,\ldots,U_n))$ with $U_i = \emptyset$ for $i = 1,\ldots,n$ (as $\exists$ can always play in positions of type $(\hearts(B_1,\ldots,B_n),(\Delta,a))$). Thus, $(A,(\Delta,a))$ is a winning position for $\exists$.
 \end{enumerate}
It therefore follows that $\exists$ wins all finite
$\MC_\Gamma(\bbY)$-plays that start at $(A_0,(\Delta_0,a_0))$ and
are played according to $\tilde f$. Now consider an infinite
$\MC_\Gamma(\bbY)$-play starting at $(A_0,(\Delta_0,a_0))$ and
played according to $\tilde f$, and let $\pi$ be the infinite
$\game_\Gamma$-play and $\tau$ be the associated trace through $\pi$
provided by Lemma~\ref{lem:mcstrategy}. It follows from the
statement of the lemma that $\tau$ is contractable to the sequence
of formulas appearing in the given $\MC_\Gamma(\bbY)$-play. Since
the strategy $f$ was winning for $\exists$ in $\game_\Gamma$, it
follows that any trace through $\pi$, and therefore also $\tau$,
satisfies the parity condition of $\game_\Gamma$. As a result, the
parity condition of $\MC_\Gamma(\bbY)$ is satisfied by the given
infinite $\MC_\Gamma(\bbY)$-play, which is thus won by $\exists$.
\end{proof}

Theorem~\ref{thm:eloise-wins} now follows from Theorem~\ref{thm:existssatisfiability} and the observation that the sizes of both $Q$ and $\Seq(\Gamma)$ are bounded by an exponential in the size of $\Cl(\Gamma)$ (by Lemma~\ref{lem:parityword} and respectively the definition of $\Seq(\Gamma)$).

Putting everything together, we obtain a complete characterisation of satisfiability in the coalgebraic $\mu$-calculus.
\begin{thm} \label{thm:summary}
Suppose that $\Gamma \in \Seq(\Lambda)$ is a clean, guarded sequent and $\Rules$ is one-step tableau complete and
contraction closed. Then 
$\Gamma$ is satisfiable
iff 
no tableau for $\Gamma$ is closed
iff 
$\exists$ has a winning strategy in the tableau game
$\game_\Gamma$.
\end{thm}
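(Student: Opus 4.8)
The plan is to close a cycle of three implications using the main theorems already established, with the only genuinely new ingredient being the history-free determinacy of parity games. Concretely, I would prove: (i) if $\Gamma$ is satisfiable then no tableau for $\Gamma$ is closed; (ii) if no tableau for $\Gamma$ is closed then $\exists$ has a winning strategy in $\game_\Gamma$; and (iii) if $\exists$ has a winning strategy in $\game_\Gamma$ then $\Gamma$ is satisfiable. Chaining these yields that the three statements are pairwise equivalent, which is precisely the assertion of the theorem.

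Implication (i) is exactly Theorem~\ref{thm:noclosedtab}, which uses only one-step tableau completeness of $\Rules$. Implication (iii) is Theorem~\ref{thm:eloise-wins} (itself a consequence of Theorem~\ref{thm:existssatisfiability}), which additionally uses contraction closure of $\Rules$ and in fact produces a model of size $\Ord(2^{p(n)})$ with $n = |\Cl(\Gamma)|$. The remaining implication (ii) is where determinacy enters. First fix a $\Gamma$-parity automaton $\bbA$, which exists by Lemma and Definition~\ref{lem:parityword}, so that $\game_\Gamma$ is well-defined. Now argue by contraposition: if $\exists$ does \emph{not} have a winning strategy in $\game_\Gamma$, then since $\game_\Gamma$ is a parity game and hence history-free determined by Theorem~\ref{fact:paritygames}, the player $\forall$ has a (history-free) winning strategy in $\game_\Gamma$; by Theorem~\ref{thm:abelardwinning}, $\Gamma$ then has a closed $\Tab\Rules$-tableau. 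Contraposing gives (ii).

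I do not anticipate any real obstacle here, since the substantive work has all been carried out in the preceding sections; the only point requiring a little care is bookkeeping of hypotheses, namely that implications (i) and (ii) go through for any one-step tableau sound and complete $\Rules$, while contraction closure is needed only for implication (iii) — which is exactly why the theorem statement assumes it. If anything, the mildly delicate step is making the invocation of Theorem~\ref{fact:paritygames} clean, i.e.\ noting that $\game_\Gamma$ as defined in Definition~\ref{defn:tableau-game} genuinely is a parity game with finite-range priority function (so that determinacy applies), and that "having a winning strategy in $\game_\Gamma$" is understood at the designated initial position $(\Gamma, a_\Gamma)$ for both players consistently.
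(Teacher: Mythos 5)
Your proposal is correct and matches the paper's argument exactly: the paper obtains Theorem~\ref{thm:summary} by chaining Theorem~\ref{thm:noclosedtab} (satisfiable implies no closed tableau), the contrapositive of Theorem~\ref{thm:abelardwinning} via history-free determinacy of parity games (no closed tableau implies $\exists$ wins), and Theorem~\ref{thm:eloise-wins} ($\exists$ wins implies satisfiable). Your bookkeeping of which hypotheses each implication needs is also consistent with the paper.
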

\noindent
As a by-product, we obtain the following small model property.
\begin{cor}
A satisfiable, clean and guarded formula $A$ is satisfiable in a
model of size $\Ord(2^{p(n)})$ where $n$ is the cardinality of
$\Cl(A)$ and $p$ is a polynomial.
\end{cor}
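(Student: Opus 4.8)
The plan is to obtain the corollary as a direct instance of the two main results just established, with no further construction: the exponential size bound is already built into Theorem~\ref{thm:eloise-wins}, while Theorem~\ref{thm:summary} supplies the winning strategy for $\exists$ needed to trigger it. Concretely, I would identify the formula $A$ with the singleton sequent $\Gamma = \lbrace A \rbrace$. Since $A$ is assumed clean and guarded, $\Gamma$ is a clean, guarded sequent, and $\Cl(\Gamma) = \Cl(A)$, so the two cardinalities coincide. Moreover $A$ is satisfiable exactly when $\Gamma$ is satisfiable, as a state $x$ with $x \in \lsem A \rsem_M$ is precisely a state satisfying every formula of $\Gamma$.

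The one point requiring care is that Theorems~\ref{thm:summary} and~\ref{thm:eloise-wins} are parametric in a set $\Rules$ of tableau rules that must be one-step tableau complete and contraction closed. To apply them to an arbitrary monotone $\Lambda$-structure $T$, I would first fix such a rule set. Its existence is guaranteed by Proposition~\ref{propn:mon-existence}, and the rule set produced there is observed to be closed under contraction in the remark preceding Theorem~\ref{thm:eloise-wins}; hence every monotone $\Lambda$-structure admits an $\Rules$ meeting both hypotheses. Note that we need only the existence of $\Rules$ and not a tractable representation of it, so this step is unconditional.

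With $\Rules$ fixed, the argument is a short chain. Since $\Gamma$ is satisfiable, Theorem~\ref{thm:summary} yields that $\exists$ has a winning strategy in the tableau game $\game_\Gamma$. Feeding this strategy into Theorem~\ref{thm:eloise-wins} produces a model satisfying $\Gamma$ of size $\Ord(2^{p(n)})$, where $n = |\Cl(\Gamma)| = |\Cl(A)|$ and $p$ is a polynomial; this model then satisfies $A$, establishing the claim. I expect no genuine obstacle here: the substantive work — the model construction from $\exists$'s strategy and the exponential bounds on the state set $Q$ of the $\Gamma$-parity automaton (Lemma and Definition~\ref{lem:parityword}) together with $\Seq(\Gamma)$ — has already been carried out inside the proof of Theorem~\ref{thm:eloise-wins}. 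The only thing left to verify is the bookkeeping that the size is expressed in terms of $|\Cl(A)|$, which is immediate since the carrier of the constructed model is a subset of $\atomic(\Gamma) \times Q \subseteq \Seq(\Gamma) \times Q$ and both factors are of size at most exponential in $|\Cl(\Gamma)|$.
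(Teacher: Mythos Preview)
Your proposal is correct and follows essentially the same route as the paper: obtain a winning strategy for $\exists$ from satisfiability and then invoke Theorem~\ref{thm:eloise-wins} for the size bound. The only cosmetic difference is that the paper cites the component results (Theorems~\ref{thm:noclosedtab} and~\ref{thm:abelardwinning} together with determinacy of parity games) directly rather than going through the packaged equivalence of Theorem~\ref{thm:summary}; your added remark on the existence of a contraction-closed, one-step complete rule set via Proposition~\ref{propn:mon-existence} makes explicit a standing assumption that the paper leaves implicit.
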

\begin{proof}
The statement follows immediately from Theorems~\ref{thm:noclosedtab}, \ref{thm:abelardwinning} and \ref{thm:eloise-wins} together with the determinacy of two player parity games.
\end{proof} 
\section{Complexity}

We now show that -- subject to a mild condition on the rule set --
the satisfiability problem for guarded formulas of the coalgebraic $\mu$-calculus is
decidable in exponential time. By Theorem \ref{thm:summary}, the
satisfiability problem is reducible to the existence of
winning strategies in parity games. Given any guarded sequent $\Gamma$, we thus
construct a parity game of exponential size (measured in the size of
$\Gamma$), the parity function of which has polynomial range (again
measured relative to the size of $\Gamma$). 
This will ensure
\textsc{Exptime}-decidability if we can decide legal moves in this
game in exponential time. 
According to Definition \ref{defn:tableau-game}, the game board
consists of the disjoint union of
\begin{enumerate}[$\bullet$]
\item $\Seq(\Gamma) \times Q$ (the positions owned by $\forall$)
where $Q$ is the state set of a $\Gamma$-parity automaton and
$\Seq(\Gamma)$ are the sequents that we can form in the closure of
$\Gamma$, and
\item $\Seq(\Gamma) \times \Bp(\Gamma) \times Q$ where $\Bp(\Gamma)$
are the blueprints of rules with premise in $\Seq(\Gamma)$.
\end{enumerate}\medskip

\noindent We know that the state set $Q$ of the $\Gamma$-parity automaton is
exponential in the size of $\Cl(\Gamma)$ by Lemma
\ref{lem:parityword} and it is easy to see that $\Seq(\Gamma)$ is
exponentially bounded. The crucial step for obtaining an overall
exponential bound is thus the ability to treat rule blueprints.
While this is simple for many logics (where it is easy to see one
only has exponentially many applicable rule/substitution pairs that
are of polynomial size), more care is needed for the rules of the
probabilistic and the graded $\mu$-calculus. The main difficulty
lies in the fact that the conclusions of these rules (Example
\ref{example:rules}) are sets of sequents that may be exponentially
large. On the other hand, the conclusions can be represented by
(small) linear
inequalities, as in fact we did in Example \ref{example:rules} for
presentational purposes, and leads to an obvious solution. Instead
of representing rule blueprints associated with modal rules
directly, we use a coding of modal rules that can be decided
efficiently, to obtain an exponential representation of the game
board. This approach has been used previously in
\cite{Schroder:2008:PBR} to give \textsc{Pspace}-bounds for
coalgebraic logics, and we will refer to \emph{op.cit.} for some of
the technical points.

In order to be able to speak about the complexity of the satisfiability
problem in a meaningful way, we begin by formalising the notion of 
size of formulas and sequents.
To do this, we assume that the underlying
similarity type $\Lambda$ is equipped with a size measure $s:
\Lambda \to \Nat$ and measure the size of a formula $A$ in terms
of the number of subformulas counted with multiplicities, adding $s(\hearts)$ for every
occurrence of a modal operator $\hearts \in \Lambda$ in $A$.  In the
examples, we code numbers in binary, that is, $s(\langle k \rangle)
= s([ k ] ) = \lceil \log_2 k \rceil$ for the graded $\mu$-calculus
and $s(\langle p / q \rangle) = s([ p / q ]) = \lceil
\log_2 p \rceil + \lceil \log_2 q \rceil + 1$ for the probabilistic
$\mu$-calculus, and $s([a_1, \dots, a_k]) = 1$ for coalition logic.
Note that in the latter case, the overall number of agents is fixed,
so there will only be finitely many coalitions which allows us to
assign unit size to every operator.
The definition of size is extended to sequents by 
$\size(\Gamma) = \sum_{A \in \Gamma} \size(A)$ for $\Gamma \in
\Seq(\Lambda)$ and $\size(\lbrace \Gamma_1, \dots, \Gamma_n \rbrace)
= \sum_{i = 1}^n \size(\Gamma_i)$ for sets of sequents.

We continue by discussing the mechanism to encode rule blueprints
that we did describe informally at the beginning of this section. In
order to obtain an exponential bound, we require that blueprints of
modal rules can be encoded by strings of polynomial length. In order
to have a uniform treatment, we make the following definition.

\begin{defi}\label{defn:tractable}
Suppose that $\Gamma \in \Seq(\Lambda)$. A set $\Rules$ of one-step
rules is \emph{exponentially tractable} if there is an alphabet
$\Sigma$ and a polynomial $p$ such that every $\flat = (r, \sigma)$
with $r = \Gamma_0 / \Gamma_1 \dots \Gamma_n$
can be encoded as a string of length $\leq p(\size(\Gamma_0
\sigma))$ and the relations
\[ R_1 = \lbrace (\Delta, (\Gamma_0 / \Gamma_1 \dots \Gamma_n,
\sigma)  \mid \Gamma_0 \sigma \subseteq
\Delta \rbrace \]
and 
\[ R_2 = \lbrace ((\Delta, \flat), \Delta') \mid 
\Delta' \mbox{ is $i$-th conclusion of
$\rho(\Delta, \flat)$} \rbrace \]
are decidable in \textsc{Exptime} (modulo this coding) for all $i
\in \Nat$.
\end{defi}
\noindent
Exponential tractability gives an upper bound on the size of the
board of the tableau game and
the complexity of both the parity function and the relation
determining legal moves.
The proof of this result requires the following auxiliary lemmas
thate establish bounds on the closure of the root sequent, and the
size of the sequents in the closure, respectively.

\begin{lem}
\label{lem:clsize}
Suppose $A \in \FoRm(\Lambda)$. Then $|\Cl(A)| \leq \size(A)$.
\end{lem}

\begin{proof}
By induction on the structure of $A$ where the only non-trivial case
is $A = \eta p. B$ for $\eta \in \lbrace \mu, \nu \rbrace$. To
establish the claim, we show that
$D = \lbrace C[p := \eta p. A] \mid C \in
\Cl(B) \rbrace$ is closed. This implies that $\Cl(A) \subseteq D$
and the claim follows from the induction hypothesis.
\end{proof}

\begin{lem}  \label{lemma:sequent-size}
If $\Gamma \in \Seq(\Lambda)$ and $\Delta \in \Seq(\Gamma)$ then
$\size(\Delta) \leq \size(\Gamma)^3$.
\end{lem}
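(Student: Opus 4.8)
The plan is to reduce the statement to two facts about the closure of $\Gamma$: that $\Cl(\Gamma)$ has at most $\size(\Gamma)$ elements, and that each of its elements has size at most $\size(\Gamma)^2$. Granting these, and using $\Delta \subseteq \Cl(\Gamma)$, we are done:
\[ \size(\Delta) = \sum_{A \in \Delta}\size(A) \le \sum_{A \in \Cl(\Gamma)}\size(A) \le |\Cl(\Gamma)| \cdot \max_{A \in \Cl(\Gamma)}\size(A) \le \size(\Gamma)\cdot\size(\Gamma)^2 = \size(\Gamma)^3 . \]

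For the cardinality bound I would first observe that closedness is a conjunction of conditions each mentioning only a single formula, so the closure is computed formulawise: $\Cl(\Gamma) = \bigcup_{A \in \Gamma}\Cl(\{A\})$. Lemma~\ref{lem:clsize} then gives $|\Cl(\Gamma)| \le \sum_{A \in \Gamma}|\Cl(\{A\})| \le \sum_{A\in\Gamma}\size(A) = \size(\Gamma)$.

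The real work is the bound on the size of an individual closure element, and I would prove the sharper statement that $\size(B) \le \size(C)^2$ for every formula $C$ and every $B \in \Cl(\{C\})$; this suffices, since each $A \in \Cl(\Gamma)$ lies in $\Cl(\{C\})$ for some $C \in \Gamma$ with $\size(C) \le \size(\Gamma)$. I would argue by induction on $C$. The cases of a propositional variable, conjunction, disjunction, and modal operator are routine: there $\Cl(\{C\})$ is just $\{C\}$ together with the closures of the immediate subformulas, all of size at most $\size(C)$, so the induction hypothesis and $\size(C) \le \size(C)^2$ finish it. The case $C = \eta p. D$ with $\eta \in \{\mu,\nu\}$ is the crux: as in the proof of Lemma~\ref{lem:clsize}, every element of $\Cl(\{C\})$ has the form $E[p := C]$ with $E \in \Cl(\{D\})$, hence $\size(E[p := C]) \le \size(E) + m\cdot\size(C)$ where $m$ is the number of free occurrences of $p$ in $E$, and one must combine this with the induction hypothesis $\size(E) \le \size(D)^2$ and with $\size(C) = \size(D) + 1$.

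The obstacle I expect is precisely the control of the multiplicity $m$: fixpoint unfolding can replicate occurrences of a variable, so the naive estimate $m \le \size(E)$ only yields a cubic bound on $\size(E[p := C])$ in terms of $\size(D)$. To close the induction at the quadratic level I would fall back on the standard Fischer--Ladner-style structural description of the closure --- that every element of $\Cl(\{C\})$ arises from a subformula $E_0$ of $C$ by replacing each of its free (bound-in-$C$) variables by the fixpoint subformula of $C$ that binds it --- and use cleanness of $C$ to bound the relevant occurrence counts by $\size(C)$, so that $\size(B) \le \size(E_0) + \size(E_0)\cdot\size(C) \le \size(C)^2$ directly. Establishing and correctly applying this structural characterisation, and thereby sidestepping the occurrence blow-up that iterated unfolding would otherwise cause, is the delicate point of the argument.
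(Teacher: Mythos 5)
Your overall decomposition --- at most $\size(\Gamma)$ closure elements, each of size at most $\size(\Gamma)^2$ --- is exactly the paper's (its one-line proof asserts the same two facts), and your treatment of the cardinality bound via Lemma~\ref{lem:clsize} is fine. The gap is in the second half, and you have put your finger on precisely the point where both your argument and the paper's break down. The structural characterisation you fall back on is not correct as stated: an element of $\Cl(\{C\})$ is \emph{not} obtained from a subformula $E_0$ of $C$ by replacing each bound variable $q$ by the literal fixpoint subformula $\eta q.B_q$ of $C$ that binds it --- for nested fixpoints that formula still contains free occurrences of the variables bound outside $\eta q.B_q$ (e.g.\ for $C=\mu x.\mu y.\phi$ the full unfolding is $\phi[x:=C][y:=(\mu y.\phi)[x:=C]]$, not $\phi[x:=C,\,y:=\mu y.\phi]$). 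The correct Fischer--Ladner description replaces $q$ by the \emph{expansion} of $\eta q.B_q$, i.e.\ the substitution must be applied recursively to the substituted formulas as well, and it is exactly this recursion that multiplies occurrences. Cleanness bounds the number of occurrences of each variable \emph{in $C$}, not in the partially expanded formulas, so the multiplicities you need to control are not bounded by $\size(C)$.

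Worse, the target bound $\size(B)\le\size(C)^2$ for $B\in\Cl(\{C\})$ --- and with it the lemma itself --- fails. Take $C=\mu p_1\ldots\mu p_k.\,\phi$ clean and guarded with each $p_i$ occurring twice in $\phi$, say $\phi=\Box(p_1\wedge\cdots\wedge p_k)\wedge\Box(p_1\wedge\cdots\wedge p_k)$, so $\size(C)=\Theta(k)$. Writing $D_{i+1}$ for the unfolding of the fixpoint formula $D_i$ (with $D_0=C$), each step substitutes a sentence of size $\size(D_i)$ for two variable occurrences in a body of size $\size(D_i)-1$, so $\size(D_{i+1})\approx 3\,\size(D_i)$; hence $D_k\in\Cl(C)$ has size $\Theta(3^k k)$, which exceeds $\size(C)^3=\Theta(k^3)$ already around $k=7$, and the singleton sequent $\{D_k\}\in\Seq(\{C\})$ refutes the stated inequality. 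The paper's own proof hides the same unjustified step in the phrase ``this can happen at most $\size(\Gamma)$-many times''. So no sharper induction will close your gap; the statement can only be rescued by changing the representation --- measuring closure elements as subformulas of $\Gamma$ paired with their binding context, which is also what is really needed for the polynomial coding in Lemma~\ref{lemma:coding} --- rather than by bounding the sizes of the literally expanded formulas.
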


\begin{proof}
The closure of $\Gamma$ has at most $\size(\Gamma)$ many elements,
each of which may be larger than $\size(\Gamma)$ as a result of
substituting $\mu p. A$ for $p$ in $A$ if $\mu p. A \in \Gamma$. The
result follows as this can happen at most $\size(\Gamma)$-many
times.
\end{proof}

We can now formulate, and prove, the annonced encoding of the
tableau game as follows.
\begin{lem} \label{lemma:coding}
Suppose that $\Rules$ is exponentially tractable. Then
every position in the tableau game $\Game_\Gamma  = (B_\exists,
B_\forall, E, \Omega)$ of $\Gamma \in
\Seq(\Lambda)$ can be represented by a string of polynomial length in
$\size(\Gamma)$. Under this coding, the relation $(b, b') \in E$ is decidable in
exponential time.
\end{lem}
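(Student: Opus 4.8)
The plan is to analyze the three kinds of data that make up a position in $\Game_\Gamma$ — a sequent $\Delta \in \Seq(\Gamma)$, possibly a rule blueprint $\flat \in \Bp(\Gamma)$, and an automaton state $a \in Q$ — and show each can be coded by a polynomially-long string, and then to show that the edge relation $E$ can be decided in exponential time under this coding. First I would handle the sequents: by Lemma~\ref{lem:clsize} the closure $\Cl(\Gamma)$ has at most $\size(\Gamma)$ elements, and by Lemma~\ref{lemma:sequent-size} each $\Delta \in \Seq(\Gamma)$ has size bounded by $\size(\Gamma)^3$; hence $\Delta$, written as a list of its formulas, is a string of length polynomial in $\size(\Gamma)$. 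Next the automaton states: by Lemma and Definition~\ref{lem:parityword} the state set $Q$ of the $\Gamma$-parity automaton has cardinality bounded by $2^{p(|\Cl(\Gamma)|)}$, so each state can be named by a bit-string of polynomial length.

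The blueprints are where exponential tractability does the work. A blueprint is either a subformula-shaped object ($A \wedge B$, $A \vee B$, $\eta p.A$, or $(A,\bar A)$), in which case it lies in (a pairing of elements of) $\Cl(\Gamma)$ and is coded by the bounds just mentioned, or it is a pair $(r,\sigma)$ of a one-step rule and a substitution. For the latter, Definition~\ref{defn:tractable} directly supplies an alphabet $\Sigma$ and a polynomial $p$ so that $(r,\sigma)$ is encoded by a string of length $\le p(\size(\Gamma_0\sigma))$; since $(\Delta,\flat)$ is a rule representation only when $\Gamma_0\sigma \subseteq \Delta \in \Seq(\Gamma)$, we have $\size(\Gamma_0\sigma) \le \size(\Delta) \le \size(\Gamma)^3$, so this code is again polynomial in $\size(\Gamma)$. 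Concatenating the codes of the components of a position (with separators from an enlarged alphabet) gives the required polynomial-length encoding of every position.

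For decidability of $E$ in exponential time, I would split on the two clauses of Definition~\ref{defn:tableau-game}. A move $(\Delta,a) \to (\Delta,\flat,a)$ is legal iff $(\Delta,\flat)$ is a rule representation; for propositional/fixpoint/axiom blueprints this is a trivial membership check on $\Delta$, and for $\flat=(r,\sigma)$ it is exactly membership in the relation $R_1$ of Definition~\ref{defn:tractable}, decidable in \textsc{Exptime}. A move $(\Delta,\flat,a) \to (\Delta',a')$ is legal iff $\Delta'$ is the $i$-th conclusion of $\rho(\Delta,\flat)$ for some $i$ and $a' = \delta(a,(\Delta,\flat,i))$; membership of $\Delta'$ among the conclusions is the relation $R_2$, decidable in \textsc{Exptime}, there are at most exponentially many candidate indices $i$ (bounded e.g.\ by $|\Seq(\Gamma)|$, which is exponential), and for each the automaton transition $\delta$ is computable in exponential time since $Q$ and the alphabet $\Sigma_\Gamma$ are of exponential size and $\delta$ arises from the effective Safra-type construction in Lemma~\ref{lem:parityword}. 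Composing these exponentially-many exponential-time checks still gives an exponential-time procedure.

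The main obstacle I expect is bookkeeping rather than conceptual: one must be careful that the implicit index $i$ in the second clause of $E$ — which is part of the trace-tile $(\Delta,\flat,i)$ fed to $\delta$ — is recovered correctly, since a blueprint together with $\Delta$ and the target $\Delta'$ may a priori determine more than one legal $i$ (as the discussion after Definition~\ref{def:trace} of the two conclusions of $A\vee B, A, B / A, B$ makes clear), so the edge $(\Delta,\flat,a) \to (\Delta',a')$ should be read as existentially quantified over such $i$, and one checks all of them. The other point needing care is that ``exponential'' is consistently measured in $\size(\Gamma)$: this is exactly what Lemmas~\ref{lem:clsize} and~\ref{lemma:sequent-size} are for, translating the $|\Cl(\Gamma)|$-bounds of Lemma~\ref{lem:parityword} and Definition~\ref{defn:tractable} into $\size(\Gamma)$-bounds. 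Once these are in place, the result follows by assembling the component bounds.
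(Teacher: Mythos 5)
Your proposal is correct and follows essentially the same route as the paper's own proof: polynomial-length codes for sequents (via Lemmas~\ref{lem:clsize} and~\ref{lemma:sequent-size}), automaton states (via Lemma~\ref{lem:parityword}) and blueprints (via exponential tractability), followed by reducing legality of moves to the relations $R_1$ and $R_2$ and an exponential enumeration of conclusion indices. Your additional care about the existential quantification over the trace-tile index $i$ and the computability of $\delta$ is a welcome tightening of details the paper leaves implicit, but it does not change the argument.
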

\begin{proof}
We know that the state set $A$
of the parity automaton $A$ associated with $\Game_\Gamma$ is
exponential in $\size(\Gamma)$, hence every $a \in A$ can be
represented by a string of polynomial length in $\size(\Gamma)$.

As we are now working with the encoding of the game board
we think of the automaton as operating on encodings of rule blueprints
rather than on the rule blueprints itself. More precisely, we run the automaton not
on trace tiles  $(\Delta,\flat,i)$ but on 
encoded trace tiles $(\mathrm{code}(\Delta),\mathrm{code}(\flat),i)$
where $\mathrm{code}(\Delta)$ is the given encoding of sequents in
$\Seq(\Gamma)$ and $\mathrm{code}(\flat)$ is the encoding of $\flat=(r,\sigma)$
according to Definition~\ref{defn:tractable} 
if $\flat$ encodes a modal rule or $\mathrm{code}(\flat)$ is the principal formula
of the (non-modal) rule represented by $\flat$ otherwise. 

Every element of the set $\Seq(\Gamma)$ can be encoded by a string
of polynomial length in $\size(\Gamma)$ by Lemma
\ref{lemma:sequent-size}. Thus every position $(\Delta, a)$ of
$B_\exists$ can be encoded by a string of polynomial length.

By exponential tractability, every rule blueprint $\flat$ can be
encoded as a string of polynomial length, leading
premise, leading to a polynomial bound on the size of the positions
$(\Delta, \flat, a)$ of $B_\forall$.

To see that $E$ is decidable in exponential time, note that it
follows from exponential tractability that the moves of $\forall$
from $(\Delta, a)$ to $(\Delta, \flat, b)$ are decidable in
\textsc{Exptime} by Definition of tractability. To ensure
\textsc{Exptime} decidablity of a move from $(\Delta, \flat, a)$ to
$(\Delta', a')$ where $\flat$ is a blueprint of a modal rule, note
that the rule represented by $(\Delta, \flat)$ has at most
exponentially many conclusions (measured in the size of $\Delta$),
and as we can check whether $\Delta'$ is the $i$-th conclusion of
$\rho(\Delta, \flat)$ in exponential time, we conclude that $E$ is
decidable in \textsc{Exptime} overall.
\end{proof}

\noindent
We now obtain an 
\textsc{Exptime} upper bound for satisfiability.
\begin{cor} \label{cor:complexity}
Suppose $T$ is a monotone $\Lambda$-structure and $\Rules$ is
exponentially tractable, contraction closed
and  one-step tableau complete for $T$.
Then the problem of deciding whether $\exists$ has a winning
strategy in the tableau game for a clean, guarded
sequent $\Gamma \in \Seq(\Lambda)$ is in \textsc{Exptime}.
As a consequence, the same holds for satisfiability of any
guarded formula $A \in
\FoRm(\Lambda)$.
\end{cor}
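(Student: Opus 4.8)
The plan is to read off the result directly from Theorem~\ref{thm:summary} (satisfiability of $\Gamma$ is equivalent to $\exists$ winning $\game_\Gamma$), the complexity bound for parity games in Theorem~\ref{fact:paritygames}, and the effective encoding of the tableau game in Lemma~\ref{lemma:coding}. Concretely, first I would bound the size of the game board $\game_\Gamma=(B_\exists,B_\forall,E,\Omega')$. By Lemma~\ref{lem:clsize}, $|\Cl(\Gamma)|\le\size(\Gamma)$, so $|\Seq(\Gamma)|\le 2^{\size(\Gamma)}$; by Lemma/Definition~\ref{lem:parityword}, the state set $Q$ of a $\Gamma$-parity automaton has cardinality at most $2^{p_0(\size(\Gamma))}$ for some polynomial $p_0$; and by exponential tractability together with the cardinality restriction $\sharp(\Gamma_0\sigma)=\sharp(\Gamma_0)$ on substitutions, the set $\Bp(\Gamma)$ of rule blueprints is of at most exponential size. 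Since $B_\forall=\Seq(\Gamma)\times Q$ and $B_\exists=\Seq(\Gamma)\times\Bp(\Gamma)\times Q$, the number $n$ of positions and hence the number $m\le n^2$ of edges are bounded by $2^{q(\size(\Gamma))}$ for a polynomial $q$.

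Next I would control the index of the parity function, which is the delicate point. By Lemma/Definition~\ref{lem:parityword} the index of the $\Gamma$-parity automaton, i.e.\ the cardinality $d$ of the range of its parity map $\Omega$, is bounded by a polynomial in $|\Cl(\Gamma)|$, hence (Lemma~\ref{lem:clsize}) by a polynomial in $\size(\Gamma)$. Since $\Omega'(\Delta,a)=\Omega(a)$ on $B_\forall$ and $\Omega'$ is constantly $0$ on $B_\exists$, the range of $\Omega'$ has the same polynomial order. Plugging $n,m\le 2^{q(\size(\Gamma))}$ and $d$ polynomial into the bound $O\!\left(d\cdot m\cdot(n/\lfloor d/2\rfloor)^{\lfloor d/2\rfloor}\right)$ of Theorem~\ref{fact:paritygames} yields $2^{O(q(\size(\Gamma))\cdot d)}=2^{O(\mathrm{poly}(\size(\Gamma)))}$, i.e.\ single-exponential time; had $d$ been exponential, the $\lfloor d/2\rfloor$ in the exponent would have pushed this up to doubly exponential, so keeping $d$ polynomial while the board is only exponential is exactly the balance that makes the argument work.

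It then remains to see that the parity game can actually be written down within this budget. By Lemma~\ref{lemma:coding} (which in turn uses Lemma~\ref{lemma:sequent-size} and exponential tractability), every position of $\game_\Gamma$ is coded by a string of length polynomial in $\size(\Gamma)$, so there are at most exponentially many positions and they can all be enumerated in exponential time; moreover, under this coding, membership in $E$ is decidable in exponential time, and the automaton $\bbA_\Gamma$ needed to present the board is obtained by the (effective) Safra construction in time polynomial in its exponential size. Hence the whole game can be constructed explicitly in exponential time and then handed to the algorithm of Theorem~\ref{fact:paritygames}, which decides in exponential time whether $\exists$ wins from $(\Gamma,a_\Gamma)$.

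Finally, for the ``as a consequence'' clause I would, given a guarded formula $A\in\FoRm(\Lambda)$, first $\alpha$-rename its bound variables to obtain an equivalent clean, guarded formula of polynomially bounded size, regard it as the singleton sequent $\{A\}$, and apply the statement just proved together with Theorem~\ref{thm:summary} to conclude that satisfiability of $A$ lies in \textsc{Exptime}. I expect no genuine obstacle here: the substantive work has already been done in Lemmas~\ref{lem:parityword}, \ref{lemma:sequent-size} and \ref{lemma:coding}, and the only point requiring care is the bookkeeping above that separates the exponential size of $B$ and $E$ from the polynomial index of $\Omega'$, so that Jurdziński's bound lands in \textsc{Exptime} rather than \textsc{2-Exptime}.
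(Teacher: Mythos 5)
Your argument is exactly the paper's: the first assertion follows from Lemma~\ref{lemma:coding} together with the observation that Jurdzi\'nski's bound in Theorem~\ref{fact:paritygames} is exponential only in the index of the parity function, which Lemma~\ref{lem:parityword} keeps polynomial in $\size(\Gamma)$ while the board is merely exponential, and the second assertion follows from Theorem~\ref{thm:summary}. You spell out the bookkeeping (and the harmless $\alpha$-renaming to obtain a clean sequent) in more detail than the paper does, but there is no substantive difference.
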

\begin{proof} The first assertion follows from Lemma
\ref{lemma:coding}
as the problem of deciding the winner in a parity game 
is exponential only in the size of the parity function of the game 
(Theorem~\ref{fact:paritygames}) which is polynomial 
in the size of $\Gamma$ (Lemma~\ref{lem:parityword}).
The second statement now follows with the help of Theorem \ref{thm:summary}.
\end{proof}
\begin{exa}
It is easy to see that the rule sets for the modal $\mu$-calculus,
the coalitional $\mu$-calculus and the monotone $\mu$-calculus are
exponentially tractable, as the number of conclusions of each
one-step rule is bounded. To establish exponential tractability for
the rule sets for the graded and probabilistic $\mu$-calculus, we
argue as in \cite{Schroder:2008:PBR} where tractability of the
(dual) proof rules has been established. We encode a rule with
premise
$\sum_{i=1}^n r_i a_i <  k$ as ($r_1, a_1, \dots, r_n, a_n, k)$
and Lemma 6.16 of \emph{op.~cit.} provides 
a polynomial bound on the size of the solutions for the
linear inequalities that combine conclusion and side condition of
both the $(G)$ and $(P)$-rule. Exponential tractability follows,
once we agree on a fixed order on the set of prime implicants.
In all cases, contraction closure is immediate.
\end{exa}

\section{Conclusions}

\noindent
In this paper, we have introduced the coalgebraic $\mu$-calculus
that provides  a generic and
uniform framework for modal fixpoint logics. The calculus takes
three parameters:
\begin{enumerate}[$\bullet$]
\item an endofunctor $T: \Set \to \Set$ that defines the class of
$T$-coalgebras over which the calculus is interpreted
\item a collection $\Lambda$ of modal operators that defines the
syntax of the calculus, and
\item the interpretation of the modal operators over $T$-coalgebras,
which is given by predicate liftings for $T$.
\end{enumerate}
In this general setting, our main results are
soundness and completeness of of the calculus and \textsc{Exptime}
decidability of the satisfiability problem for guarded formulas. Technically,
completeness was achieved by tracking the evolution of fixpoint
formulas in a tableau, and for a closed tableau we require that an
outermost least fixpoint is unfolded along every infinite branch. To
detect these infinite unfoldings of least fixpoints, we use a parity
automaton that we run in parallel with the tableau, so that the
existence of closed tableaux can be characterised by winning
strategies in a parity game that is played on pairs consisting of a
sequent and an automaton state.
Our treatment borrows from by \cite{Niwinski:1996:GMC} and \cite{Venema:2006:AFP}, but
there are some important differences. In contrast to
\cite{Niwinski:1996:GMC}, we use parity games that directly
correspond to tableaux, together with  parity automata to detect bad
traces. Moreover, our model construction super-imposes a coalgebra
structure on the relation induced by a winning strategy for
$\exists$. This model construction is substantially more involved
than that given in \cite{Schroder:2008:PBR}, since we cannot argue
in terms of modal rank in the presence of fixpoints. Compared with
\cite{Venema:2006:AFP} (where no complexity results are presented),
we use standard syntax for modal operators, which allows us to
subsume for instance the graded $\mu$-calculus that cannot be
expressed in terms of the $\nabla$-operator used in \emph{op.~cit.}.
By instantiating the generic approach to specific logics, that is,
by providing instances of the endofunctor $T$, the set $\Lambda$ of
modal operators and the one-step rules $\Rules$, we 
\begin{enumerate}[$\bullet$]
\item reproduce the complexity bound for the modal $\mu$-calculus
\cite{Emerson:1999:CTA}, together with the completeness of a slight
variant of the tableau calculus presented in
\cite{Niwinski:1996:GMC},
\item lead to  a new proof of the known \textsc{Exptime} bound for the graded
$\mu$-calculus \cite{Kupferman:2002:CGM},
\item establish previously unknown \textsc{Exptime} bounds for the 
probabilistic $\mu$-calculus, for coalition logic with fixpoints and
for the monotone $\mu$-calculus.
\end{enumerate}
We note that these bounds are tight for all logics except possibly
the monotone $\mu$-calculus, as the modal $\mu$-calculus can be
encoded into all other logics. 
Given that the coalgebraic framework
is inherently compositional
\cite{Cirstea:2008:CMCS,Cirstea:2004:CAD,Cirstea:2007:MPS,Schroder:2007:CAH},
our results also apply to (coalgebraic) logics that arise by
combining various features, such as strategic games and
quantitative uncertainty.

As mentioned before we would like to stress that we established the EXPTIME bound 
only for the guarded formulas of the above listed logics. Under the frequently used assumption that one can transform an arbitrary formula into an equivalent guarded one in polynomial or even linear time, we could extend our results to the full logics. In particular, note that in \cite{Kupferman:2002:CGM} precisely this assumption has been used for the graded $\mu$-calculus.
For the modal $\mu$-calculus a tableau-based EXPTIME-procedure that works for arbitrary formulas as  input has been presented recently in~\cite{frla:them11}. After careful inspection of our calculus we conjecture that our tableau calculus is also sound and complete for arbitrary formulas and formula sequents. We have to leave the details of the substantially more complicated completeness proof for this general case as future work.

\bibliographystyle{abbrv}
\bibliography{csl2009}

\end{document}